\providecommand{\U}[1]{\protect\rule{.1in}{.1in}}
\newtheorem{theorem}{Theorem}[section]
\newtheorem{definition}[theorem]{Definition}
\newtheorem{lemma}[theorem]{Lemma}
\newtheorem{proposition}[theorem]{Proposition}
\newtheorem{solution}[theorem]{Solution}
\newenvironment{proof}[1][Proof]{\noindent\textbf{#1:} }{\ \rule{0.5em}{0.5em}}
\newcommand{\ket}[1]{| #1 \rangle}
\DeclareMathOperator{\Tr}{Tr}
\begin{document}

\title{Randomized Graph States and their Entanglement Properties}
\author{Jun-Yi Wu$^{1}$, Matteo Rossi$^{2}$, Hermann Kampermann$^{1}$,
Simone Severini$^{3}$, Leong Chuan Kwek$^{4}$, Chiara Macchiavello$^{2}$,
Dagmar Bru\ss $^{1}$}

\affiliation{$^{1}$Institut f\"ur Theoretische Physik III,
Heinrich-Heine-Universit\"at D\"usseldorf,
D-40225 D\"usseldorf, Germany}

\affiliation{$^{2}$Dipartimento di Fisica and
INFN-Sezione di Pavia, via Bassi 6, 27100 Pavia, Italy}

\affiliation{$^{3}$Department of Computer Science, University College London,
Gower Street, London WC1E 6BT, United Kingdom}

\affiliation{$^{4}$Centre for Quantum Technologies, National University of
Singapore, 3 Science Drive 2,
Singapore 117543, Singapore}

\keywords{randomized graph states, graph states overlap}
%\date[Created Date: ]{\today }

\begin{abstract}
We introduce a class of mixed multiqubit states, that corresponds to a randomized version of graph states. Such states arise when a graph state is prepared with noisy or imperfect controlled-$Z$ gates. We study the entanglement features of these states by investigating both bipartite and genuine multipartite entanglement. Bipartite entanglement is studied via the concepts of connectedness and persistency, which are related to measurement based quantum computation. The presence of multipartite entanglement is instead revealed by the use of witness operators which are subsequently adapted to study nonlocal properties through the violation of suitable Bell inequalities.
We also present results on the entanglement detection of particular randomized graph states, by deriving explicit thresholds for entanglement and nonlocality in terms of the noise parameter that characterizes the controlled-$Z$ gates exploited for their generation. Finally, we propose a method to further improve the detection of genuine multipartite entanglement in this class of states.

\end{abstract}

\maketitle

%\tableofcontents

%%%%%%%%%%%%%%%%%%%%%%%%%%%%%%%%%%%%%%%%%%%%%%%%%%%%%%%%%%%%%%%%%%%%%%%%%%%%%%%%%%%%%%%%%%%%%%%%%%%%%%%%%%%%%%
%%   Introduction
%

\section{Introduction}

\label{sec::Introduction} % About graph states

Graph states and especially cluster states are at the heart of measurement
based quantum computation (MBQC) \cite{RaussBrie2001-5}. Given a cluster
state, this prominent model of quantum computation provides a way to perform
universal computing with only local gates and measurements, by avoiding the
use of two-qubit entangling gates. Under this light, the entanglement
content of cluster states can then be regarded as a quantum resource that is
consumed throughout the process. However, despite the fact that all the
operations involved in MBQC can nowadays be easily implemented in various
hardware, the hardest task from an experimental point of view is represented
by the preparation of the initial cluster state.

% About graph states preparation
The preparation of general graph states always starts from a product state
of qubits corresponding to the vertices of a graph with no edges, which is
then subsequently processed via an Ising-like interaction \cite%
{BriegelRaussendorf2001-01}. This interaction is tuned in such a way that
its action can be regarded as a series of controlled-$Z$ (CZ) gates,
connecting the vertices according to the target graph. In Ref. \cite%
{Cabello2011-04} a preparation method involving only one- and two-qubit
gates for graph states up to $12$ qubits is proposed. As a matter of fact,
the current experimental realization of a CZ gate is far from being perfect,
and in practice it is very difficult to create a noiseless graph state \cite%
{Cabello2011-04}.

% About imperfect graph states preparation
A possible way to model a noisy CZ gate is to assume that, with probability $%
p$ it creates the desired edge between its qubits, while with probability $%
1-p$ it fails. For heralded entanglement \cite{lim2005repeat}, if the gate fails, one could recover the original state, i.e. $|++\rangle$. This has the same effect as an identity operator. A physical realization of this
probabilistic CZ gate was suggested in \cite{lim2005repeat,lim2006repeat,beige2007repeat}.

%%%%%%%% statements
% statement 1, important per se, ent:
In this paper, we aim at studying the \emph{randomized graph state} (for short, RG state), that is, states that arise whenever a probabilistic CZ gate is applied for every edge in a graph.
Given a graph state, its randomized version is thus a mixture of all the states corresponding to
its subgraphs. These are weighted according to a single parameter $p$, which we
call \emph{randomness parameter}, physically related to the success
probability of the CZ gate.

\par

Besides addressing the issue of the
unitary equivalence of general RG states, we will mainly focus
on the amount of entanglement in RG states, both in the bipartite and the
multipartite case \cite{AcinBLSanpera2001-07}. Regarding the former, we will
especially discuss the concepts of persistency and connectedness, which have
a clear application in terms of the usefulness of RG states for MBQC \cite%
{BriegelRaussendorf2001-01}. For the quantification of the latter, we will
use a genuine multipartite entanglement witness \cite%
{Bourennane2004-02,GuhneHBELMSanpera2002-12,TothGuhne2005-02}. We will be
able in this way to define a critical value $p_{c}$ for the randomness
parameter, above which the state shows genuine multipartite entanglement
properties. Finally, nonlocal realistic features of RG states will be
discussed with the help of suitable Bell inequalities developed for graph
states.

% statement 2, imperfection and other uses possible:
Notice that, not only are RG states interesting and highly non trivial
\textit{per se}, but they are a useful tool to investigate and understand
the presence of noise in MBQC. Furthermore, complete RG states are a
plausible quantum counterpart to the classical Erd\H{o}s-R{\'{e}}nyi random
graphs introduced in \cite{ErdosRenyi1959} (Ref. \cite{janson2000_randomgraphs} is
a detailed survey on the topic), and recently studied in the context of complex systems \cite{Newman2003,
BoccalettiLMCHwange2006}.

%%%%%%%% stucture
The present paper is organized as follows. In Sec. \ref{sec::Preliminary} we review some basic definitions about mathematical graphs, random graphs and quantum graph states. We define randomized graph states in Sec. \ref{sec::randomized_graph_state}. We then study the rank of RG states to answer the question of unitary equivalence and bipartite and multipartite entanglement in Secs. \ref{sec::unit-eq}, \ref{sec::bipartite_entanglement}, and \ref{sec::GME}, respectively. In Sec. \ref{sec::GME}, an approximation to a witness for multipartite entanglement is introduced, which allows to determine a threshold probability. A further analysis on nonlocal realism is carried out in Sec. \ref{sec:bell}. We conclude in Sec. \ref{sec::conclusion} with a summary of the achieved results and future perspectives.

\section{Preliminaries}

\label{sec::Preliminary}

In this section, we briefly review the definition of graphs as used in the
paper and the mathematical concept of Erd\H{o}s-R{\'{e}}nyi random graphs.
We then remind the reader of the well-known class of quantum graph states
and introduce the notation that will be used throughout the paper.
%The readers that are already used to these topics can just skip to the definition of random graph states in the section \ref{sec::randomized_graph_state}.

%%%%%%%%%%%%%%%%%%%%%%%%%%%%%%%%%%%%%%%%%%%%%%%%%%%%%%%%%%%%%%%%%%%%%%%%%%%%%%%%%%%%%%%%%%%%%%%%%%%%%%%%%%%%%%
%%   Subsection "Graphs"

\subsection{Graphs}
\label{sec::graphs}

A \emph{graph} $G=(V,E)$ is defined as a pair consisting of a
set $V_{G}=\left\{ v_{1},\cdots ,v_{n}\right\} $, whose elements are called
\emph{vertices, }and a set $E_{G}=\left\{ e_{1},\cdots ,e_{l}\right\} $,
whose elements are called \emph{edges }and consist of unordered pairs of
different vertices \cite{Diestel_GraphTheory}. A graph $F$ with $%
V_{F}\subseteq V_{G}$ and $E_{F}\subseteq E_{G}$ is called a \emph{subgraph}
of $G$. If $V_{F}=V_{G}$ then $F$ is said to be a \emph{spanning subgraph}
of $G$; in such a case, we say that $F$ \emph{ spans } $G$. Two vertices are
\emph{neighbors} if they are connected by an edge. The \emph{degree} of a
vertex $v_{i}$, $d_{v_{i}}$, is the number of its neighbors. A graph is
\emph{empty} if it has no edges. The empty graph on $n$ vertices is denoted
by $G_{n}^{\emptyset }$. On the other hand, the \emph{complete} (or fully
connected) graph on $n$ vertices, $K_{n}$, contains all possible $\binom{n%
}{2}$ edges. Other relevant types of graphs that will be considered along
the paper are the following ones:

\begin{itemize}
\item[-] \emph{Star graphs}, $S_{n}$: graphs where one vertex has degree $n-1$ and all others have degree $1$.

\item[-] \emph{Cluster graphs}, $L_{m\times n}$: graphs whose vertices correspond to the points of a discrete two-dimensional lattice with $m$ times $n$. When $m=1$, we simply write $L_{n}$. This is a \emph{linear cluster}, or, equivalently, a path on $n$ vertices. Notice that in the graph-theoretic literature $L_{m\times n}$ is usually called a grid graph or a lattice graph. We use a different terminology given the link with MBQC.

\item[-] \emph{Cycle graphs}, $C_{n}$: graphs where all vertices have degree $2$. These are closed linear clusters.
\end{itemize}

A very useful concept in the remainder of the paper is the symmetric difference. Letting $F$ and $G$ be two graphs on the same set of vertices $V$, their \emph{symmetric difference} is the graph $F\Delta G$, such that $V_{F\Delta G}=V_{F}=V_{G}$ and $E_{F\Delta G}=E_{F}\cup E_{G} \setminus E_{F}\cap E_{G} $.

%%   End of subsection "Graphs"
%%%%%%%%%%%%%%%%%%%%%%%%%%%%%%%%%%%%%%%%%%%%%%%%%%%%%%%%%%%%%%%%%%%%%%%%%%%%%%%%%%%%%%%%%%%%%%%%%%%%%%%%%%%%%%

%%%%%%%%%%%%%%%%%%%%%%%%%%%%%%%%%%%%%%%%%%%%%%%%%%%%%%%%%%%%%%%%%%%%%%%%%%%%%%%%%%%%%%%%%%%%%%%%%%%%%%%%%%%%%%%
    %%   Subsection "Random Graphs"
    %

\subsection{Erd\H{o}s-R{\'e}nyi random graphs}

Random graphs are a well-developed mathematical subject touching both graph theory and probability theory {\textbf{\ }}\cite{janson2000_randomgraphs}. In the Erd\H{o}s-R{\'{e}}nyi (ER) random graph on $n$ vertices, each edge is
included with probability $p$ independently of any other edge. Notice that, as $p$ is uniform for all edges, then the probability of a subgraph $G\subseteq K_{n}$ with a number of edges $|E_{G}|$ is given by $P(G)=p^{|E_{G}|}(1-p)^{\binom{n}{2}-|E_{G}|}$. As an illustration, Fig. \ref{fig::random_subgraphs} shows all possible subgraphs of the complete graph $K_{3}$.

\begin{figure}[t!]
\subfloat[The empty $G^{\emptyset}_3$ and the complete $K_{3}$ subgraphs with probability $(1-p)^3$ and $p^3$, respectively.]{
\includegraphics[width=0.8\linewidth]{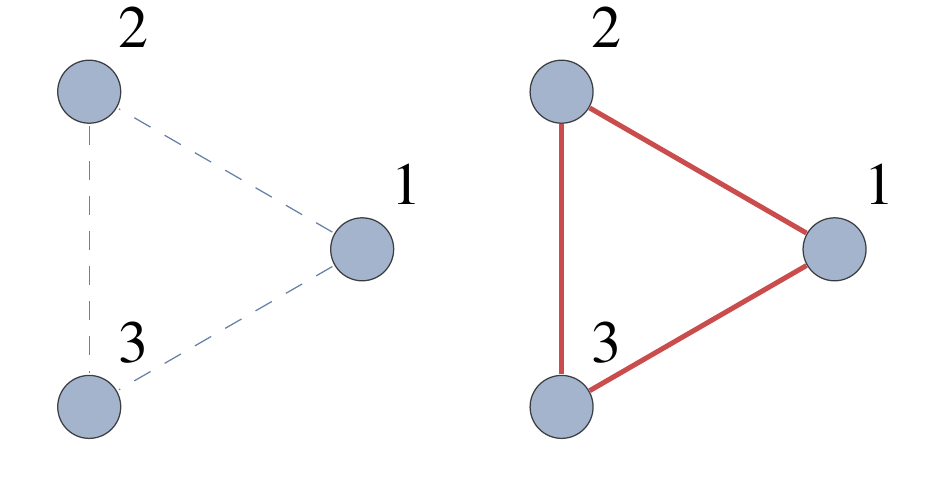}
\label{fig::random_subgraphs_01}
}
\newline
\subfloat[The subgraphs composed of a single edge with probability $p(1-p)^2$.]{
\includegraphics[width=\linewidth]{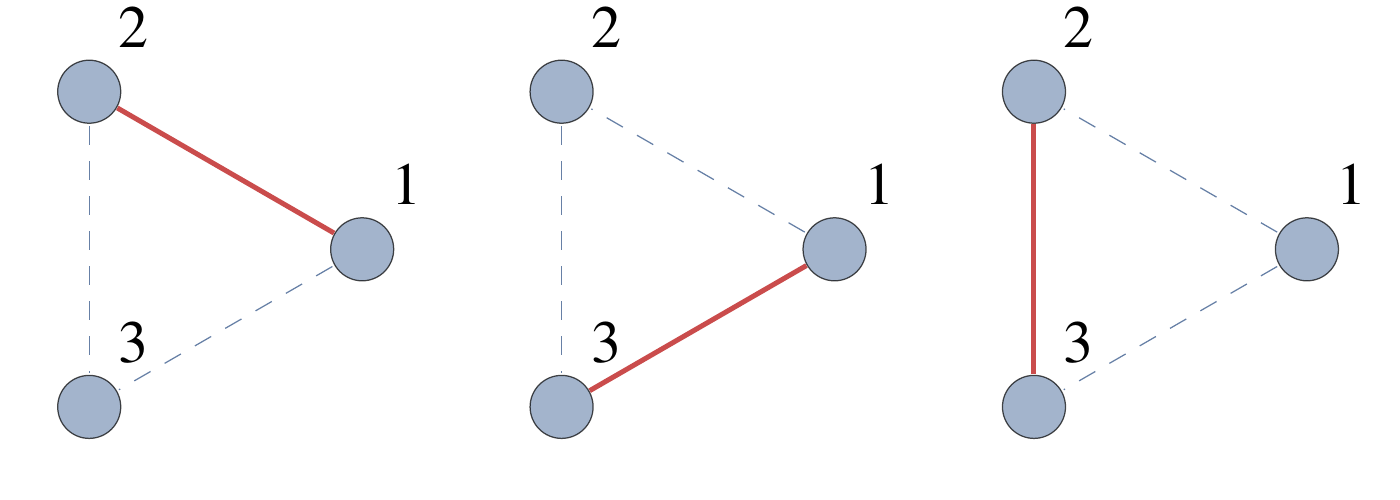}
\label{fig::random_subgraphs_02}
}
\newline
\subfloat[The subgraphs composed of two edges with probability $p^2(1-p)$.]{
\includegraphics[width=\linewidth]{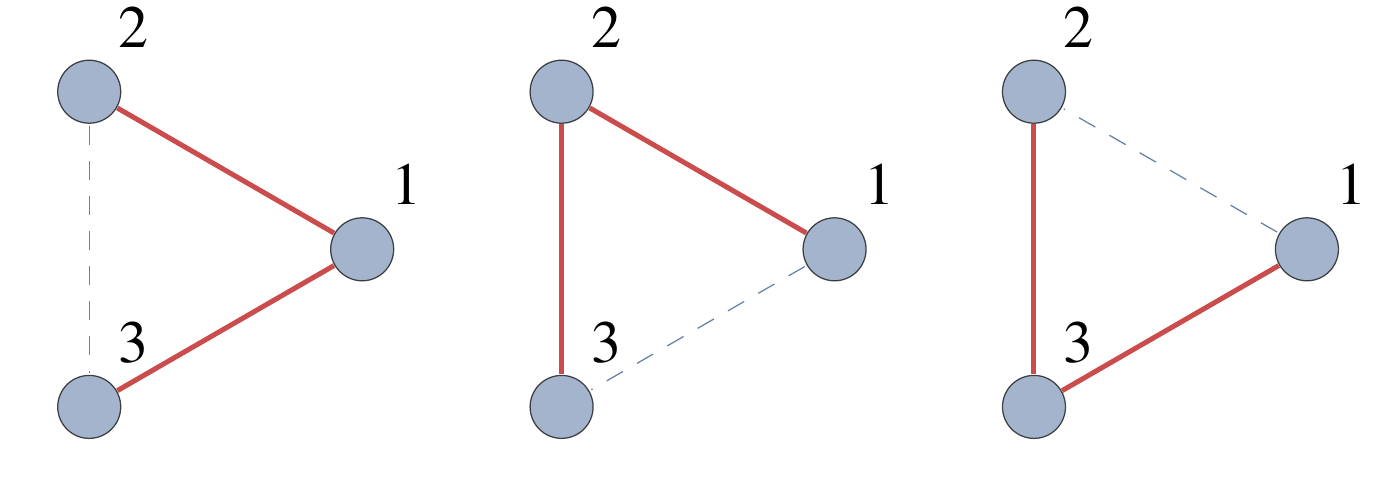}
\label{fig::random_subgraphs_03}
}
\caption{(Color online) All possible subgraphs on three vertices, and the
related probabilities, as instances of the ER random graph.}
\label{fig::random_subgraphs}
\end{figure}
%%   End of subsection "Random Graphs"
%%%%%%%%%%%%%%%%%%%%%%%%%%%%%%%%%%%%%%%%%%%%%%%%%%%%%%%%%%%%%%%%%%%%%%%%%%%%%%%%%%%%%%%%%%%%%%%%%%%%%%%%%%%%%%

%%%%%%%%%%%%%%%%%%%%%%%%%%%%%%%%%%%%%%%%%%%%%%%%%%%%%%%%%%%%%%%%%%%%%%%%%%%%%%%%%%%%%%%%%%%%%%%%%%%%%%%%%%%%%%
%%   Subsection "Graph States"

\subsection{Graph states}

\label{sec::graph_state}

We will briefly review here the well known concept of a graph state of $n$
qubits and its connection to graphs \cite{HeinEisertBriegel2004-06, hein2006entanglement}. Given a graph $G=(V,E)$ on $n$
vertices, the corresponding graph state is denoted by $|G\rangle $ and
defined as follows. First, assign to each vertex a qubit and initialize it
as the state $|+\rangle =\frac{1}{\sqrt{2}}(|0\rangle +|1\rangle )$, so that
the initial $n$-qubit state is given by $|+\rangle ^{\otimes n}$. Then,
perform a CZ operation between any two qubits
associated to vertices that are connected by an edge. This operation is defined
as $\mathrm{CZ}=\mathrm{diag}(1,1,1,-1)$, in the computational basis $\{|0\rangle
,|1\rangle \}$ for each qubit. By performing the CZ operation on any two
connected qubits $i_{1}$ and $i_{2}$, we get the corresponding graph state
\begin{equation}
|G\rangle :=\prod_{\{i_{1},i_{2}\}\in E}(\text{CZ})_{i_{1}i_{2}}|+\rangle
^{\otimes n}.  \label{eq.::def_graphstate}
\end{equation}%
Notice that the number of distinct graph states of $n$ qubits is equal to $%
2^{\binom{n}{2}}$, which is the number of labeled graphs with $n$ vertices.
%%   End of subsection "Graph States"
%%%%%%%%%%%%%%%%%%%%%%%%%%%%%%%%%%%%%%%%%%%%%%%%%%%%%%%%%%%%%%%%%%%%%%%%%%%%%%%%%%%%%%%%%%%%%%%%%%%%%%%%%%%%%%

\section{Randomized Graph States}

\label{sec::randomized_graph_state}

In this section, we will introduce the class of randomized graph (RG) states.
The main idea is to start from a graph $G$ and to apply probabilistic gates $%
\Lambda _{p}$ to the state $|+\rangle ^{\otimes n}$ instead of the perfect
CZ gates. $\Lambda _{p}$ is defined as
\begin{equation}
\Lambda _{p}(|++\rangle \langle ++|)=p|\vcenter{\hbox{%
\includegraphics[width=2em]{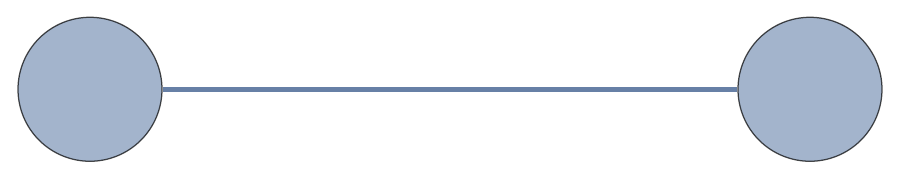}}}\rangle \langle %
\vcenter{\hbox{\includegraphics[width=2em]{pics/bell_2}}}|+(1-p)|++\rangle
\langle ++|,  \label{eq.::probabilistic_gate_random_graph_model}
\end{equation}%
with $|++\rangle $ representing the two-qubit empty graph state, and $|%
\vcenter{\hbox{\includegraphics[width=2em]{pics/bell_2}}}\rangle $ denoting
the two-qubit connected graph state. In other words, we consider a noisy
implementation of the gate CZ, where one realizes the desired CZ gate with
probability $p$, but one fails and does nothing with probability $1-p$ \cite%
{lim2005repeat,lim2006repeat,beige2007repeat}. Notice that all gates $%
\Lambda _{p}$ acting on any pair of qubits commute and therefore we do not
have to specify the order of application.

As an illustration, suppose we want to generate the GHZ state $|%
\vcenter{\hbox{\includegraphics[width=2em]{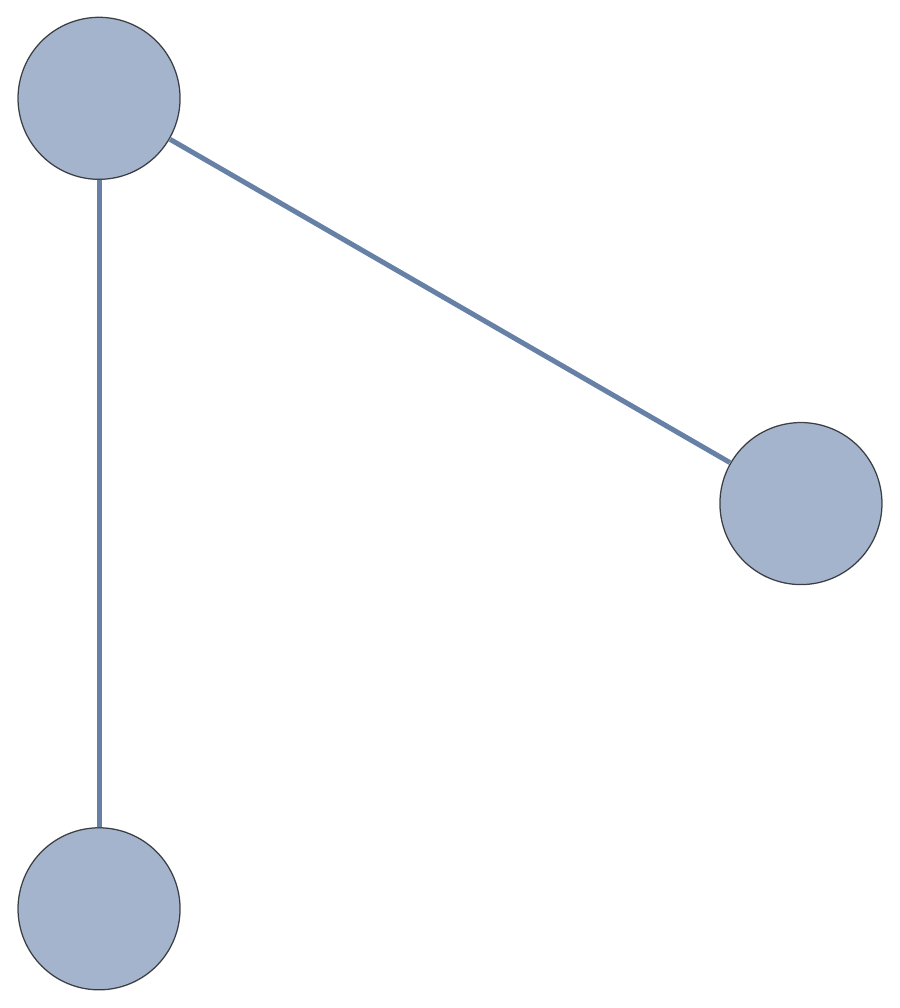}}}\rangle $ by
employing the aforementioned procedure, namely by applying the probabilistic
gates $\Lambda _{p}$ to create edges. It is easy to see that the resulting
state is a mixture of subgraph states of $|\vcenter{\hbox{%
\includegraphics[width=2em]{pics/GHZ_3}}}\rangle $, namely
\begin{align}
R_{p}(|\vcenter{\hbox{\includegraphics[width=2em]{pics/GHZ_3}}}\rangle) &
=\Lambda _{p}^{\{1,2\}}\circ \Lambda _{p}^{\{2,3\}}\left( |+++\rangle
\langle +++|\right)  \nonumber\\
& =p^{2}|\vcenter{\hbox{\includegraphics[width=2em]{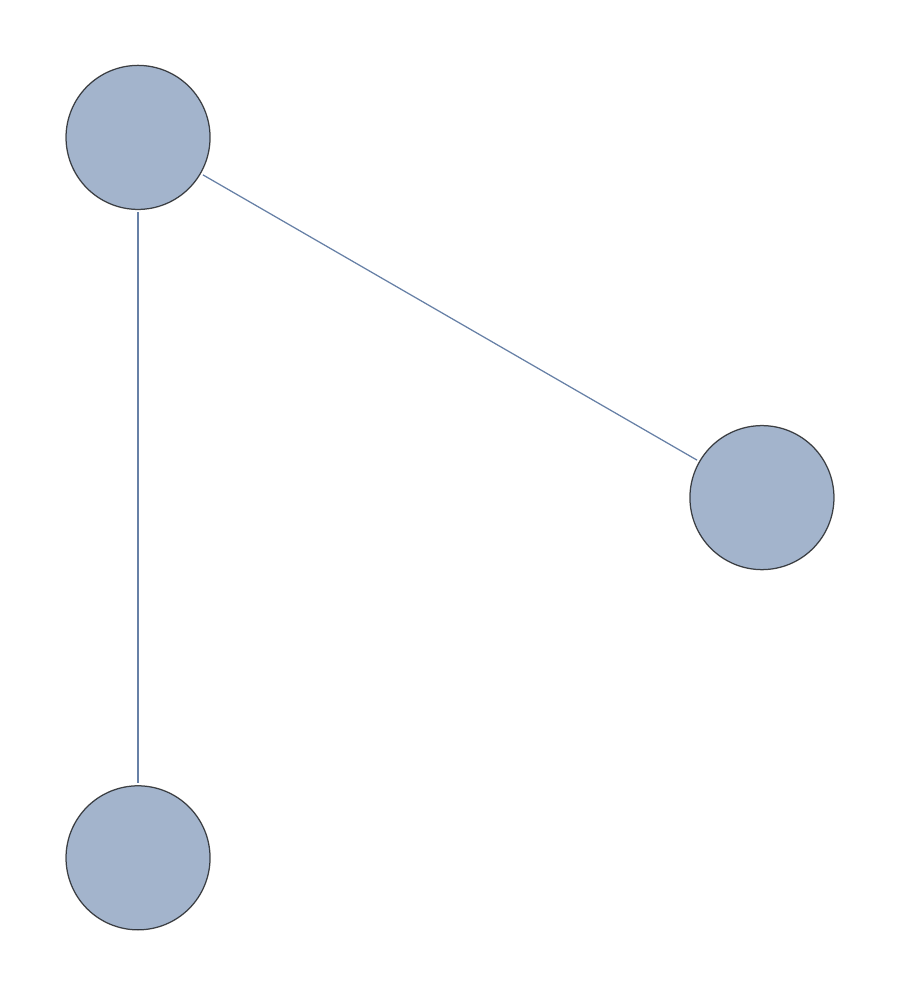}}}%
\rangle \langle \vcenter{\hbox{%
\includegraphics[width=2em]{pics/Star3_subgraph_4_no_label}}}|  \notag \\
& +p(1-p)|\vcenter{\hbox{\includegraphics[width=2em]{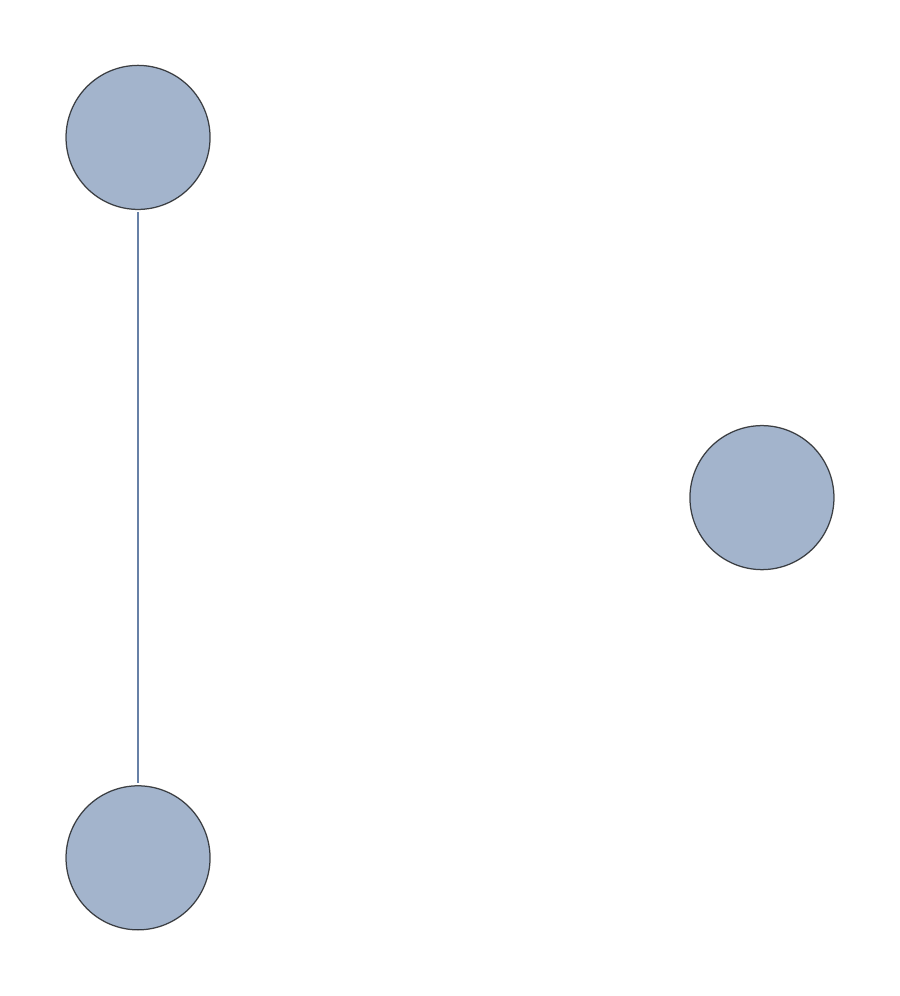}}}%
\rangle \langle \vcenter{\hbox{%
\includegraphics[width=2em]{pics/Star3_subgraph_3_no_label}}}|+p(1-p)|%
\vcenter{\hbox{\includegraphics[width=2em]{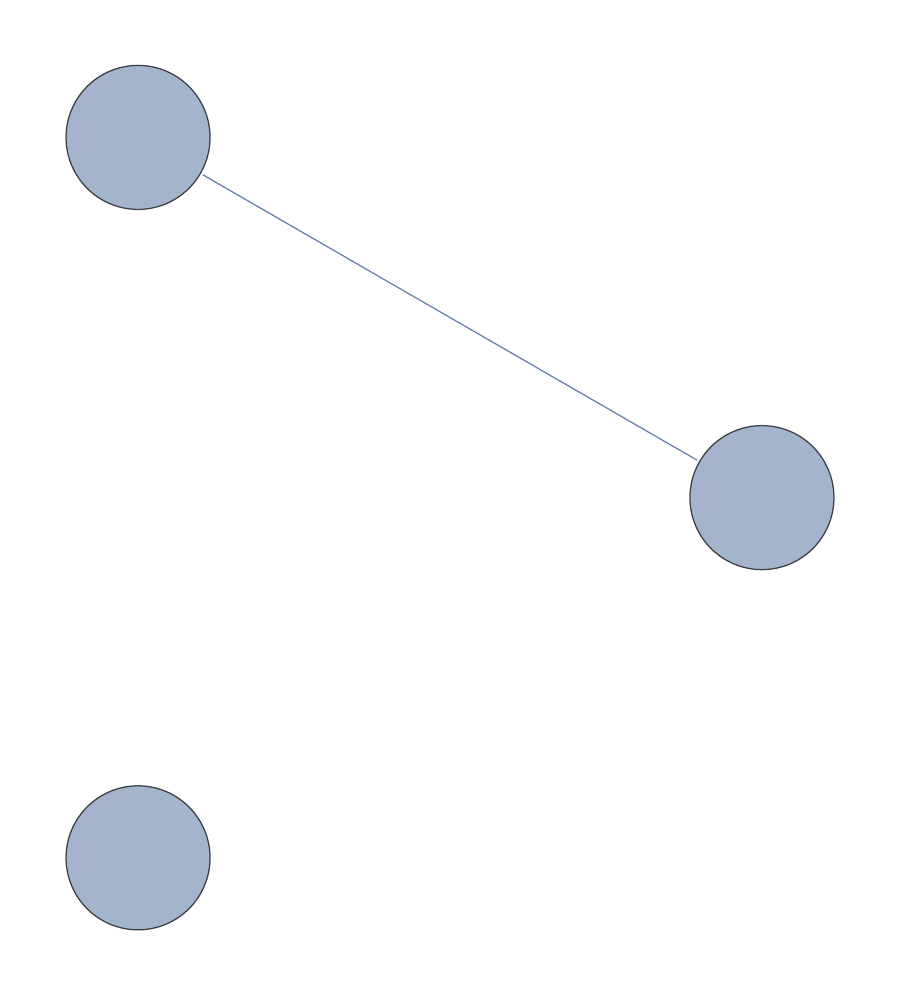}}}\rangle
\langle \vcenter{\hbox{\includegraphics[width=2em]{pics/Star3_subgraph_2_no_label}}}|
\notag \\
& +(1-p)^{2}|\vcenter{\hbox{%
\includegraphics[width=2em]{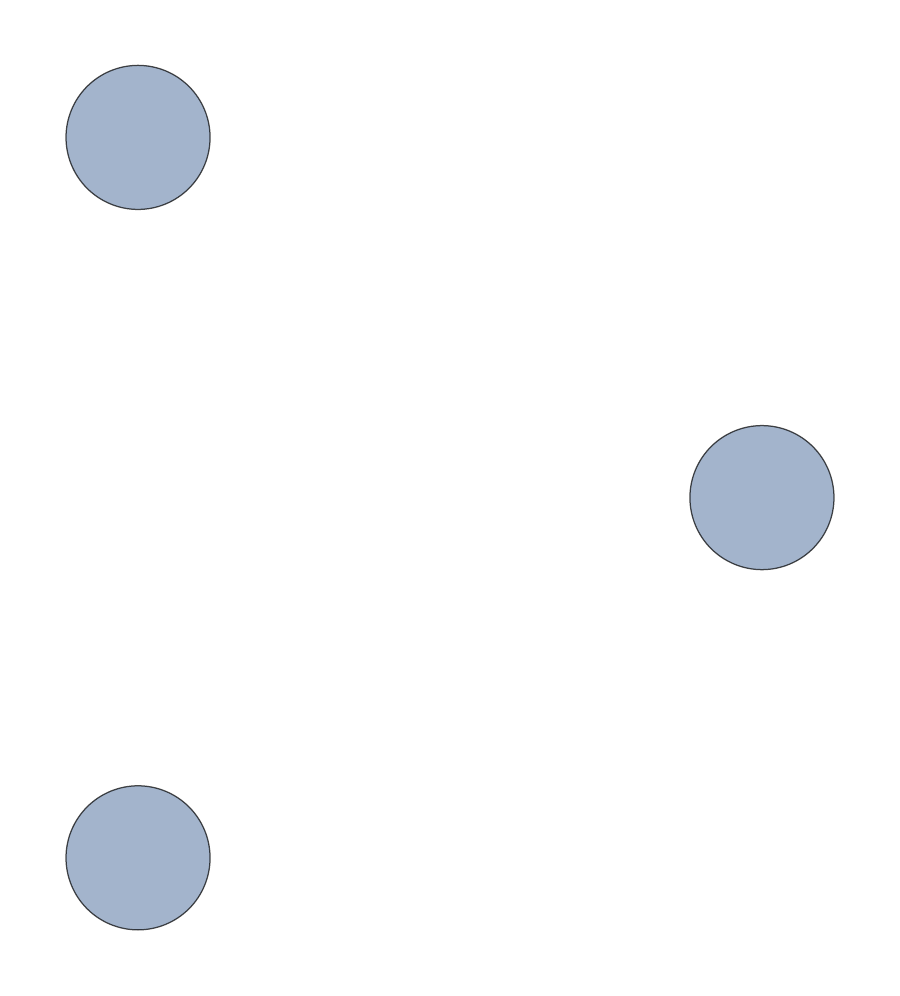}}}\rangle \langle %
\vcenter{\hbox{\includegraphics[width=2em]{pics/Star3_subgraph_1_no_label}}}|.
\label{eq.::RS3_state}
\end{align}%
The above state is then said to be the RG state associated to the graph $%
\vcenter{\hbox{\includegraphics[width=2em]{pics/GHZ_3}}}$. The above example
shows that the RG state $\rho _{G}^{p}$ associated to a graph $G$, or
equivalently to a pure graph state $|G\rangle $, can be derived by applying
the randomization operation $R_{p}$ in agreement with the following
definition.

\begin{definition}[Randomized graph state]
\label{def.::randomization_operator_random_graph_state}Let $|G\rangle $ be a
graph state. A \emph{randomization operator} $R_{p}$ is defined via%
\begin{equation}
R_{p}(|G\rangle ):=\sum_{F \emph{spans} G}p^{|E_{F}|}\left( 1-p\right)
^{\left\vert E_{G}\backslash E_{F}\right\vert }|F\rangle \langle F|,
\label{eq.::def_randomization_operator_random_graph_state}
\end{equation}%
where $F$ are spanning subgraphs of $G$, $E_{F}$ and $E_{G}$ are the sets of
edges of $F$ and $G$, and $p$ is the randomness parameter
corresponding to the success probability of the CZ gate in Eq. %
\eqref{eq.::probabilistic_gate_random_graph_model}. The resulting state $%
\rho _{G}^{p}:=R_{p}(|G\rangle )$ is the randomized version of $|G\rangle $
with randomness parameter $p$, or, shortly, a $p$\emph{-randomization} of $%
|G\rangle $.
\end{definition}

This randomization operator corresponds to the preparation of graph states
showed in the probabilistic gate model of Eq. %
\eqref{eq.::probabilistic_gate_random_graph_model}. It maps a pure graph
state $|G\rangle $ into a mixture of all its spanning subgraph states. Since
the two extreme cases $p=0,1$ correspond to the empty graph and the pure
graph state, respectively, the parameter $p$ plays a fundamental role to
determine the entanglement features of RG states.

In addition, it is useful to remark a difference between mathematical ER
random graphs and RG states:\ in ER random graphs all possible edges among
the vertices are considered; in RG states the randomization is restricted to
the edges of a given graph. In other words, ER random graphs are always
related to the fully connected graph, while RG states can be generated by
the randomization process on any graph. From this viewpoint, we can say that
RG states are more general than random graphs, since only in the case of $%
G=K_{n}$ does the corresponding RG state $\rho _{K_{n}}^{p}$ have the same
combinatorial properties as the ER random graph of $n$ vertices. It
is then evident that our model is in close analogy with bond percolation. Of
course, the questions that we ask are not directly related to the main
question in percolation theory, which is traditionally concerned with the
global behavior of infinite graphs as a function of the randomness parameter
(see \cite{Grimmett1999-percolation}).

In this paper we will denote the $p$ randomization of the important graph
states $|K_{n}\rangle $, $|S_{n}\rangle $, $|L_{n}\rangle $, and $%
|C_{n}\rangle $ by $\rho _{K_{n}}^{p}$, $\rho _{S_{n}}^{p}$, $\rho
_{L_{n}}^{p}$, and $\rho _{C_{n}}^{p}$, respectively.
\par

Notice that a different definition of random graph states is also given in
\cite{CollinsNechitaZyczkowski2010}. In that model, a vertex with degree $d$
is represented by a $d$-qubit system and two vertices $a$ and $b$ are said to
be connected by an edge if one qubit in $a$ is maximally entangled with one
qubit in $b$. A random unitary matrix describes the coupling between
subsystems of a vertex. The random graph states considered in \cite%
{CollinsNechitaZyczkowski2010} are then an ensemble of pure states. In
contrast, in our definition each vertex is a single-qubit system, and a
randomized graph state is always a mixed state for any value of the
randomness parameter $0<p<1$. Notice that other ways to define
mixed quantum states from graphs have been studied in the literature (see
e.g. Ref. \cite{braunstein2006laplacian}).

\section{Rank of randomized graph states and unitary equivalence}
\label{sec::unit-eq}

In this section, we investigate the question of \emph{%
local unitary} (LU) equivalence of RG states. Two $n$-qubit quantum states $%
\rho $ and $\sigma $ are \emph{LU equivalent} if and only if there exist
local unitaries $U^{(1)},...,U^{(n)}$ such that $\rho =U^{(1)}\otimes \cdots
\otimes U^{(n)}\sigma U^{(1)\dagger }\otimes \cdots \otimes U^{(n)\dagger }$%
. LU equivalent states have identical entanglement properties.

The LU equivalence classes of graph states have been intensively studied in
Ref. \cite{HeinEisertBriegel2004-06}. Pure graph states up to six qubits can
be classified in $19$ different LU classes. Graph states in the same class
can be transformed into each other via local unitaries, and hence share the
same entanglement properties. However, in most cases the RG states derived
from two LU equivalent graph states, say $|G_{1}\rangle $ and $|G_{2}\rangle
$, are not LU equivalent and, in general, not even equivalent under \emph{%
global unitaries} (GU).

In order to see this, consider for instance the graph states $|G_{1}\rangle
=|\vcenter{\hbox{\includegraphics[width=2em]{pics/GHZ_3}}}\rangle $ and $%
|G_{2}\rangle =|\vcenter{\hbox{\includegraphics[width=2em]{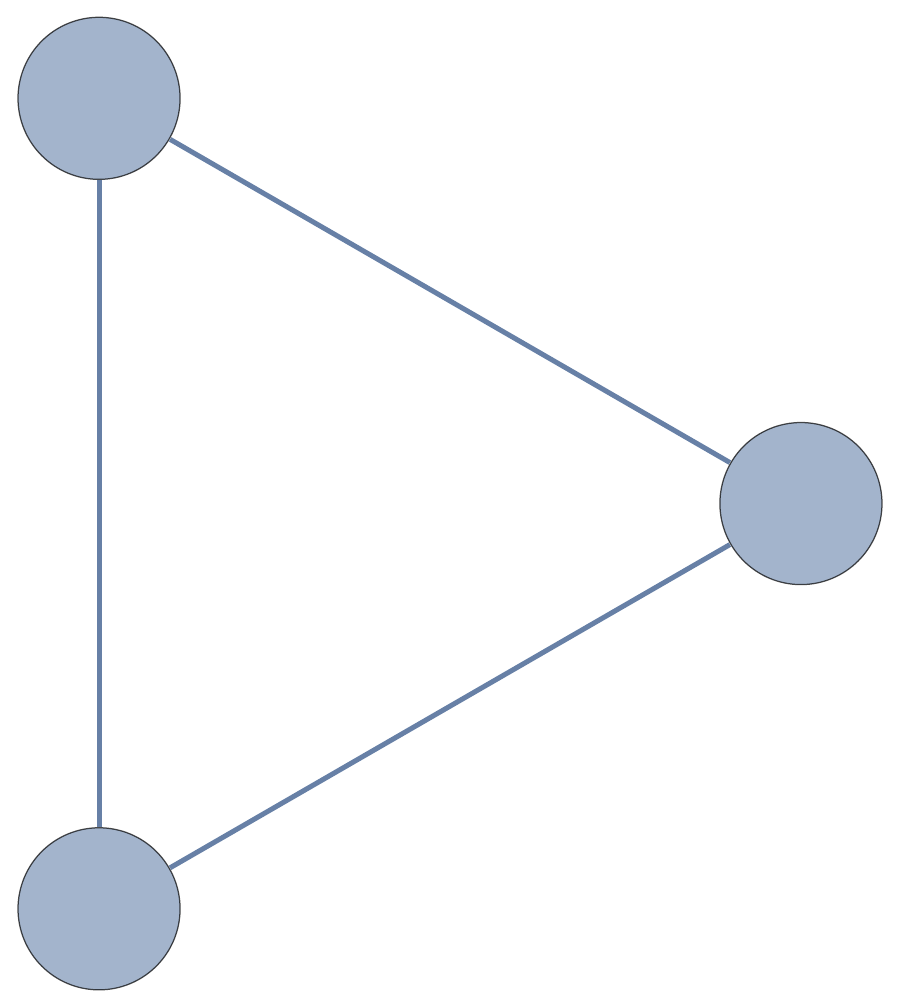}}}%
\rangle $, that are known to be LU equivalent. The corresponding RG states
are given by $\rho _{G_{1}}^{p}=R_{p}(|\vcenter{\hbox{%
\includegraphics[width=2em]{pics/GHZ_3}}}\rangle )$, see Eq. %
\eqref{eq.::RS3_state}, and
\begin{align}
\rho _{G_{2}}^{p}& =p^{3}|\vcenter{\hbox{%
\includegraphics[width=2em]{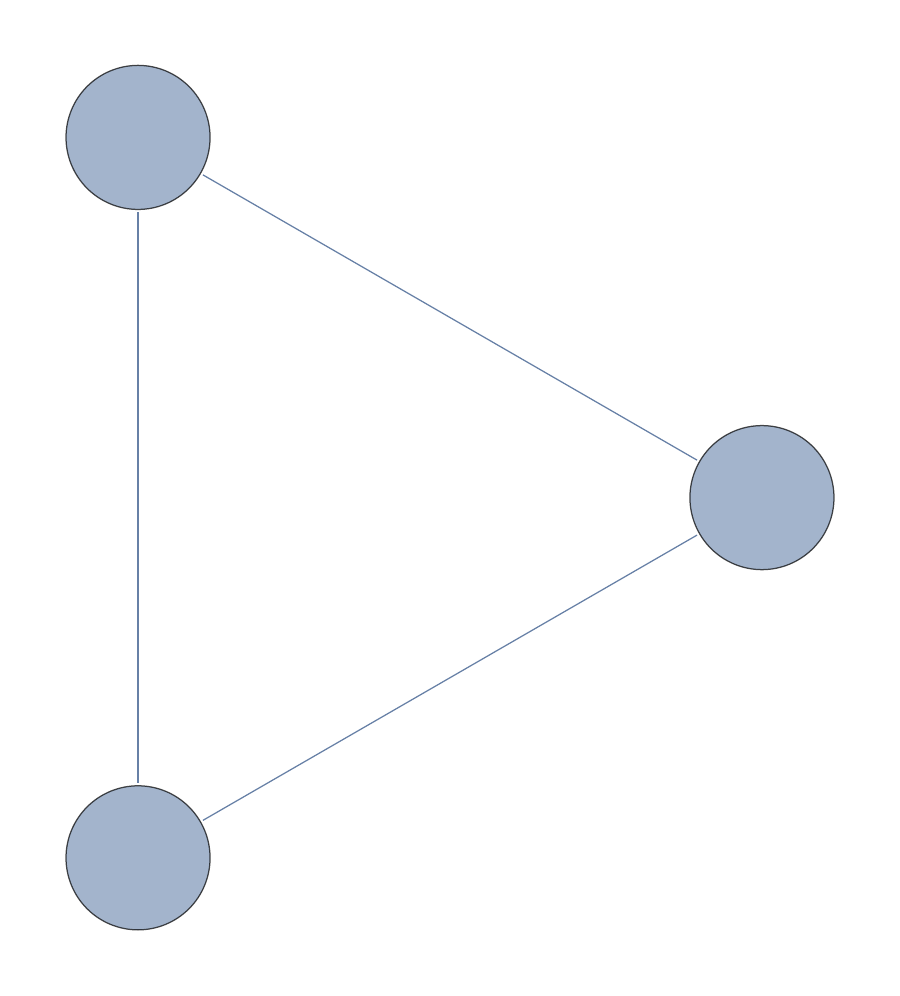}}}\rangle \langle %
\vcenter{\hbox{\includegraphics[width=2em]{pics/K3_subgraph_3_edges_no_label}}}|  \notag
\label{eq.::RC3_state} \\
& +p^{2}\left( 1-p\right) |\vcenter{\hbox{%
\includegraphics[width=2em]{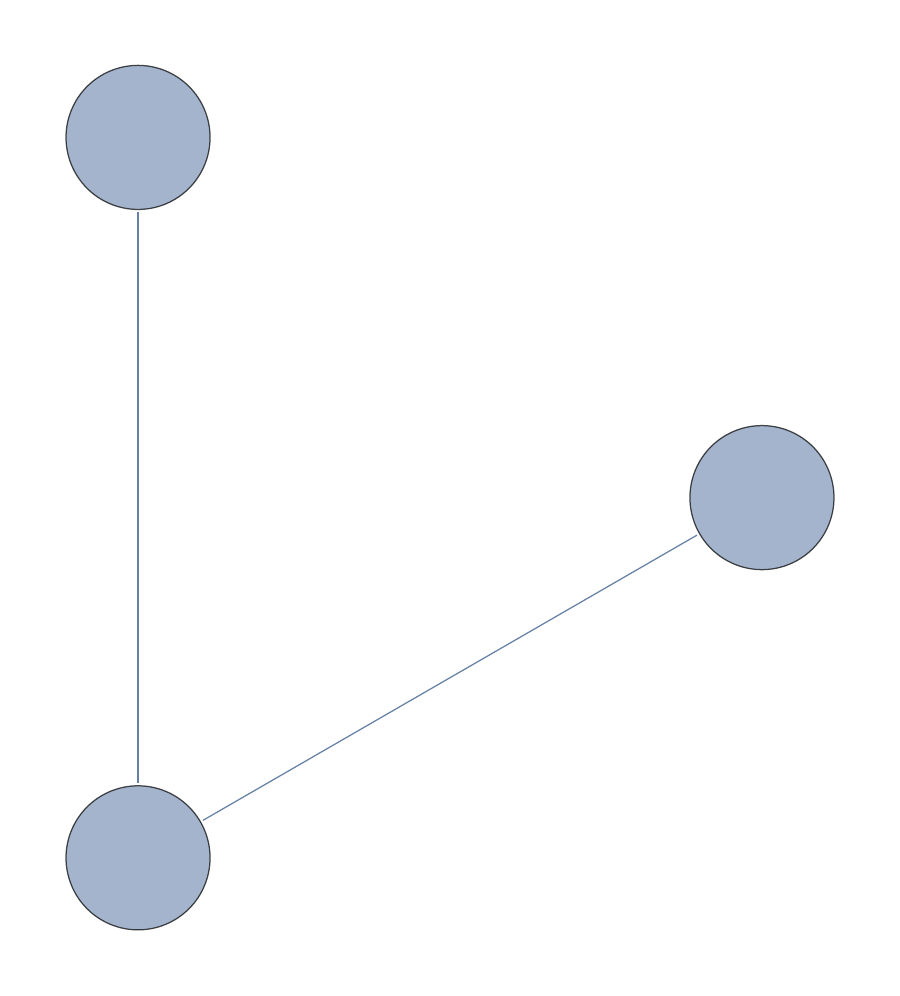}}}\rangle \langle %
\vcenter{\hbox{\includegraphics[width=2em]{pics/K3_subgraph_2_edges_no_label}}}|+\cdots
\notag \\
& +p\left( 1-p\right) ^{2}|\vcenter{\hbox{%
\includegraphics[width=2em]{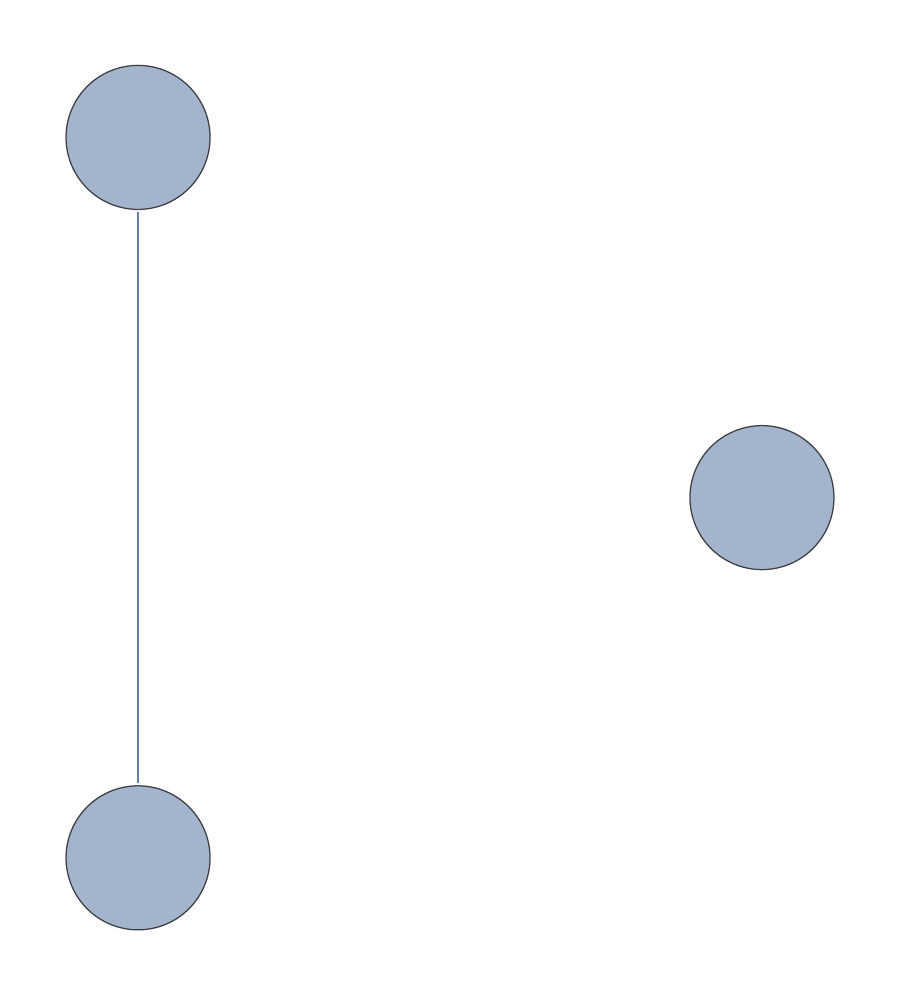}}}\rangle \langle %
\vcenter{\hbox{\includegraphics[width=2em]{pics/K3_subgraph_1_edges_no_label}}}|+\cdots
\notag \\
& +\left( 1-p\right) ^{3}|\vcenter{\hbox{%
\includegraphics[width=2em]{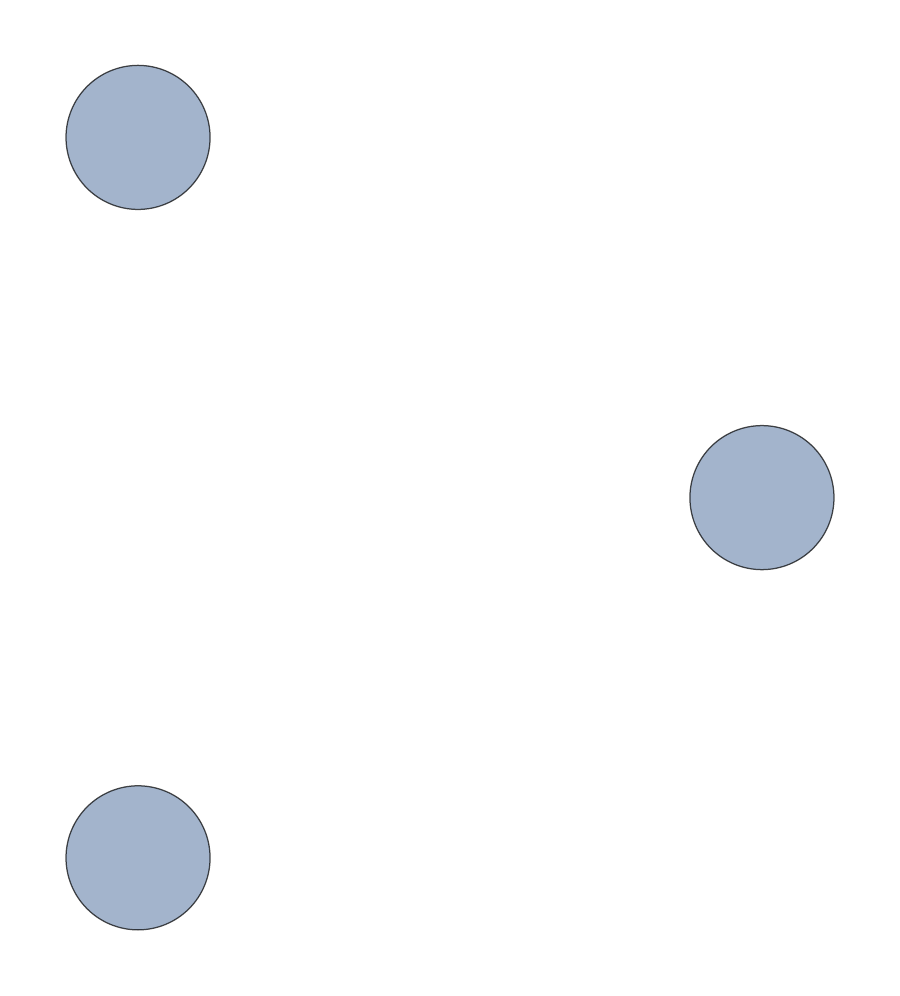}}}\rangle \langle %
\vcenter{\hbox{\includegraphics[width=2em]{pics/K3_subgraph_0_edges_no_label}}}|\;.
\end{align}%
For any value of $p\neq 0,1$ the above two states can be shown by direct calculation to have different ranks, namely rank$(\rho _{G_{1}}^{p}) = 4$, and rank$(\rho _{G_{2}}^{p})=5$.
Therefore, the RG states $\rho_{G_{1}}^{p}$ and $\rho _{G_{2}}^{p}$, defined starting from LU equivalent graph states, cannot even be transformed into each other by a GU operation. In other words, these are not unitary equivalent and, in particular, not LU equivalent. This reasoning can be generalized to an arbitrary number of qubits by introducing the following concepts:

\begin{definition}[$G$-subgraphs state space]
Let $G$ be a graph and $F$ a spanning subgraph of $G$. The space spanned by
the states $|F\rangle $ is called $G$\emph{-subgraphs state space} and is
denoted as%
\begin{equation}
\Sigma _{G}:=\text{ \emph{span}}\left( \{|F\rangle \}_{F\subseteq
G,V_{F}=V_{G}}\right) .
\end{equation}
\end{definition}

This definition prompts to two observations concerned with the complete
graph. The respective proofs are in Appendix \ref%
{sec::proof_of_theorems_in_unitary_equivalence}.

\begin{theorem}[Dimension of $\Sigma _{K_{n}}$]
\label{theorem::dimension_of_complete_subgraphs_state_space}
The $K_{n}$-subgraphs state space $\Sigma _{K_{n}}$ has dimension $2^{n}-n$.
\end{theorem}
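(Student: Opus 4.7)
The plan is to prove the equality by showing both inequalities $\dim\Sigma_{K_n}\le 2^n-n$ and $\dim\Sigma_{K_n}\ge 2^n-n$, using the explicit form of graph states in the computational basis. First I would write, for any spanning subgraph $F$ of $K_n$,
\begin{equation}
|F\rangle = \frac{1}{\sqrt{2^n}}\sum_{\mathbf{x}\in\{0,1\}^n}\prod_{\{i,j\}\in E_F}(1-2x_ix_j)\,|\mathbf{x}\rangle,
\end{equation}
using the identity $(-1)^{x_ix_j}=1-2x_ix_j$ for $x_i,x_j\in\{0,1\}$. Thus $\Sigma_{K_n}$ is isomorphic (via the computational basis) to the linear span of the polynomial functions $P_F(\mathbf{x}):=\prod_{\{i,j\}\in E_F}(1-2x_ix_j)$ on $\{0,1\}^n$, and I will work with these polynomials from here on.

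For the upper bound, I would expand each $P_F$ and use $x_i^2=x_i$ to reduce to a multilinear polynomial in $x_1,\dots,x_n$. Since every non-constant term produced by the expansion arises from choosing at least one factor $(-2x_ix_j)$, every such monomial has degree $\ge 2$; the coefficient of each $x_i$ alone is identically zero. Hence $\Sigma_{K_n}$ is contained in the subspace $W\subset\mathbb{C}^{\{0,1\}^n}$ of multilinear polynomials with no linear term, which has dimension $2^n-n$ (one basis vector $\prod_{i\in S}x_i$ for each $S\subseteq[n]$ with $|S|\ne 1$).

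For the lower bound, I would apply a Möbius-type inversion: setting $a_{ij}=1-2x_ix_j$, for any graph $F$
\begin{equation}
\prod_{\{i,j\}\in E_F}(a_{ij}-1)=\sum_{E'\subseteq E_F}(-1)^{|E_F\setminus E'|}P_{E'}(\mathbf{x}),
\end{equation}
so the left-hand side lies in the span of graph states. The left-hand side equals $(-2)^{|E_F|}\prod_{\{i,j\}\in E_F}x_ix_j=(-2)^{|E_F|}\prod_{i\in V(E_F)}x_i$ after using $x_i^2=x_i$, where $V(E_F)$ is the set of non-isolated vertices of $F$. For each $S\subseteq[n]$ with $|S|\ge 2$ I can choose $F$ to be, e.g., a star on $S$, so that $V(E_F)=S$ and the monomial $\prod_{i\in S}x_i$ lands in $\Sigma_{K_n}$; the case $S=\emptyset$ is covered by $F=G_n^{\emptyset}$, giving the constant $1$. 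This produces all $2^n-n$ basis vectors of $W$, so $\Sigma_{K_n}\supseteq W$, closing the argument.

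The main obstacle is recognizing that the right object is the polynomial algebra over $\{0,1\}^n$ rather than the quadratic forms over $\mathbb{F}_2$: the linear dependencies among graph states are completely hidden in the $\mathbb{F}_2$ picture (quadratic forms without diagonal are always $\binom{n}{2}$-dimensional) and become transparent only after rewriting each $|F\rangle$ as an $\mathbb{R}$-polynomial via $(-1)^{xy}=1-2xy$. Once this translation is done, the upper and lower bounds are essentially a degree count and a standard inclusion-exclusion, with only the small bookkeeping check that every admissible vertex set $S$ with $|S|\ne 1$ is realizable as $V(E_F)$ for some $F$.
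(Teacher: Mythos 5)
Your proof is correct, and it takes a genuinely different route from the paper's. The translation of each subgraph state into the polynomial $P_F(\mathbf{x})=\prod_{\{i,j\}\in E_F}(1-2x_ix_j)$ on $\{0,1\}^n$ is valid, the degree count for the upper bound is right, and the inclusion--exclusion identity does land every monomial $\prod_{i\in S}x_i$ with $|S|\neq 1$ inside $\Sigma_{K_n}$, so both inclusions $\Sigma_{K_n}\subseteq W$ and $W\subseteq\Sigma_{K_n}$ hold. For the upper bound the paper instead exhibits the $n$ states $|0\cdots 0\rangle-|1_i\rangle$ as vectors orthogonal to every subgraph state; since the coefficient of $x_i$ in the multilinear expansion of $P_F$ is $P_F(e_i)-P_F(\mathbf{0})$, this is just the dual formulation of your ``no linear term'' observation, so the content there coincides. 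The lower bound is where you truly diverge: the paper proceeds by induction on the number of vertices, generating the states with at least two excitations by applying CZ gates to an appended qubit and taking linear combinations (and its alternative appendix proof is again inductive, via a vertex-decomposition lemma and star-graph states). Your M\"obius inversion replaces that induction with a closed-form expansion of each basis monomial of the target subspace as an explicit signed sum of star-graph states supported on $S$, which identifies $\Sigma_{K_n}$ concretely as the space of multilinear polynomials with vanishing linear part and yields the change-of-basis coefficients for free. The paper's induction stays closer to the physical CZ-gate picture; your version buys brevity, an explicit basis, and a transparent account of exactly which $n$ dimensions are missing.
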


\begin{theorem}[Rank of randomized graph states]
\label{theorem::rank_of_random_graph_states}The rank of the randomized graph
state $\rho^p_{K_{n}}$ is $2^{n}-n$, for all $0<p<1$.
\end{theorem}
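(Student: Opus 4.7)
The plan is to reduce the statement to Theorem~\ref{theorem::dimension_of_complete_subgraphs_state_space} by identifying the range (support) of $\rho^p_{K_{n}}$ with the subgraphs state space $\Sigma_{K_{n}}$. By Definition~\ref{def.::randomization_operator_random_graph_state},
\[
\rho^p_{K_{n}}=\sum_{F\text{ spans }K_{n}} q_{F}\,|F\rangle\langle F|,\qquad q_{F}:=p^{|E_{F}|}(1-p)^{\binom{n}{2}-|E_{F}|},
\]
and for $0<p<1$ every coefficient satisfies $q_{F}>0$.

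Next I would invoke the standard fact that a positive-semidefinite operator written as a nonnegative linear combination $\rho=\sum_{i}q_{i}|\psi_{i}\rangle\langle\psi_{i}|$ with strictly positive weights has range equal to $\operatorname{span}\{|\psi_{i}\rangle\}$. This follows from the kernel characterization: $\langle v|\rho|v\rangle=\sum_{i}q_{i}|\langle v|\psi_{i}\rangle|^{2}=0$ forces $\langle v|\psi_{i}\rangle=0$ for every $i$, hence $\ker\rho=\operatorname{span}\{|\psi_{i}\rangle\}^{\perp}$ and therefore $\operatorname{range}(\rho)=\operatorname{span}\{|\psi_{i}\rangle\}$. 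Applied to our situation this shows $\operatorname{range}(\rho^p_{K_{n}})=\Sigma_{K_{n}}$.

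Conclude by dimension counting: $\operatorname{rank}(\rho^p_{K_{n}})=\dim\operatorname{range}(\rho^p_{K_{n}})=\dim\Sigma_{K_{n}}=2^{n}-n$, where the last equality is Theorem~\ref{theorem::dimension_of_complete_subgraphs_state_space}.

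The subtle point to flag is that the spanning-subgraph states $|F\rangle$ are neither mutually orthogonal nor linearly independent (there are $2^{\binom{n}{2}}$ of them but they span a space of dimension only $2^{n}-n$), so the rank cannot simply be read off as the number of nonzero weights $q_{F}$; the range-equals-span argument is what makes the reduction rigorous. Apart from this subtlety, the proof is essentially immediate from Theorem~\ref{theorem::dimension_of_complete_subgraphs_state_space}, and all the genuine combinatorial content resides in that earlier result; no additional obstacle appears at this step.
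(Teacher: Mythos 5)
Your proof is correct and follows essentially the same route as the paper: the paper also reduces the statement to Theorem~\ref{theorem::dimension_of_complete_subgraphs_state_space} via a lemma stating that a mixture $\sum_i p_i\ketbra{v_i}{v_i}$ with strictly positive weights has rank equal to $\dim\operatorname{span}\{\ket{v_i}\}$, proved by the same kernel argument ($\bra{v}\rho\ket{v}=0$ forces $\langle v|v_i\rangle=0$ for all $i$). Your explicit flag that the subgraph states are neither orthogonal nor linearly independent is exactly the subtlety the paper's lemma is designed to handle.
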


A direct consequence of Theorem \ref{theorem::rank_of_random_graph_states}
is that the rank of $\rho^p_{K_{n}}$ is maximum over all RG states of $n$
qubits, as long as $p\neq 0,1$. An interesting question is whether there
exists any other randomized graph state $\rho _{G_{n}}$ with maximum rank.
The answer is in the negative. This can be explained by the following argument.
Suppose we have a graph $G_{n}$ given by the complete graph $K_{n}$ where,
without loss of generality, we delete a single edge between vertices $1$ and
$2$. It can be easily seen that the state $|1100...00\rangle $ appears with a
plus sign in the graph state $|G_{n}\rangle $ and all the corresponding
subgraph states. Therefore, the state $|0000...00\rangle -|1100...00\rangle $
cannot be obtained as a superposition of the subgraphs of $G_{n}$ (see the
proof of Theorem \ref{theorem::dimension_of_complete_subgraphs_state_space}
in Appendix \ref{sec::proof_of_theorems_in_unitary_equivalence} for an
explanation). Thus, the rank of $\rho _{G_{n}}$ is always strictly smaller
than $2^{n}-n$.

The above argument also holds for the case of states that correspond to
graphs $G_{n}^{\lnot m}$ with $m$ edges missing with respect to the complete graph, i.e., with $\binom{n}{2}-m$ edges. The rank of the
corresponding RG states is then bounded as
\begin{equation}
\text{rank}(\rho _{G_{n}^{\lnot m}})\leq 2^{n}-n-m.
\end{equation}%
To prove this, the above argument about the state $|1100...00\rangle $
corresponding to $1$'s for the qubits that are not connected by an edge can
be repeated for all the other $m$ pairs of qubits where the edges are
missing, and the above upper bound then follows.
From the above reasoning we can thus infer that the randomized graph state $\rho_{G^{\neg m}_n}$ can never be GU equivalent to $\rho_{K_n}$.
\par
An interesting example in this sense is provided by
the two graph states $\ket{K_n}$ and $\ket{S_n}$, which are known to
be LU equivalent. As we have observed, rank$(K_n) = 2^n-n$, while, since the star graph $S_n$ can be obtained from the complete graph $K_n$ by deleting $\binom{n-1}{2}$ edges, the rank of $\rho_{S_n}$ can be bounded as
\begin{equation}
\text{rank}(\rho _{S_{n}})\leq 2^{n}-n-\binom{n-1}{2}.
\label{eq.::rank_of_RS_graph_state}
\end{equation}%
This proves that, although the star graph state $\ket{S_n}$ and the complete graph state $\ket{K_n}$ are LU equivalent, their corresponding RG states $\rho_{S_n}$ and $\rho_{K_n}$ are not even GU equivalent.

\section{Bipartite Entanglement}

\label{sec::bipartite_entanglement}

In this section, we analyze the bipartite entanglement properties of RG states.
We show that RG states exhibit some properties which are analogous to
bipartite entanglement of pure graph states, while others are different. A
pure graph state is entangled regarding a bipartition if there exists at least
one edge across the partition. The following proposition shows that the same
result holds for RG states.

\begin{proposition}
\label{prop::bipartite_entanglement_of_random_graph_states} Given a graph $G$,
let $A$ and $B$ be disjoint subsets such that $A\cup B=V_{G}$. A RG state
$\rho_{G}^{p}$ is entangled regarding the bipartition $A|B$, if there exists
at least one randomized edge between $A$ and $B$ with randomness $p>0$.
\end{proposition}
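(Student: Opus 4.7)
The plan is to reduce the full $n$-qubit claim to a two-qubit problem via a chain of local projective measurements. Concretely, I would project every qubit $c \in V_G \setminus \{a, b\}$ onto the computational basis state $\ket{0}_c$. Each such projection is a local operation with respect to the $A|B$ bipartition, since the qubit $c$ lies entirely in either $A$ or $B$; so if $\rho_G^p$ were separable across $A|B$ the post-selected state on $\{a,b\}$ would remain separable. It then suffices to exhibit a two-qubit residue that is entangled.

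The heart of the argument is the reduction identity
\begin{equation*}
\bra{0}_c\, \rho_G^p\, \ket{0}_c \;=\; \tfrac{1}{2}\, \rho_{G \setminus c}^{p},
\end{equation*}
where $G\setminus c$ denotes the graph obtained by deleting $c$ and all its incident edges. To establish it I would start from the standard graph-state splitting $\ket{F} = \tfrac{1}{\sqrt{2}}\bigl(\ket{0}_c\ket{F\setminus c} + \ket{1}_c Z_{N_F(c)}\ket{F\setminus c}\bigr)$, which immediately gives $\bra{0}_c\ket{F} = \tfrac{1}{\sqrt{2}}\ket{F\setminus c}$. Inserting this into Definition~\ref{def.::randomization_operator_random_graph_state} and regrouping the spanning subgraphs of $G$ by their restriction to $V_G \setminus \{c\}$, the binomial identity $\sum_{S \subseteq E_c} p^{|S|}(1-p)^{|E_c|-|S|} = 1$ collapses the sum over $c$-incident edges and recovers exactly $\rho_{G\setminus c}^p$, up to the factor $\tfrac{1}{2}$ accounting for the probability of the outcome. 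Iterating over all $c \in V_G \setminus \{a, b\}$, the normalised conditional state becomes the RG state of the induced subgraph on $\{a, b\}$; since $\{a,b\}$ is by hypothesis an edge of $G$, this two-qubit state is $(1-p)\ket{++}\bra{++} + p\,\mathrm{CZ}\ket{++}\bra{++}\,\mathrm{CZ}$.

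It remains to show that this two-qubit state is entangled for every $p \in (0,1]$, which I would do via the Peres-Horodecki criterion: a short $4\times 4$ computation shows that the determinant of the partial transpose with respect to $a$ equals $-p^{3}/16$, strictly negative on the allowed range of $p$, so the partial transpose must have a negative eigenvalue. Combined with the separability-preserving character of the local measurements, this contradicts any hypothetical separability of $\rho_G^p$ and proves the proposition. The main technical obstacle lies in establishing the reduction identity cleanly — in particular in verifying the binomial collapse over $c$-incident edges — after which the iteration, the small PPT determinant, and the concluding contradiction are all routine.
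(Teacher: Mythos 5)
Your proposal is correct and follows essentially the same route as the paper's own proof: project all qubits outside $\{a,b\}$ onto the $+1$ outcome of $\sigma_z$ (i.e.\ onto $\ket{0}$), observe that this separability-preserving local operation leaves the randomized Bell state on $\{a,b\}$, and conclude via the Peres--Horodecki criterion. The only differences are cosmetic refinements — you derive the vertex-deletion identity $\bra{0}_c\rho_G^p\ket{0}_c=\tfrac{1}{2}\rho_{G\setminus c}^p$ explicitly via the binomial collapse rather than citing the known graph-state measurement rule, and you certify NPT through the determinant $-p^3/16$ of the partial transpose rather than exhibiting the negative eigenvalue directly.
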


\begin{proof}
Let us first consider the graph state composed of two qubits, namely the Bell
state $|\text{Bell}\rangle
=|\vcenter{\hbox{\includegraphics[width=2em]{pics/bell_2}}}\rangle$. The RG
state $\rho_{\text{Bell}}^{p}$ associated to it is thus given by
\begin{equation}
\rho_{\text{Bell}}^{p}=\frac{1}{4}\left(
\begin{array}
[c]{cccc}%
1 & 1 & 1 & 1-2p\\
1 & 1 & 1 & 1-2p\\
1 & 1 & 1 & 1-2p\\
1-2p & 1-2p & 1-2p & 1
\end{array}
\right)  .
\end{equation}
Since the partial transpose of $\rho_{\text{Bell}}^{p}$ has one negative
eigenvalue for $p>0$, $\rho_{\text{Bell}}^{p}$ is entangled whenever $p>0$
\cite{Horodecki1996-11}. Let us now move to the general case and show that
there is always a nonzero probability to project a given RG state $\rho
_{G}^{p}$ onto a randomized Bell state of vertices $a\in A$ and $b\in B$, by
using local $\sigma_{z}$ measurements. Notice that this is never possible if
$\rho_{G}^{p}$ is separable across the bipartition $A|B$. Recall that a
$\sigma_{z}$ measurement on the vertex $v_{i}$ of $|G\rangle$ results in the
graph state $|G-v_{i}\rangle\otimes|+\rangle_{v_{i}}$, where all the edges
touching the vertex $v_{i}$ have been deleted, whenever the outcome $+1 $
occurs \cite{HeinEisertBriegel2004-06}. Therefore, if we now measure all the
vertices except $a$ and $b$, i.e., $V\backslash\{a,b\}$, there is a
nonvanishing probability that all the outcomes are $+1$, and thus a nonzero
probability to delete all the randomized edges of $\rho_{G}^{p}$ except the
one between $a\in A$ and $b\in B$. As a result, there is a nonzero
probability to obtain a randomized Bell state $\rho_{\text{Bell}}^{p}$ between
the vertices $a$ and $b$, which finally shows that the state $\rho_{G}^{p}$ is
entangled with respect to $A|B$ for any $p>0$.
\end{proof}
\par
This shows that, for $p > 0$, RG states show entanglement across any bipartition connected by at least one randomized edge, thus even the action of an imperfect probabilistic CZ gate creates entanglement between the two connected parties.

\bigskip
%%%%% maximally connectedness and persistency
We now consider two different bipartite entanglement properties, namely
maximal connectedness and persistency, specifically introduced in
\cite{BriegelRaussendorf2001-01} for cluster states, and of particular
interest with regard to MBQC. A state is said to be \emph{maximally
connected} if we can project any pair of vertices onto a Bell state with
certainty, by using only local measurements. The following proposition shows
that RG states never enjoy this property.

\begin{proposition}
A randomized graph state is never maximally connected for $p<1$.
\end{proposition}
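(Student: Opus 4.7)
The plan is to exploit the fact that, whenever $p<1$, the empty spanning subgraph $G_n^{\emptyset}$ enters the convex decomposition of $\rho_G^p$ in Definition \ref{def.::randomization_operator_random_graph_state} with strictly positive weight $(1-p)^{|E_G|}$, and that the associated pure state is $|+\rangle^{\otimes n}$, which is fully separable across every bipartition. The strategy is to show that this separable contribution survives any local measurement protocol, and therefore prevents the output on the chosen pair $(a,b)$ from ever being a pure Bell state with certainty.

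First I would unpack the definition of maximal connectedness at a pair $(a,b)$: it requires a (possibly adaptive) single-qubit measurement strategy on $V_G\setminus\{a,b\}$, together with outcome-dependent local unitaries $U_a(\vec s)\otimes U_b(\vec s)$, such that for every outcome $\vec s$ occurring with positive probability on the input $\rho_G^p$, the post-correction state on $(a,b)$ equals a fixed Bell state $|\Phi\rangle$. By linearity of the measurement map, the conditional two-qubit state on $(a,b)$ decomposes as
\begin{equation*}
\sigma_{a,b|\vec s}=\sum_{F\text{ spans }G}\frac{P(F)\,c_{\vec s}^{F}}{P(\vec s)}\,\bigl(U_a(\vec s)\otimes U_b(\vec s)\bigr)\,\sigma_{a,b}^{F,\vec s}\,\bigl(U_a(\vec s)\otimes U_b(\vec s)\bigr)^{\dagger},
\end{equation*}
where $P(F)=p^{|E_F|}(1-p)^{|E_G\setminus E_F|}$, $c_{\vec s}^{F}$ is the probability of the outcome history $\vec s$ on input $|F\rangle$, and $\sigma_{a,b}^{F,\vec s}$ is the normalized reduced state on $(a,b)$ produced by that input and outcome.

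Next I would focus on the term $F=G_n^{\emptyset}$. Since $|G_n^{\emptyset}\rangle=|+\rangle^{\otimes n}$ is fully product, local measurements on $V_G\setminus\{a,b\}$ leave the $(a,b)$ subsystem in a state $\sigma_{a,b}^{\emptyset,\vec s}$ that is a product across $a|b$, and local unitaries $U_a(\vec s)\otimes U_b(\vec s)$ preserve this separability. Because $|+\rangle^{\otimes n}$ is a valid quantum state with total measurement probability one, there exists at least one outcome $\vec s^{*}$ with $c_{\vec s^{*}}^{\emptyset}>0$; for such an outcome $P(\vec s^{*})\geq(1-p)^{|E_G|}c_{\vec s^{*}}^{\emptyset}>0$, so $\vec s^{*}$ occurs with positive probability on $\rho_G^p$ and the expansion above contains a separable term with a strictly positive coefficient. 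Since a pure Bell state admits only the trivial convex decomposition, every term in $\sigma_{a,b|\vec s^{*}}$ would have to equal $|\Phi\rangle\langle\Phi|$, contradicting the separability of the empty-subgraph contribution.

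The main obstacle is purely bookkeeping: accommodating adaptive measurement strategies and outcome-dependent corrections on $(a,b)$. Both are absorbed uniformly by packaging the measurement history into $\vec s$ and by noting that local unitaries $U_a\otimes U_b$ cannot create $a|b$ entanglement from a product input. Everything else follows from linearity of the measurement map and the uniqueness of convex decompositions of a pure state.
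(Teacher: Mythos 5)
Your proof is correct and rests on the same idea as the paper's: for $p<1$ the mixture $\rho_G^p$ contains, with strictly positive weight, a component in which the chosen pair is unentangled (the paper uses the event that vertex $i$ or $j$ is isolated, you use the empty spanning subgraph), and such a separable contribution survives any local measurement-and-correction protocol, so the conditional state on $(a,b)$ can never be a pure Bell state. Your version simply spells out the bookkeeping (adaptive outcomes, outcome-dependent corrections, uniqueness of the convex decomposition of a pure state) that the paper's one-line proof leaves implicit.
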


\begin{proof}
Since for any pair of vertices $\{i,j\}$ there is a nonzero probability
%$(1-p)^{|N_{i}|}+(1-p)^{|N_{j}|}-(1-p)^{|N_{i}|+|N_{j}|}$, \textbf{[WHAT IS Ni?]}
that either vertex $i$ or $j$ is isolated, the state cannot be projected onto
a Bell state $|\text{Bell}\rangle_{i,j}$ with certainty.
\end{proof}

The \emph{persistency} $\mathcal{P}$ of a state is instead the minimal number
of local measurements needed to completely disentangle the state. In Ref.
\cite{BriegelRaussendorf2001-01}, it was shown that, while every cluster state
is maximally connected, the persistency depends on its specific structure.
Results are known for one-dimensional (1D) cluster states $|L_{n}\rangle$, where
the persistency $\mathcal{P}$ equals the Schmidt rank $n/2$, and for
two- or three-dimensional cluster states where $\mathcal{P}$ approaches $n/2$ only
asymptotically. The following proposition shows that the RG state $\rho
_{G}^{p}$ is less robust than the graph state $|G\rangle$.

\begin{proposition}
The persistency of a randomized graph state $\mathcal{P}(\rho_{G}^{p})$ is
always smaller or equal than $\mathcal{P}(|G\rangle)$:
\begin{equation}
\mathcal{P}(\rho_{G}^{p})\leq\mathcal{P}(|G\rangle).
\end{equation}

\end{proposition}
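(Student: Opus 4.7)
The plan is to exhibit a local measurement strategy using at most $\mathcal{P}(|G\rangle)$ measurements that leaves $\rho_G^p$ in a fully separable state. A natural candidate, which already realises the persistency of a pure graph state \cite{BriegelRaussendorf2001-01}, consists of $\sigma_z$ measurements on the vertices of a minimum vertex cover $M$ of $G$, with $|M|=\mathcal{P}(|G\rangle)$.

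First I would recall, as already used in the proof of Proposition~\ref{prop::bipartite_entanglement_of_random_graph_states}, that a $\sigma_z$ measurement on qubit $v$ of $\ket{G'}$ deletes every edge incident to $v$ and decouples that qubit, up to a local Pauli correction on its neighbourhood depending on the outcome. Hence performing $\sigma_z$ on every vertex of a vertex cover of $G'$ removes all edges of $G'$ and produces, on every outcome branch, a pure product state on all $n$ qubits.

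The key combinatorial observation is that a vertex cover $M$ of $G$ is automatically a vertex cover of every spanning subgraph $F \subseteq G$, since deleting edges never creates new ones to cover. Applying the same pattern of $\sigma_z$ measurements to each term of the convex decomposition of $\rho_G^p$ given in Definition~\ref{def.::randomization_operator_random_graph_state}, linearity of the measurement channel yields, for every outcome sequence $\vec{s}$, a conditional state of the form
\begin{equation}
\sum_{F \text{ spans } G} w_{F,\vec{s}}\,\ketbra{\pi_{F,\vec{s}}}{\pi_{F,\vec{s}}},
\end{equation}
which is a convex combination of pure product states and hence fully separable. This immediately gives $\mathcal{P}(\rho_G^p) \leq |M| = \mathcal{P}(|G\rangle)$.

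The main subtlety I anticipate is conceptual rather than computational: one must fix the precise meaning of ``completely disentangled'' for a mixed state (here, fully separable) and check that the persistency is counted in the worst case over outcome branches. Once that convention is settled, the argument reduces to the elementary graph-theoretic fact that every vertex cover of $G$ covers each spanning subgraph, lifted to the mixed-state level by linearity; in particular no analysis of the specific weights $w_{F,\vec{s}}$ is needed.
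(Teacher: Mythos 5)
Your reduction to the mixed state is sound: a vertex cover of $G$ covers every spanning subgraph $F$ (deleting edges cannot uncover anything), so $\sigma_z$ measurements on all of its vertices turn each $\ket{F}$ into a product state on every outcome branch, and by linearity every conditional state of $\rho_G^p$ is a convex mixture of product states, hence fully separable. What this establishes is $\mathcal{P}(\rho_G^p)\leq \tau(G)$, where $\tau(G)$ is the minimum vertex cover number. The gap is the identification $|M|=\mathcal{P}(\ket{G})$. In general one only has $\mathcal{P}(\ket{G})\leq\tau(G)$, and the inequality can be strict: $\ket{K_n}$ with $n\geq 3$ is LU equivalent to the GHZ state, so a single measurement of one qubit in a suitably rotated basis already disentangles it and $\mathcal{P}(\ket{K_n})=1$, whereas every vertex cover of $K_n$ has $n-1$ vertices. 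The equality of persistency and vertex cover size quoted from Ref.~\cite{BriegelRaussendorf2001-01} is a statement about cluster states, not arbitrary graphs. Since $\tau(G)$ sits on the wrong side of $\mathcal{P}(\ket{G})$, your bound does not imply $\mathcal{P}(\rho_G^p)\leq\mathcal{P}(\ket{G})$.

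For comparison, the paper's proof takes an \emph{arbitrary} optimal measurement set $\{M_1,\dots,M_m\}$ with $m=\mathcal{P}(\ket{G})$ and asserts that it disentangles every spanning subgraph state $\ket{F}$; the convexity step that follows is identical to yours. That assertion is precisely the point your vertex-cover choice was designed to secure, and it is not obvious for non-$\sigma_z$ measurements: for $K_3$ the optimal single measurement acts on one qubit and cannot affect the Bell pair carried by the other two qubits in the spanning subgraph consisting of the single opposite edge. So you have correctly located where the work lies, but substituting a vertex cover for the optimal measurements trades one unjustified step for another. To close the argument you would need to exhibit a measurement set of size $\mathcal{P}(\ket{G})$ --- not merely of size $\tau(G)$ --- that simultaneously disentangles all spanning subgraph states, or else argue directly about separability of the post-measurement \emph{mixture} rather than of each component.
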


\begin{proof}
Let $\mathcal{P}(|G\rangle)=m$, and $\{M_{1},\cdots,M_{m}\}$ be the
measurements that totally disentangle $|G\rangle$. Then the same set of
measurements $\{M_{1},\cdots,M_{m}\}$ totally disentangles $\rho_{G}^{p}$ too,
as it disentangles each spanning subgraph state of $|G\rangle$. Therefore the
inequality $\mathcal{P}(\rho_{G}^{p})\leq m$ follows.
\end{proof}

The two propositions above show that the bipartite entanglement of a  given RG state is never as robust as the one of the corresponding pure graph state. This observation is expected, due to the method of construction, and is of particular interest with regard to MBQC.

\begin{figure}[b]
\centering
\subfloat[Negativity of all RG states $\rho_{K_n}$ states up to $n=4$ qubits.]{
\includegraphics[width=0.9\linewidth]{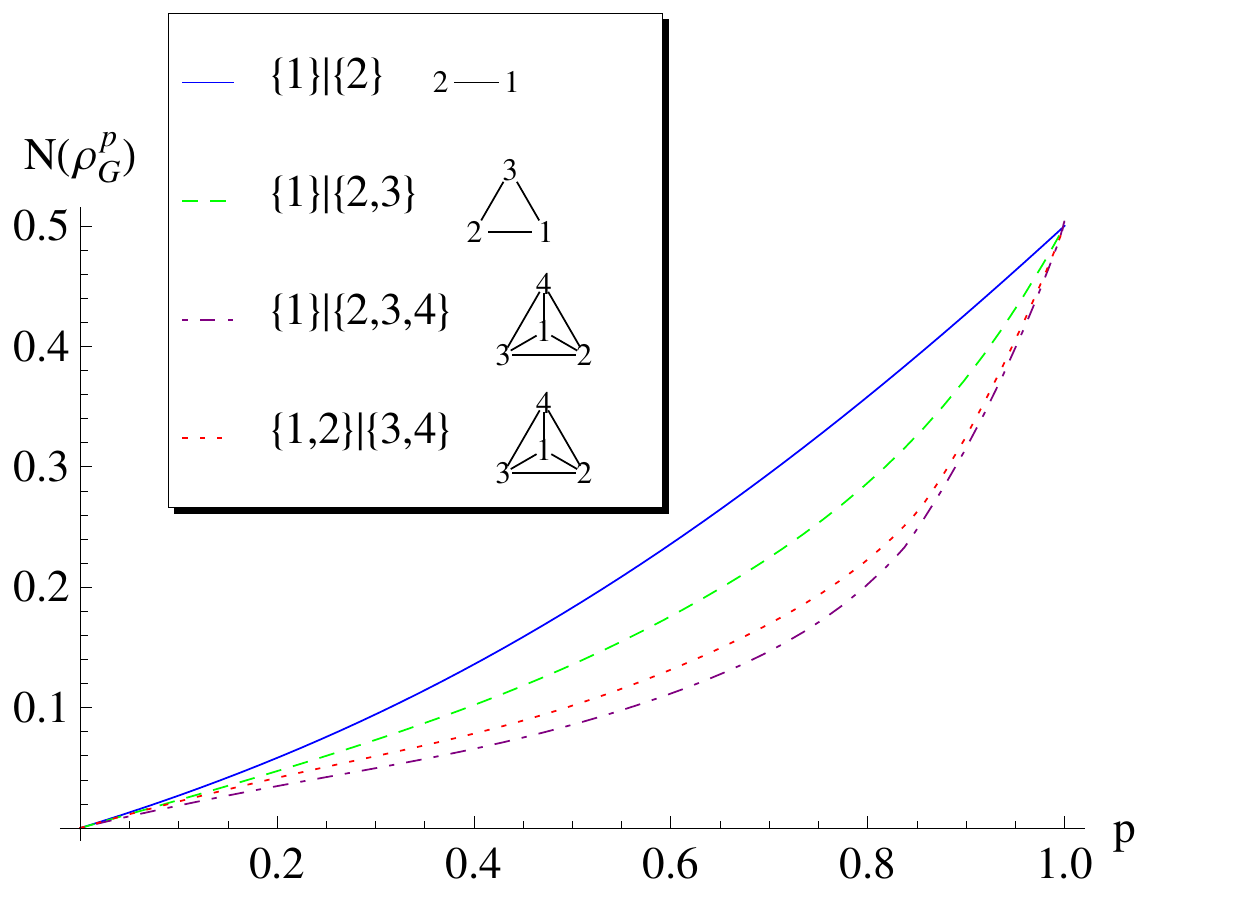}
\label{fig::negativity_of_RK_states}
} \newline%
\subfloat[Negativity of RG states $\rho_{S_n}$ composed of $n=3$ qubits.]{
\includegraphics[width=0.9\linewidth]{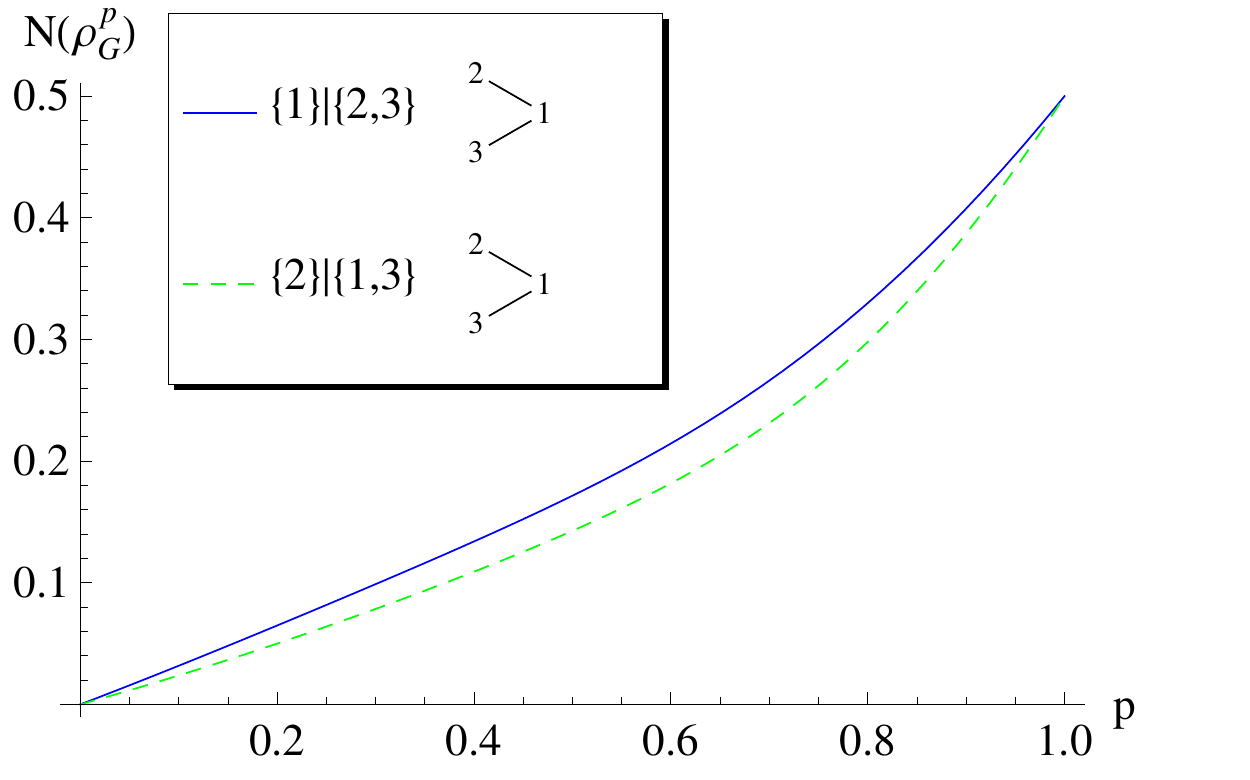}
\label{fig::negativity_of_3v_RG_states}
} \newline%
\subfloat[Negativity of RG states $\rho_{S_n}$ composed of $n=4$ qubits.]{
\includegraphics[width=0.9\linewidth]{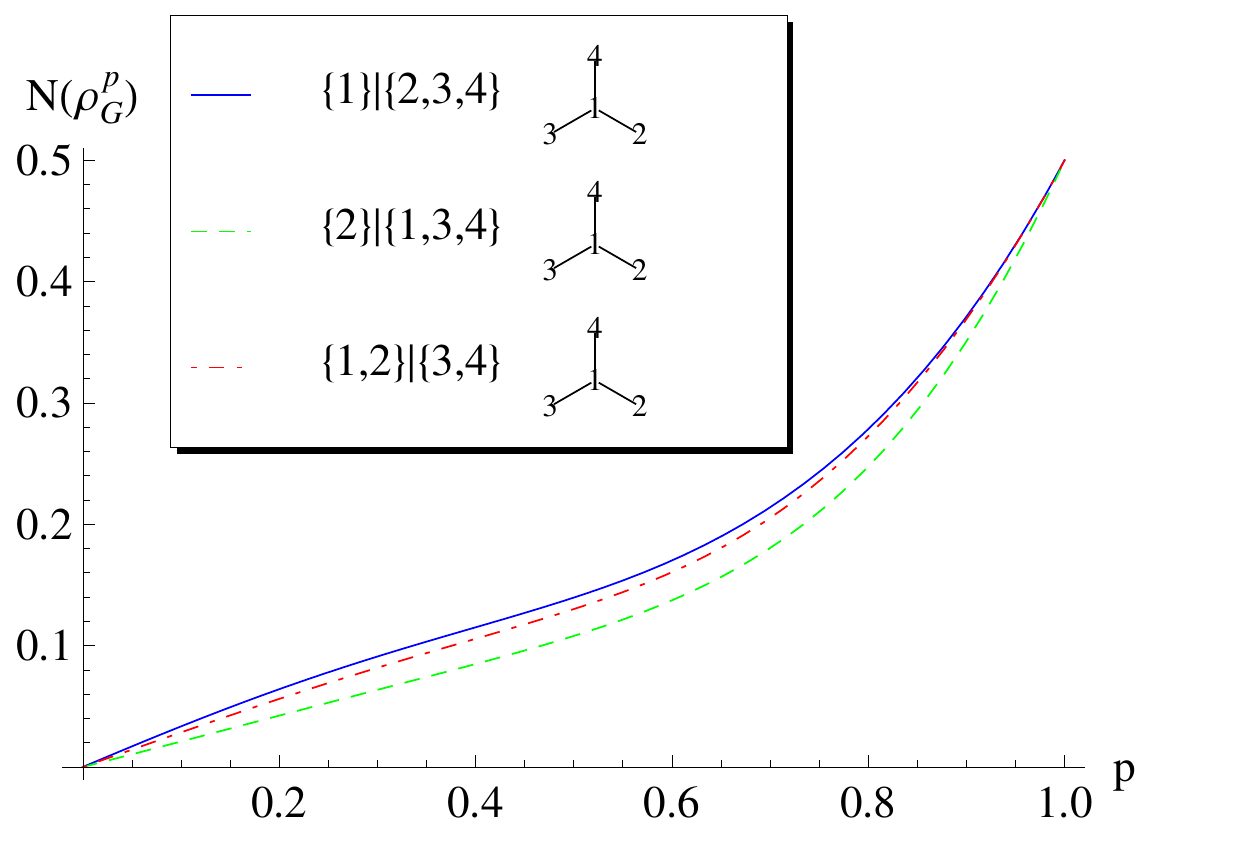}
\label{fig::negativity_of_3v_RG_states}
}
\caption{(Color online) Negativity of some special RG states composed of few
qubits. ``$\{a_{1},\cdots\}|\{b_{1},\cdots\}$" indicates the bipartition with
respect to which the negativity has been
calculated.}%
\label{fig::monotomicity_of_negativity}%
\end{figure}

\par
\bigskip

We finally quantify the amount of bipartite entanglement by considering the
negativity, evaluated with respect to all possible bipartitions of the qubits.
The \emph{negativity} of a bipartite state $\rho_{AB}$ is defined
\cite{VidalWerner2002-02} as
\begin{equation}
N(\rho_{AB})=\frac{||\rho_{AB}^{\Gamma_{A}}||-1}{2},
\end{equation}
where $\Gamma_{A}$ represents the partial transposition with respect to the
subsystem $A$, and $||X||=\text{Tr}[\sqrt{X^{\dagger}X}]$ is the trace norm.
Notice that this is one of the few computable measures of entanglement when
mixed states are concerned.

We have evaluated the negativity numerically for some RG states composed of a
small number of qubits. The results for the negativity of states corresponding
to the complete graph $K_{n}$ and the star graph $S_{n}$ up to $n=4 $ vertices
are reported in Fig. \ref{fig::monotomicity_of_negativity}. As can be seen, in
the studied cases the negativity exhibits a monotonic behaviour in terms of
the randomness parameter $p$. This suggests that the entanglement content
might increase monotonically in $p$ with respect to any bipartition. Actually,
since for the extreme cases $p=0$ and $p=1$ we have a fully separable state
and an entangled state, respectively, one might expect that, as
the weight of entangled subgraph states in $\rho^{p}_{G}$ increases with increasing $p$, a
corresponding growth of the entanglement content of the RG state $\rho^{p}_{G}$.
However, even though this conjecture is supported by numerical
evidence, it is an open question whether the monotonic behavior of the negativity in
terms of the randomness $p$ is a common feature to all RG states.

\section{Genuine Multipartite Entanglement}

\label{sec::GME}

In this section, we consider genuine multipartite entanglement (GME) properties
of RG states. We remind the reader that a state which cannot be written as a
convex combination of biseparable states is called \emph{genuinely
multipartite entangled} (GME) \cite{AcinBLSanpera2001-07}. For
example, in the case of three qubits, a state $\rho$ is genuinely multipartite
entangled, if it can not be expanded in the following decomposition:
\begin{equation}
\rho=c_{1}\rho_{1|23}+c_{2}\rho_{2|13}+c_{3}\rho_{3|12},
\end{equation}
where $\rho_{i|jk}$ is a biseparable state regarding the bipartition
$\{i\}|\{jk\}$, and $\sum_{i=1}^{3}c_i=1$, with $c_i \geq 0$. The condition of being genuine multipartite entangled is thus
stronger than showing bipartite entanglement. As a direct consequence, the
recognition and evaluation of GME becomes much harder, especially for mixed
states. Nonetheless some investigations can be still made for RG states.

As was the case for bipartite entanglement in Fig.
\ref{fig::monotomicity_of_negativity}, we expect the randomness parameter $p$ to tune
the amount of GME of a connected RG state from zero to its maximum value.
Since the two extreme cases $p=0,1$ correspond to a fully separable and a
genuine multipartite entangled state, respectively, we wonder whether the GME
content of a general RG state $\rho_{G}^{p}$ might still follow a
monotonically increasing behavior in terms of $p$.

In order to support this intuition, we have followed the PPT mixer approach
developed in Ref. \cite{JungnitschMGuhne2011-05}. In this approach one uses a
semidefinite program to make an optimization over all fully decomposable witnesses.
An entanglement witness is a Hermitian operator $W$ such that there exists a $\rho$ with $\text{Tr}[W\rho]<0$ and $\text{Tr}[W\rho_{\mathrm{sep}}]\geq0$ for all separable states $\rho_{\mathrm{sep}}$.
A fully decomposable witness $W$ is a witness operator that can be decomposed into two positive semidefinite operators $P_\gamma$ and $Q_\gamma$ for all bipartitions $\gamma$, such that
\begin{equation}
  W=P_\gamma+Q_\gamma^{\Gamma_\gamma},
\end{equation}
with Tr$(W)=1$, $P_{\gamma}\geq0$, $Q_{\gamma}\geq0$ and $\Gamma_{\gamma}$ being the partial transpose regarding bipartition $\gamma$. Such a witness is a GME witness, if there exists a GME state $\rho$ with $\text{Tr}[W\rho]<0$,
and $\text{Tr}[W\rho']\geq0$ for all non-GME states $\rho'$.
With a semi-definite program one can minimize the expectation value Tr($W\rho$) over all fully decomposable witnesses,
such that one can numerically calculate the quantity
\begin{equation}
E_{\text{pptmixer}}(\rho)=\left\vert \min\left(  0,\min_{W\text{ fully decomp.
}}\text{Tr}(W\rho)\right)  \right\vert\label{eq.PPT_mixer_ent_monotone}.%
\end{equation}
Since $E_{\text{pptmixer}}$ is an entanglement monotone, it cannot solely detect the presence of GME
but also bound the amount of GME \cite{JungnitschMGuhne2011-05}. Moreover
it turns out to be necessary and sufficient for entanglement detection
in permutationally invariant states up to three qubits
\cite{NovoMoroderGuhne2013-07}, thus leading to a well defined measure of GME.
Notice that, for graph states and their randomization, only the ones which are
generated by complete graphs are permutationally invariant. Hence we can
solely use this PPT mixer approach as GME measure for the three-qubit RG state
$\rho_{K_{3}}^{p}$, while as a GME monotone for the other RG states. With the
help of the online program \cite{ppt_mixer_online_program}, we obtain the
numerical results for RG states with three, four and five qubits. These are shown
in Fig. \ref{fig::pptmixer_ent_monotone}. The behavior of the monotone of GME
derived from the PPT mixer is monotonic in $p$, supporting our intuition.
Whether the multipartite entanglement of RG states is generally increasing with $p$ remains an open question.
\par

\begin{figure}[h]
\centering \begin{minipage}{\linewidth}
\includegraphics[width=0.9\linewidth]{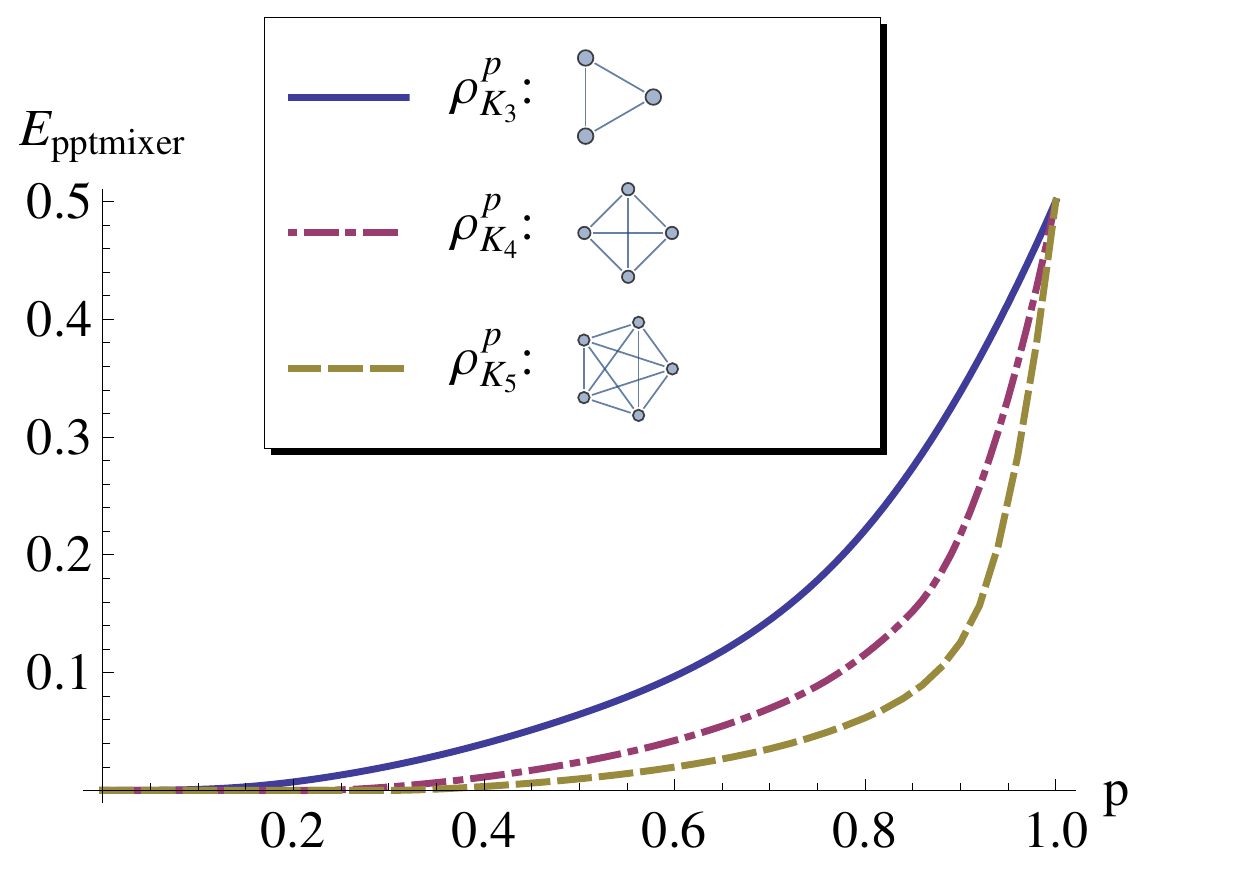}
\end{minipage}
\begin{minipage}{\linewidth}
\includegraphics[width=0.9\linewidth]{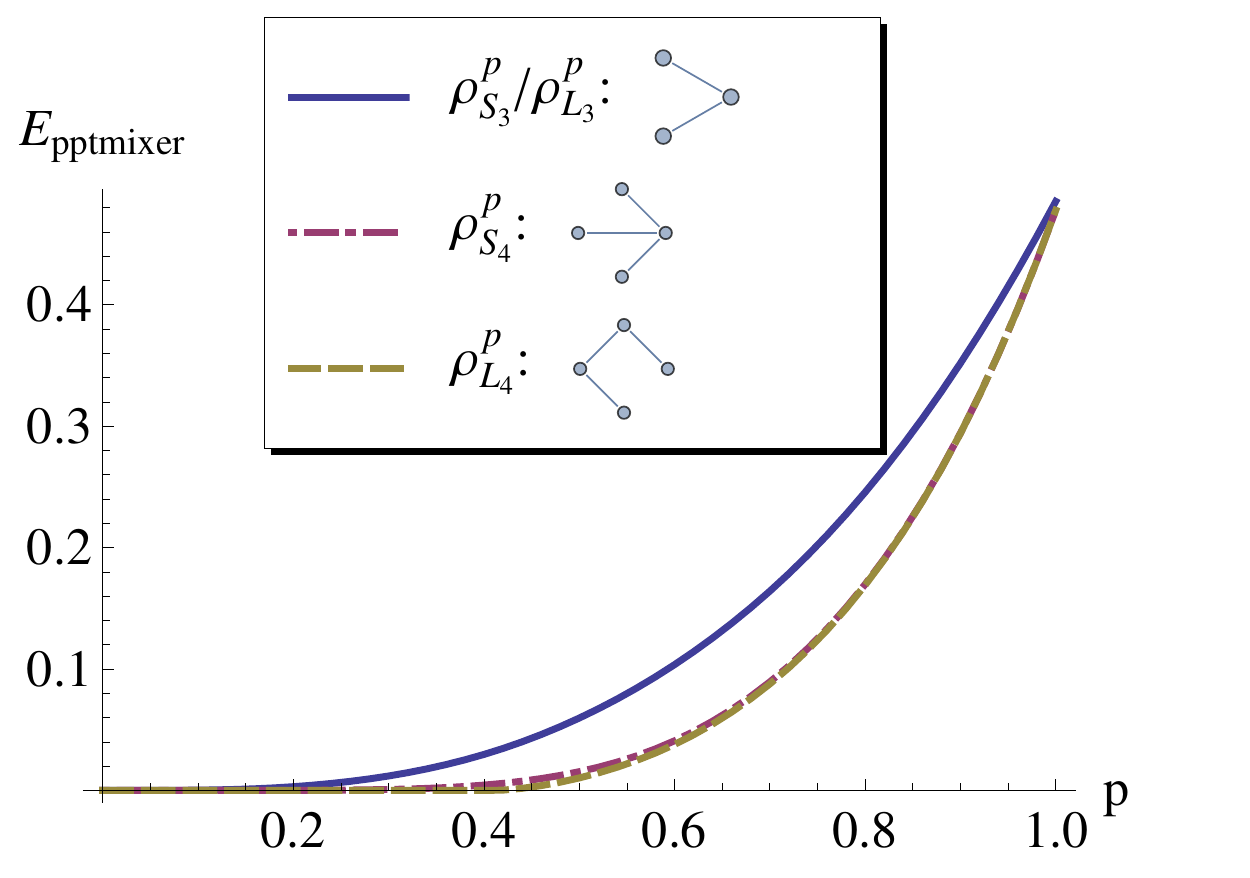}
\end{minipage}
\begin{minipage}{\linewidth}
\includegraphics[width=0.9\linewidth]{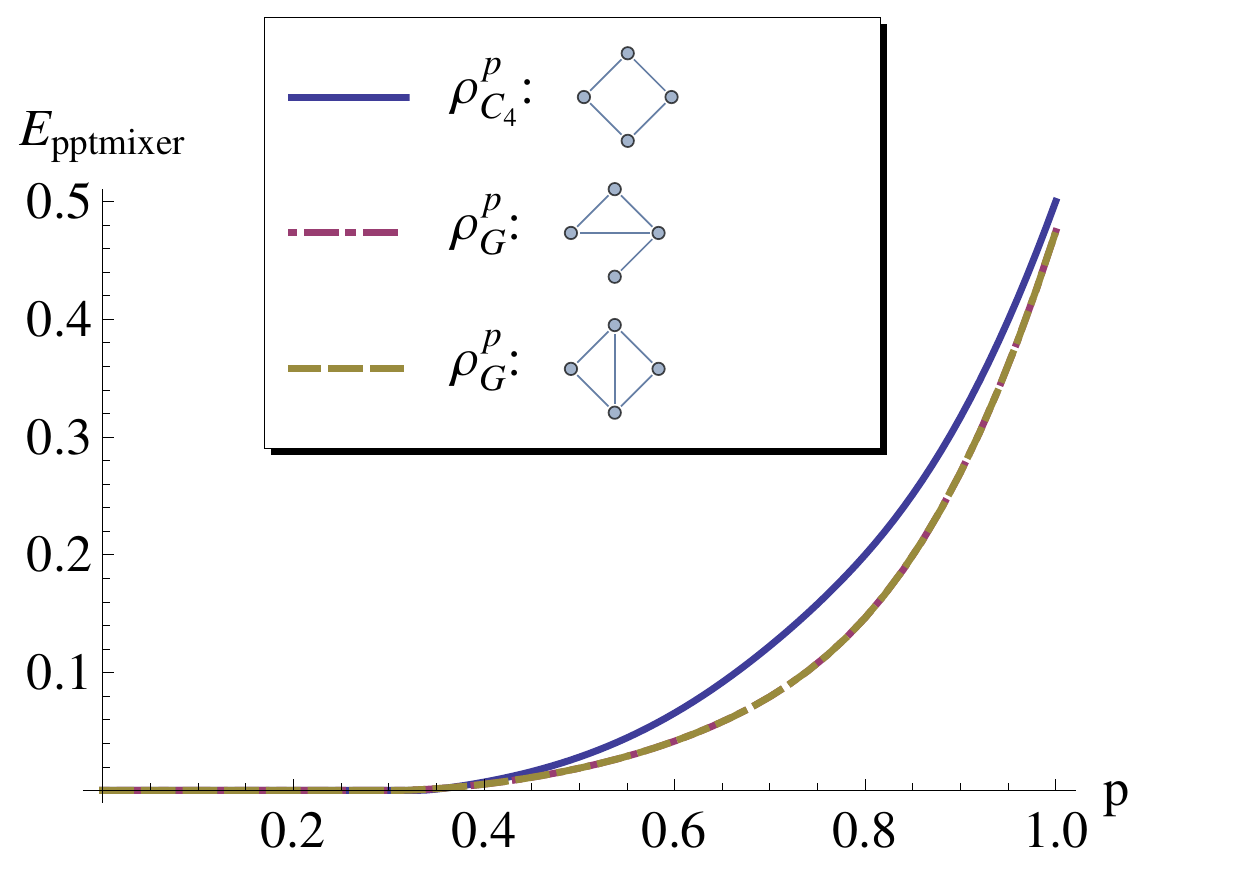}
\end{minipage}
\caption{(Color online) Multipartite entanglement monotone derived from the
PPT mixer as a function of the randomness $p$ for RG states up to five
qubits; see Eq. \eqref{eq.PPT_mixer_ent_monotone}}.%
\label{fig::pptmixer_ent_monotone}%
\end{figure}

\bigskip

If the quantity $\text{Tr}[W\rho_G^p]$ is monotonically decreasing with respect to
$p$, then it allows to us find a critical value of the randomness parameter,
$p_{w}$, such that whenever $p>p_{w}$ the state is guaranteed to show GME. A
depiction of what could happen is illustrated in Fig.
\ref{fig::critical_probability_and_its_upper_bound}. There, the expectation
value of a GME witness on the RG state $\rho_{G}^{p}$ is plotted as a function
of $p$, and compared with the expected behavior of a general measure of GME.
By assuming the existence of a threshold $p_{c}$ above which the state shows
GME (according to the GME measure), it is clear that $p_{w}$ is an upper bound
for $p_{c}$, i.e., $p_{c}\leq p_{w}$. Note that the presence of a threshold $p_c$ is supported by results shown in Fig. \ref{fig::pptmixer_ent_monotone}, and that any negative expectation value for a witness leads to a lower bound for a corresponding entanglement measure \cite{brandao2005quantifying}.

\begin{figure}[t]
\centering
\includegraphics[width=0.9\linewidth]{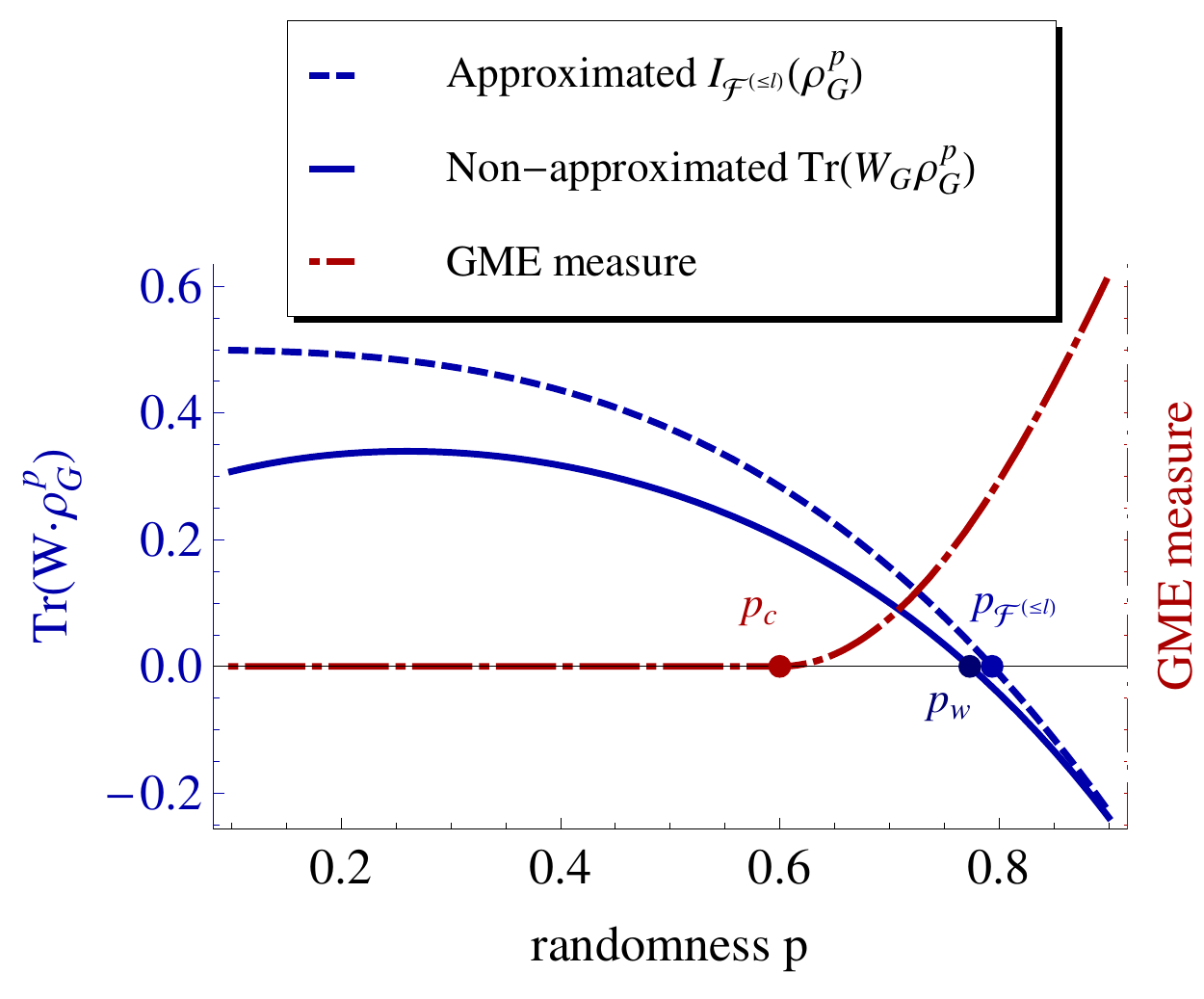}\caption{(Color
online) Relation between a measure of GME and the expectation value of the
witness $W$. The critical probability $p_{c}$ is upper bounded by $p_{w}$,
i.e. the value of $p$ where the expectation value becomes negative. Notice
that the existence of $p_{c}$ and the monotonically increasing behavior of
the GME measure are not guaranteed. The same considerations apply to the
monotonic decreasing behavior of the expectation value. The dashed line
depicts an $l$-level approximated GME witness introduced in Sec.
\ref{sec::upper bound of critical randomness for general graphs}. In contrast
with the non-approximated witness, it is monotonically decreasing for level
$l\leq|E_{G}|/2$ and randomness $1/2\leq p\leq1$. The value of the
non-approximated GME witness $\text{Tr}(W_{G}\rho_{G}^{p})$ is always smaller
or equal than the $l$-approximated GME witness $I_{\mathcal{F}^{(\leq l)}%
}(\rho_{G}^{p})$. }%
\label{fig::critical_probability_and_its_upper_bound}%
\end{figure}

A suitable witness to detect GME in a RG state $\rho_{G}^{p}$ turns out to be
the projector-based witness
\cite{AcinBLSanpera2001-07,GuhneHBELMSanpera2002-12,Bourennane2004-02,TothGuhne2005-02}%
,
\begin{equation}
W_{G}=\frac{1}{2}\mathbbm{1}-|G\rangle\langle
G|.\label{eq.::projector_witness}%
\end{equation}
Notice that the operator above involves only the projector onto the pure graph
state $|G\rangle$ that generates $\rho_{G}^{p}$, disregarding all its
subgraphs. In order to see whether $W_{G}$ of Eq.
\eqref{eq.::projector_witness} provides a negative expectation value for the
state $\rho_{G}^{p}$, one has to compute the overlap $\text{Tr}%
[|G\rangle\langle G|\rho_{G}^{p}]$. Therefore we introduce the next definition:

\begin{definition}
[Randomization overlap]The overlap of a graph state $|G\rangle$ and its
randomization $\rho_{G}^{p}$ is the randomization overlap of $\rho_{G}^{p}$,
\emph{i.e.}
\begin{align}
L(\rho_{G}^{p}):= &  \emph{Tr}[|G\rangle\langle G|\rho_{G}^{p}%
]\label{eq.::randomization_overlap}\\
&  =\sum_{F\emph{\ spans}\text{ }G}p^{|E_{F}|}(1-p)^{|E_{G}\backslash E_{F}%
|}\emph{Tr}[|G\rangle\langle G|F\rangle\langle F|].\nonumber
\end{align}

\end{definition}

Due to the linearity of the trace, the calculation of the randomization
overlap $L(\rho_{G}^{p})$ of Eq. \eqref{eq.::randomization_overlap} thus
reduces to the calculation of the scalar product of the graph state
$|G\rangle$ with all its possible subgraph states $|F\rangle$. Furthermore,
exploiting the symmetric difference defined in section \ref{sec::graphs} and the definition of a graph state in
Eq. \eqref{eq.::def_graphstate}, each contribution $\text{Tr}[|G\rangle\langle
G|F\rangle\langle F|]$ can be rewritten as
\begin{equation}
\text{Tr}[|G\rangle\langle G|F\rangle\langle F|]=\text{Tr}[|G^{\emptyset
}\rangle\langle G^{\emptyset}|G\Delta F\rangle\langle G\Delta F|],
\end{equation}
where $|G^{\emptyset}\rangle$ is associated with the empty graph. Therefore,
the overlap of any two graph states can be recast as the overlap of the graph
defined by the symmetric difference and the empty one. However, even in this
form the scalar product remains highly nontrivial to compute. By the help of
a specifically developed algorithm \cite{XchainsRef}, some special cases can
be computed efficiently and even an analytical formula can be given (see Table
\ref{table::results_of_scalar_product_of_example_graphs}), especially when a
small number of edges is concerned. However, in the general case the overlap
can be given only via some iterative formula \cite{NiekampKGuehne2012}, which
unfortunately scales exponentially in the number of vertices.
\par

\begin{center}
\begin{table}[t]
\centering{\renewcommand{\arraystretch}{1.4}
\begin{tabular}
[c]{|c|c|}\hline
Graph $| G \rangle$ & Overlap $|\langle G^{\emptyset}| G\rangle|^{2}$\\\hline
$L_{2n}$ & $1/2^{2n}$\\\hline
$L_{2n+1}$ & $1/2^{2n}$\\\hline
$C_{2n}$ & $1/2^{2n-2}$\\\hline
$C_{2n+1}$ & $0$\\\hline
$S_{n}$ & $1/4$\\\hline
\end{tabular}
}\caption{Scalar product of some special graph states with $| G^{\emptyset
}\rangle$. The cluster graphs $L_n$ in the table are one-dimensional. The results are
attained by using the formulas derived in Ref. \cite{XchainsRef}.}%
\label{table::results_of_scalar_product_of_example_graphs}%
\end{table}
\end{center}

\bigskip

%%%% another difficulty of the calculation, 2^(E_G:2) edges %%%%%
Besides the difficulty to compute each single overlap, another problem that
inevitably affects the computation of the randomization overlap $L(\rho
_{G}^{p})$ consists of the large number of contributions we have to account
for. As a matter of fact, since a RG state contains $2^{\binom{|E_{G}|}{2}}$
possible subgraphs, that is exponentially increasing in the number of edges,
the number of overlaps contributing to $L(\rho_{G}^{p})$ increases
exponentially fast as well. Nonetheless there exist some special cases that
can be treated explicitly and where an analytical solution can be found. These
cases will be treated in the following, before moving to a possible efficient
approximation of the randomization overlap $L(\rho_{G}^{p})$.

%%%%%%%%%%%%%%%%%%%%%%%%%%%%%%%%%%%%%%%%%%%%%%%%%%%%%%%%%%%%%%%%%%%%%%%%%%%%%%%%%%%%%%%%%%%%%%%%%%%%%%%%%%%%%%
%%  Subsubsection "Upper bound of Critical Randomness of Simple RG States"

\subsection{Calculation of the witness for special RG states}
\label{sec::upper_bound_of_critical_randomness_of_simple_RG_states}

In Appendix \ref{sec::randomization_overlap_of_some_special_RG_states}, we derive
the randomization overlap of both the RG state $\rho_{S_{n}}^{p}$,
corresponding to the star graph $S_{n}$, and the randomized 1D cluster
$\rho_{L_{n}}^{p}$. The expectation value of the witness $W_{S_{n}}$ on the
state $\rho_{S_{n}}^{p}$ takes the form
\begin{equation}
\text{Tr}[W_{S_{n}} \rho_{S_{n}}^{p}]=\frac{1}{4}-\frac{3}{4}p^{n-1},
\end{equation}
which is monotonically decreasing with respect to $p$.
Therefore the threshold probability turns out to be $p_{w}=3^{-1/(n-1)}$, and upper bounds the critical randomness $p_{c}$.

For the randomized 1D cluster state $\rho_{L_{n}}^{p}$ the witness gives
instead the following expectation value:
\begin{align}
\text{Tr} & [W_{L_{n}} \rho_{L_{n}}^{p} ]\nonumber\\
= & \frac{1}{2}-\frac{1}{\sqrt{\lambda_{p}}}\left(  1-\frac{p}{2}+\frac
{\sqrt{\lambda_{p}}}{2}\right)  \left(  \frac{p}{2}+\frac{\sqrt{\lambda_{p}}%
}{2}\right)  ^{n}\nonumber\\
& +\frac{1}{\sqrt{\lambda_{p}}}\left(  1+\frac{p}{2}+\frac{\sqrt{\lambda_{p}}%
}{2}\right)  \left(  \frac{p}{2}-\frac{\sqrt{\lambda_{p}}}{2}\right)
^{n},\label{eq.::witness_1D_RC_state_without_extra_term}%
\end{align}
where $\lambda_{p}=1-p+ p^{2}$ (see Appendix
\ref{sec::randomization_overlap_of_some_special_RG_states} for details).
Notice that this function is also monotonically decreasing with respect to $p$.
Solving the above polynomial in $p$ thus provides an upper bound $p_{w}$ on $p_c$ for
the RG state $\rho_{L_{n}}^{p}$. Both the expectation values above are plotted
in Fig. \ref{fig::pw_of_simple_RG_graphs}. \begin{figure}[t]
\centering
\includegraphics[width=\linewidth]{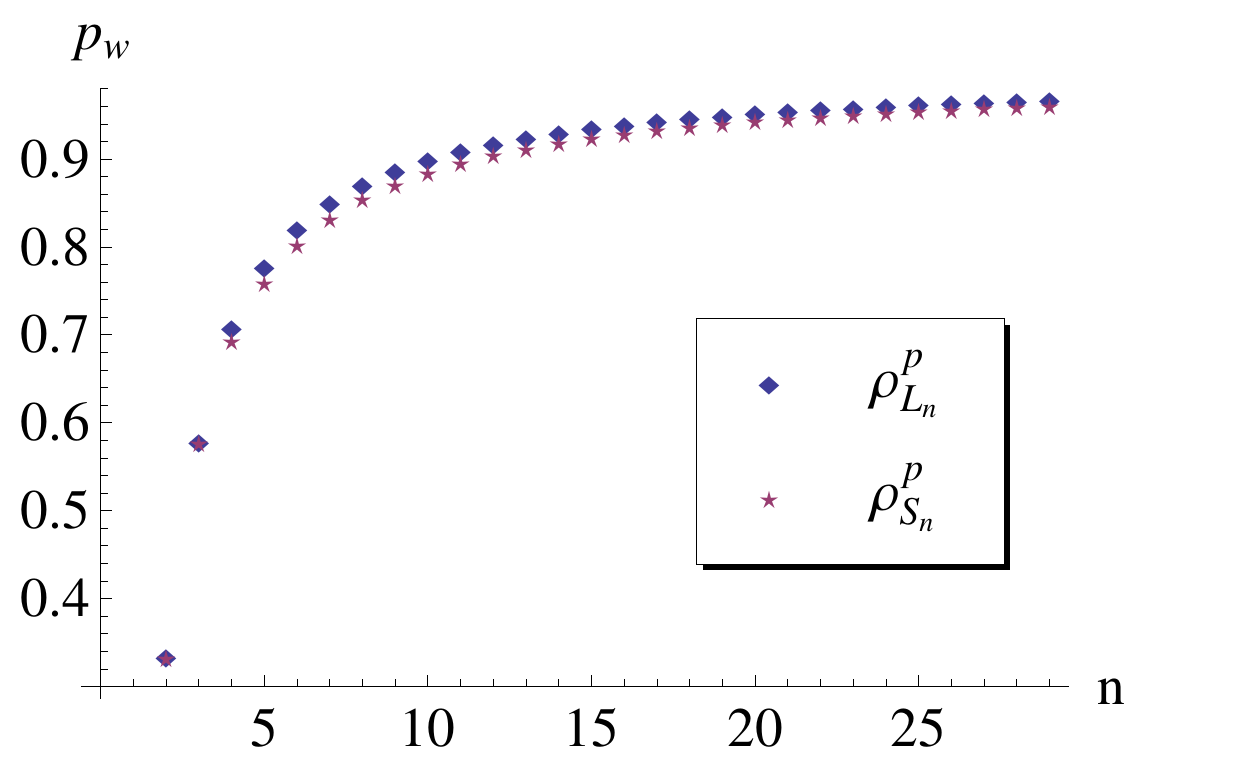}\caption{(Color
online) Probability $p_{w}$ for the randomized star graph state $\rho_{S_{n}%
}^{p}$ and the randomized 1-$d$ cluster state $\rho_{L_{n}}^{p}$.}%
\label{fig::pw_of_simple_RG_graphs}%
\end{figure}

The nonapproximated values $p_{w}$ of the RG cycle state $\rho_{C_{n}}^{p}$
can also be computed numerically by the use of the algorithm developed in Ref.
\cite{XchainsRef}, which will be compared with approximated values in Fig.
\ref{fig::pF_approximation_accuracy} in the next section.

It is worth mentioning that, as expected, $p_{w}$ increases rapidly as the
number of vertices increases. From an experimental point of view, this means
that the more edges one creates, the higher gate quality is required to
guarantee the presence of GME in the final state.
\par
In the following we will follow a different approach, namely we will approximate the witness neglecting all contributions of subgraphs too ``different" from the generating one. This approximation holds whenever the randomness parameter $p$ is high enough.

\subsection{Approximated witness}
\label{sec::upper bound of critical randomness for general graphs}

Due to the structure of a general RG state,
the computation of the scalar product of the pure graph state with all spanning subgraph states turns out to be too complex.
Therefore, we introduce an approximation of the randomization overlap $L(\rho_{G}^{p})$,
that defines the expectation value $W_{G}$. Here we define the $l$\emph{-level
approximation} of a randomization overlap by dropping its subgraph components
$F^{(>l)}$ which differ from $G$ by more than $l$ edges, i.e.,%

\begin{equation}
L_{\mathcal{F}^{(\leq l)}}\left( \rho_{G}^{p}\right) :=\text{Tr}(|G\rangle
\langle G|\rho_{\mathcal{F}^{(\leq l)}}),
\end{equation}
where $\rho_{\mathcal{F}^{(\leq l)}}$ is defined as
\begin{equation}
\rho_{\mathcal{F}^{(\leq l)}}=\sum_{F\text{ s.t. } |E_{F\Delta G}|\leq
l}p^{|E_{F}|}(1-p)^{|E_{G}\backslash E_{F}|}|F\rangle\langle F|.
\end{equation}

The $l$-level approximated witness then reads
\begin{equation}
I_{\mathcal{F}^{(\leq l)}}(\rho_{G}^{p}):=\frac{1}{2}-L_{\mathcal{F}^{(\leq
l)}}\left(  \rho_{G}^{p}\right)  .\label{eq::witness_for_RG_state_low_edges_l}%
\end{equation}
The proof of the next statement is in Appendix
\ref{sec::proof_of_approximation_of_GME_witness}.

\begin{proposition}
\label{prop::monotonicity_of_approximated_witness} The $l$-level approximated
randomization overlap $L_{\mathcal{F}^{(\leq l)}}(\rho_{G}^{p})$ is
monotonically increasing with respect to the randomness $p\geq1/2$ for all
$l\leq|E_{G}|/2$.
\end{proposition}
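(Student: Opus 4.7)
My plan is to prove the claim by directly examining $dL_{\mathcal{F}^{(\leq l)}}(\rho_G^p)/dp$ and reducing it to a combinatorial coefficient inequality. Writing $m=|E_G|$ and $b_j := \sum_{H\subseteq G,\,|E_H|=j}|\langle G^\emptyset|H\rangle|^2 \ge 0$, I have $L_{\mathcal{F}^{(\leq l)}}(\rho_G^p)=\sum_{j=0}^l b_j\,p^{m-j}(1-p)^j$. Differentiating term-by-term and shifting the summation index in the ``$-j$'' piece (an Abel-type rearrangement), I obtain
\begin{equation*}
\frac{dL_{\mathcal{F}^{(\leq l)}}}{dp} = (m-l)\,b_l\,p^{m-l-1}(1-p)^l + \sum_{k=0}^{l-1}\bigl[(m-k)b_k - (k+1)b_{k+1}\bigr]\,p^{m-k-1}(1-p)^k .
\end{equation*}
The boundary term is manifestly non-negative, and all factors $p^{m-k-1}(1-p)^k$ are non-negative on $[0,1]$, so it suffices to establish the coefficient inequalities $(m-k)b_k \ge (k+1)b_{k+1}$ for every $k \in \{0,\dots,l-1\}$.

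Rewriting this as $b_k/\binom{m}{k} \ge b_{k+1}/\binom{m}{k+1}$, its content is that the mean overlap $\bar\phi(k) := b_k/\binom{m}{k}$, i.e.\ the average of $|\langle G^\emptyset|H\rangle|^2$ over $H$ drawn uniformly from the $k$-edge spanning subgraphs of $G$, is non-increasing in $k$ whenever $k+1 \le m/2$. My route to this is the double-counting identity
\begin{equation*}
(m-k)b_k - (k+1)b_{k+1} = \sum_{|E_H|=k}\sum_{e \in E_G\setminus E_H}\bigl[\phi_H - \phi_{H\cup\{e\}}\bigr], \qquad \phi_H := |\langle G^\emptyset|H\rangle|^2,
\end{equation*}
which follows because each $(k{+}1)$-edge subgraph is hit in exactly $k+1$ ways as $H\cup\{e\}$. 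Expanding $\phi_H = 4^{-n}|\sum_x(-1)^{Q_H(x)}|^2$ (with $Q_H$ the quadratic form associated to $H$) and splitting the $x$-sum according to the value of $x_a x_b$ (where $e=\{a,b\}$) rewrites each summand as $4\,A_e(H)\,B_e(H)$, a product of two partial sums of the sign pattern $(-1)^{Q_H(x)}$ over the complementary $x$-sectors $\{x_ax_b=0\}$ and $\{x_ax_b=1\}$.

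The genuinely hard step will be showing that the resulting combined sum is non-negative when $k+1 \le m/2$. My plan here is to pass to the Krawtchouk representation $b_j = 4^{-n}\sum_{x,y}K_j(B(x,y);m)$, where $B(x,y) := |\{e=\{a,b\}\in E_G : x_ax_b \not\equiv y_ay_b \pmod 2\}|$, and apply the Krawtchouk identity $(m-k)K_k(\beta;m) - (k+1)K_{k+1}(\beta;m) = 2\beta\,K_k(\beta-1;m-1)$. This reduces the required positivity to $\sum_{x,y} B(x,y)\,K_k\!\left(B(x,y)-1;\,m-1\right) \ge 0$. Since $K_k(\cdot;m-1)$ has degree $k < m/2$, I expect this to follow from exploiting the specific structure of the distribution of $B(X,Y)$ over uniform $(X,Y)$ induced by $G$; this is the main obstacle I foresee. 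As a side observation, once $\bar\phi$ is shown to be non-increasing on $\{0,\dots,\lfloor m/2\rfloor\}$, a stochastic-dominance argument based on the representation $L_{\mathcal{F}^{(\leq l)}}(p) = \mathbb{E}[\bar\phi(|H|)\mathbf{1}\{|H|\leq l\}]$ with $|H| \sim \mathrm{Bin}(m,1-p)$ actually extends monotonicity to all $p \in [0,1]$, strengthening the stated hypothesis $p \ge 1/2$.
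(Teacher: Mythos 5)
Your derivative computation, the Abel-type rearrangement, and the reduction to the coefficient inequalities $(m-k)b_k \ge (k+1)b_{k+1}$ for $k=0,\dots,l-1$ are all correct, and the Krawtchouk representation $b_j = 4^{-n}\sum_{x,y}K_j(B(x,y);m)$ is a legitimate reformulation. But the proposal is not a proof: the decisive step --- the positivity $(m-k)b_k-(k+1)b_{k+1}\ge 0$, equivalently $\sum_{x,y}B(x,y)\,K_k(B(x,y)-1;m-1)\ge 0$ --- is precisely the point you flag as ``the main obstacle I foresee,'' and no argument for it is supplied. It genuinely cannot be obtained termwise from your double-counting identity, because adding an edge can \emph{increase} the overlap with the empty graph (closing the path $L_4$ into the cycle $C_4$ raises $|\langle G^{\emptyset}|H\rangle|^2$ from $1/16$ to $1/4$), so a global cancellation exploiting the structure of $G$ and the constraint $k+1\le m/2$ is unavoidable. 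Note also that your sufficient condition discards the hypothesis $p\ge 1/2$ entirely and would yield monotonicity on all of $[0,1]$: you are attempting to prove something strictly stronger than the proposition, and it is neither established nor obvious that this stronger statement holds for every graph.

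For comparison, the paper's route is built to consume the hypotheses $p\ge 1/2$ and $l\le |E_G|/2$ directly: it groups the levels $k$ by the value of the averaged overlap $\lambda_k=b_k/\binom{m}{k}$, Abel-sums against the partial sums $f(\kappa)=\sum_{k\le l,\,k\notin\kappa}\binom{m}{k}p^{m-k}(1-p)^{k}$, and invokes (without detailed verification) the monotonicity in $p$ of these partial binomial sums, so that no inequality between consecutive $\lambda_k$ is ever needed. That shifts the burden from a combinatorial statement about the overlaps (your route) to an analytic statement about binomial partial sums (the paper's route). Your reduction is more explicit about where the real difficulty sits, and the log-concavity-type inequality for the $b_k$, if proven, would strengthen the proposition; but as written the central step is missing, so the argument does not establish the claim.
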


A good approximation, when $p$ is close enough to $1$, consists in neglecting the subgraphs $\mathcal{F}^{(>2)}$ that differ from $G$ by more than two edges.
This corresponds to a reduced RG state of $|G\rangle$
where only the most relevant subgraphs appear. The following theorem states
that instead of using the full randomization overlap $L(\rho_{G}^{p})$ in the
GME witness, we can focus just on $L_{\mathcal{F}^{(\leq2)}}\left(  \rho
_{G}^{p}\right)  $ with the advantage to make the calculation easier.

\begin{theorem}
[Approximated GME witness]\label{theorem::approximated_GME_witness} Let $G$ be
a graph and $d_{v}$ be the degree of a vertex $v$. The quantity
$L_{\mathcal{F}^{(\leq2)}}\left(  \rho_{G}^{p}\right)  $ is a lower bound for
the randomization overlap $L\left(  \rho_{G}^{p}\right)  $, namely
\begin{align}
L\left(  \rho_{G}^{p}\right)  \geq &  L_{\mathcal{F}^{(\leq2)}}\left(
\rho_{G}^{p}\right)  \label{eq::2-level_approximated_randomization_overlap}\\
= &  p^{\left\vert E_{G}\right\vert }+\frac{1}{4}\left(  1-p\right)
p^{\left\vert E_{G}\right\vert -1}\left\vert E_{G}\right\vert \nonumber\\
&  +\frac{1}{2^{4}}\left(  1-p\right)  ^{2}p^{\left\vert E_{G}\right\vert
-2}\left[  \binom{\left\vert E_{G}\right\vert }{2}+3\sum_{v\in V_{G}}%
\binom{d_{v}}{2}\right]  .\nonumber
\end{align}
For $p\gg1/2$, $L\left(  \rho_{G}^{p}\right)  \simeq L_{\mathcal{F}^{(\leq2)}%
}\left(  \rho_{G}^{p}\right)  $. The following quantity can be regarded as
a GME witness for $\rho_{G}^{p}$.
\begin{equation}
I_{\mathcal{F}^{(\leq2)}}(\rho_{G}^{p}):=\frac{1}{2}-L_{\mathcal{F}^{(\leq2)}%
}\left(  \rho_{G}^{p}\right)  \label{eq::witness_for_RG_state_low_edges_2}%
\end{equation}
If $I_{\mathcal{F}^{(\leq2)}}\left(  \rho_{G}^{p}\right)  < 0$, it is then
guaranteed that the RG state $\rho_{G}^{p}$ is genuinely multipartite entangled.
\end{theorem}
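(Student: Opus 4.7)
The plan is to expand $L(\rho_G^p)$ as a sum over spanning subgraphs $F$ of $G$, graded by $k := |E_G \setminus E_F| = |E_{G \Delta F}|$, to notice that every summand is manifestly non-negative, and then to evaluate the three lowest-order contributions in closed form. Concretely, writing
\begin{equation}
L(\rho_G^p) = \sum_{k=0}^{|E_G|} p^{|E_G|-k}(1-p)^{k} \sum_{\substack{F \text{ spans } G \\ |E_{G\Delta F}| = k}} |\langle G | F \rangle|^2,
\end{equation}
the inequality $L(\rho_G^p) \geq L_{\mathcal{F}^{(\leq 2)}}(\rho_G^p)$ is immediate from the non-negativity of each term, so truncation to $k \leq 2$ is a legal lower bound.

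Next I would evaluate the three leading contributions using the identity $|\langle G | F \rangle|^2 = |\langle G^{\emptyset} | G \Delta F \rangle|^2$ together with the tensor-product structure of graph states over connected components, so that isolated vertices of $G \Delta F$ contribute a trivial factor of $1$. The $k=0$ term is the single subgraph $F = G$ with overlap $1$, contributing $p^{|E_G|}$. For $k=1$, there are $|E_G|$ subgraphs, each of which yields $G \Delta F$ isomorphic (up to isolated vertices) to the single edge $L_2$, so by Table~I the squared overlap is $1/4$ and the total contribution is $\tfrac14 |E_G|\, p^{|E_G|-1}(1-p)$. For $k=2$, I would split the $\binom{|E_G|}{2}$ pairs of deleted edges into those sharing an endpoint and those that are disjoint. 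A standard handshake argument counts the first type by $\sum_v \binom{d_v}{2}$, and these give $G \Delta F \cong L_3$ with squared overlap $1/4$. The remaining disjoint pairs give $G \Delta F$ equal to two disconnected single edges, and the squared overlap factorizes as $(1/4)^2 = 1/16$. Summing the two subcases produces exactly the bracket $\binom{|E_G|}{2} + 3\sum_v \binom{d_v}{2}$ multiplied by $(1-p)^2 p^{|E_G|-2}/16$, as claimed.

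The GME-witness claim then follows immediately: the inequality $L(\rho_G^p) \geq L_{\mathcal{F}^{(\leq 2)}}(\rho_G^p)$ yields
\begin{equation}
\text{Tr}(W_G \rho_G^p) = \frac{1}{2} - L(\rho_G^p) \leq \frac{1}{2} - L_{\mathcal{F}^{(\leq 2)}}(\rho_G^p) = I_{\mathcal{F}^{(\leq 2)}}(\rho_G^p),
\end{equation}
and since $W_G$ is already a projector-based GME witness for the pure graph state by the cited construction, negativity of $I_{\mathcal{F}^{(\leq 2)}}$ forces $\text{Tr}(W_G \rho_G^p) < 0$ and thus certifies genuine multipartite entanglement of $\rho_G^p$. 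The asymptotic statement $L \simeq L_{\mathcal{F}^{(\leq 2)}}$ for $p \gg 1/2$ then amounts to the observation that the omitted terms carry prefactors $(1-p)^k$ with $k \geq 3$ and are therefore subleading.

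The only non-routine part is the combinatorial bookkeeping at $k=2$: one must classify the pairs of deleted edges by incidence geometry, verify that the handshake count $\sum_v \binom{d_v}{2}$ enumerates the shared-vertex pairs without double counting (each such pair being picked up exactly once at its unique common vertex), and invoke the multiplicativity of the overlap across the connected components of $G \Delta F$ to justify the product value $1/16$ for disjoint pairs. The $k=0$ and $k=1$ terms, and the passage from the inequality to the witness property, are direct.
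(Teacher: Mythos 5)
Your proposal is correct and follows essentially the same route as the paper's proof in Appendix C: rewrite each overlap as $|\langle G^{\emptyset}|G\Delta F\rangle|^{2}$, truncate the manifestly non-negative sum at $|E_{G\Delta F}|\leq 2$, classify the two-edge symmetric differences into the two isomorphism classes ($S_{3}$ versus two disjoint edges) with cardinalities $\sum_{v}\binom{d_{v}}{2}$ and $\binom{|E_{G}|}{2}-\sum_{v}\binom{d_{v}}{2}$, and insert the overlaps $1/4$ and $1/16$ from Table I. Your derivation of the witness property, $\Tr(W_{G}\rho_{G}^{p})\leq I_{\mathcal{F}^{(\leq 2)}}(\rho_{G}^{p})$, also matches the paper's argument (and states the inequality in the correct direction).
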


See Appendix \ref{sec::proof_of_approximation_of_GME_witness} for a proof.
Notice that the value of the randomness parameter $p_{\mathcal{F}}$ that makes
$I_{\mathcal{F}^{(\leq2)}}(\rho_{G}^{p})$ vanishing is still an upper bound of
the critical randomness $p_{c}$ for the RG state $\rho_{G}^{p}$. Notice that
by construction the following chain of inequalities holds $p_{c}\leq p_{w}\leq
p_{\mathcal{F}}$. Furthermore, according to Proposition \ref{prop::monotonicity_of_approximated_witness}, the witness $I_{\mathcal{F}^{(\leq2)}}(\rho_{G}^{p})$ is monotonically decreasing as a
function of $p$. Hence whenever $p>p_{\mathcal{F}}$ the RG state $\rho
_{G}^{p}$ shows GME.

\begin{figure}[t]
\centering \begin{minipage}{\linewidth}
\includegraphics[width=0.8\columnwidth]{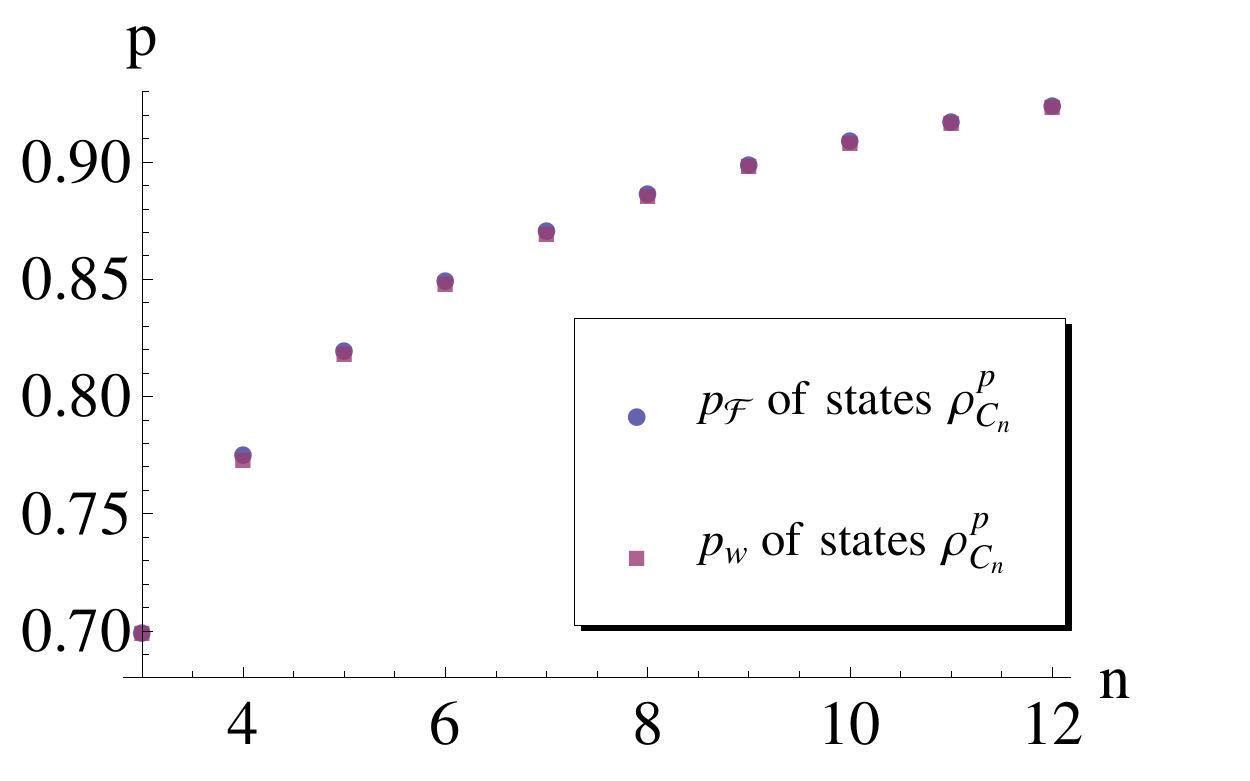}\\
\end{minipage}
\begin{minipage}{\linewidth}
\includegraphics[width=0.8\columnwidth]{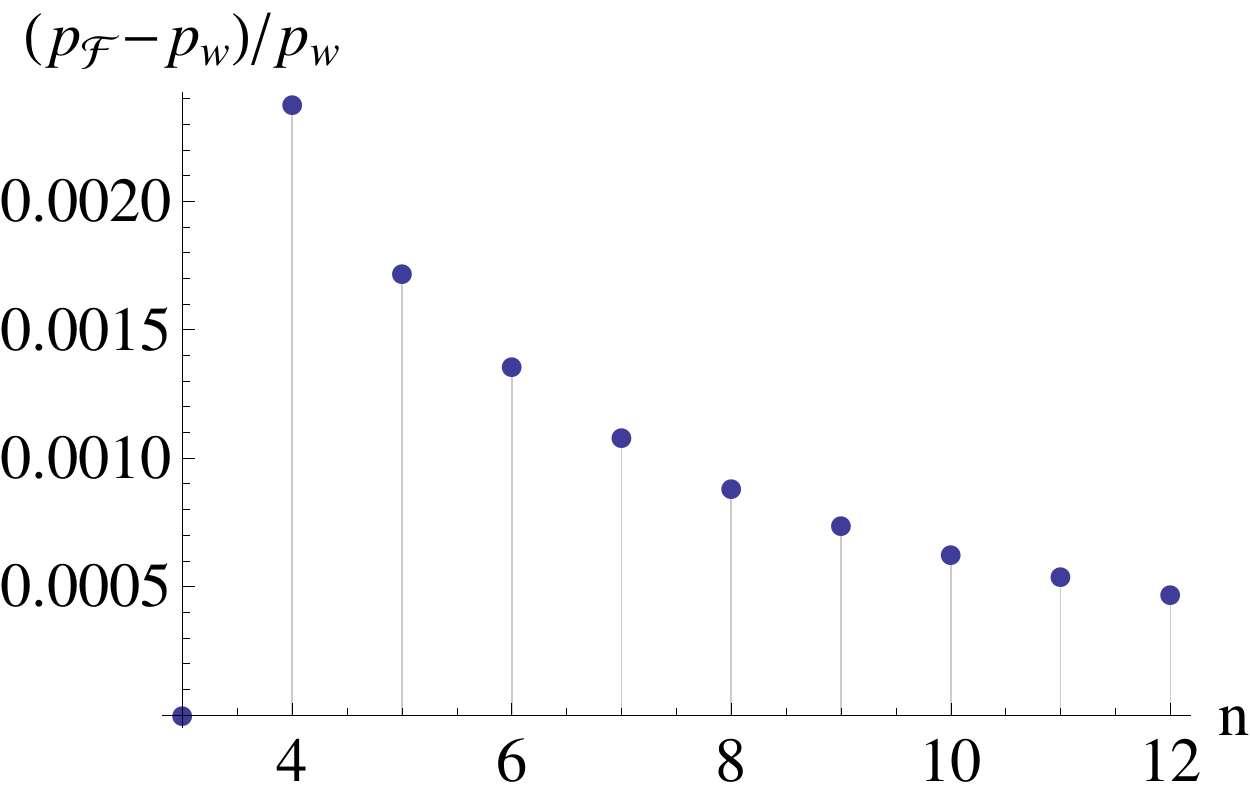}\\
\end{minipage}
\caption{(Color online) Accuracy of the approximated GME witness
$L_{\mathcal{F}^{(\leq2)}}( \rho_{C_{n}}^{p}) $ for the cycle RG graph
$\rho_{C_{n}}^{p}$. The parameter for comparison is the threshold probability
$p_{w}$, calculated according to algorithm explained in Ref. \cite{XchainsRef}%
.}%
\label{fig::pF_approximation_accuracy}%
\end{figure}

By employing this theorem one can detect GME even for a graph with relatively
many edges, however a study about how well the approximated witness performs
is now needed. In order to check the accuracy of our approximation, we
consider as an example the cycle RG graph $\rho_{C_{n}}^{p}$ and plot the
relative difference between $p_{\mathcal{F}}$ and $p_{w}$. As we can see in
Fig. \ref{fig::pF_approximation_accuracy}, for $n=3$ the value of
$p_{\mathcal{F}}$ equals $p_{w}$, while for higher $n$ the approximation
becomes more and more accurate as the number of vertices increases. Note that
the equality for $n=3$ results from the fact that the single neglected
contribution $\mathrm{Tr}[|G^{\emptyset}\rangle\langle G^{\emptyset}%
|C_{3}\rangle\langle C_{3}|]$ in $L_{\mathcal{F}^{(\leq2)}}(\rho_{C_{3}}^{p})$
is equal to zero.

\begin{figure}[h]
\centering
\includegraphics[width=\columnwidth]{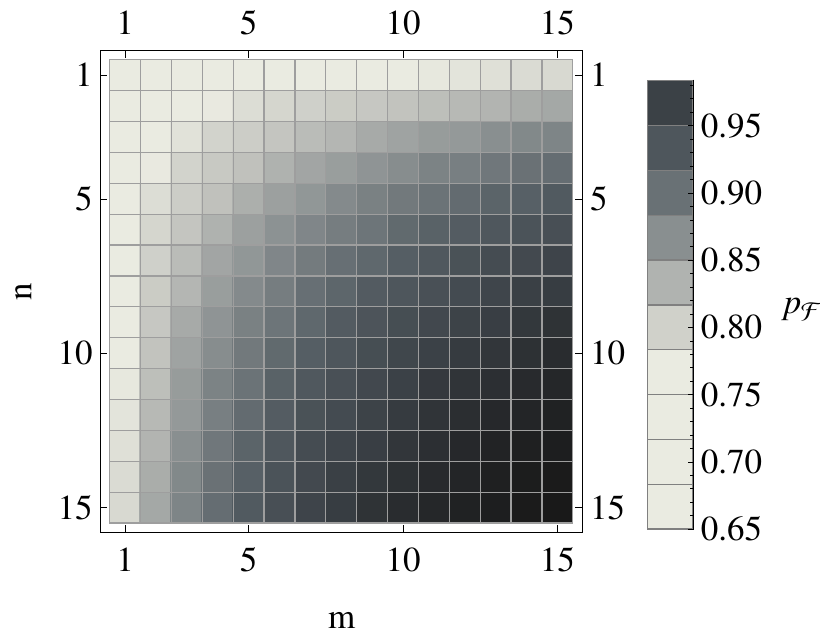}\caption{Threshold
probability $p_{\mathcal{F}}$ for randomized 2D cluster states
$\rho_{L_{{n\times m}}^{p}}=R_{p}(|L_{n\times m}\rangle)$. Here, $m$ and $n$
represent the number of vertices along the $x$ and $y$ axes of the 2D
cluster, respectively. The quantity $p_{\mathcal{F}}$ is depicted as a map in
a $(m,n)$ grid.}%
\label{fig::pF_2d_cl_2edges}%
\end{figure}

\begin{figure}[h]
\centering
\includegraphics[width=.9\columnwidth]{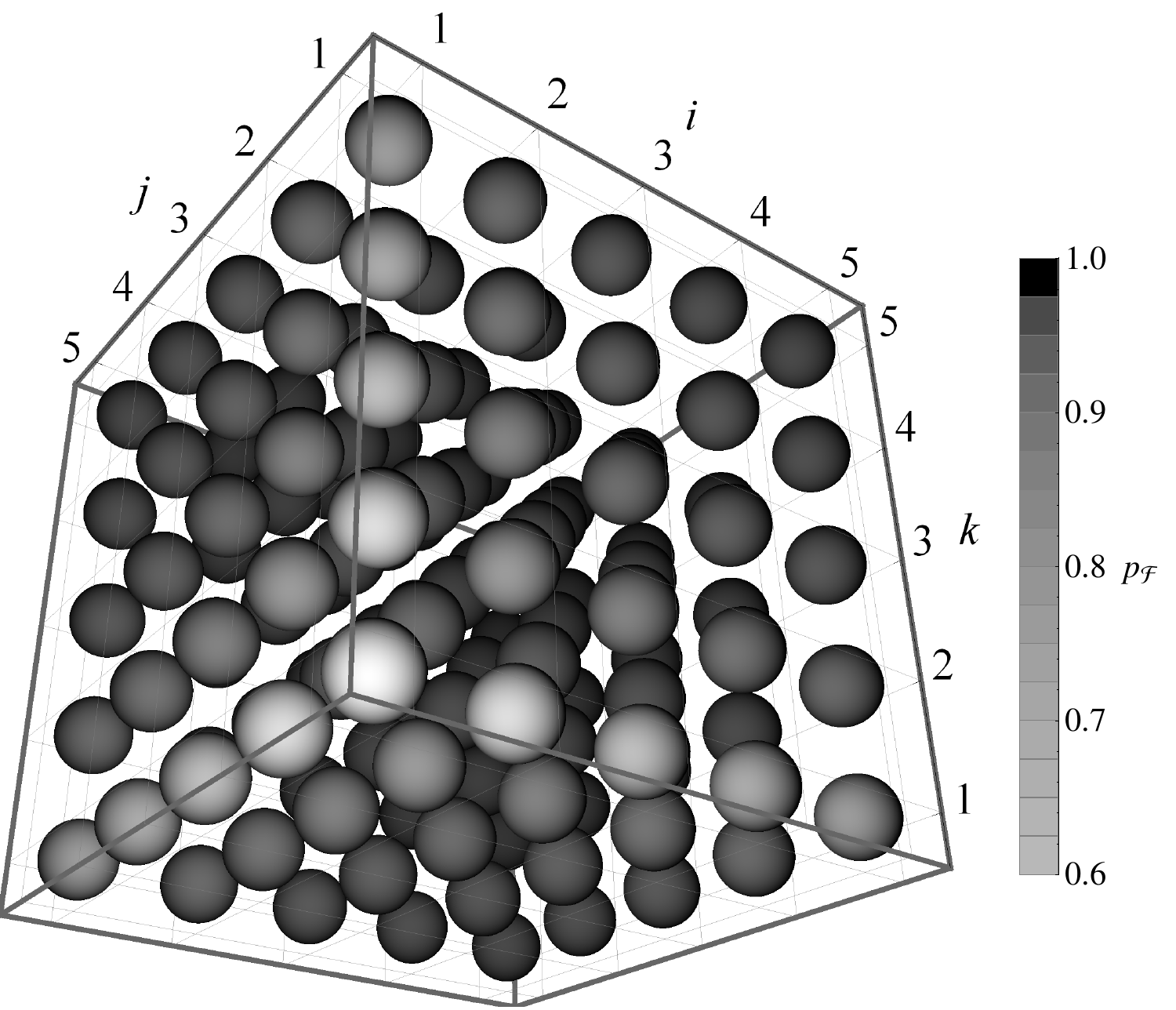}\caption{Threshold
probability $p_{\mathcal{F}}$ for randomized 3D cluster states
$\rho_{L_{i\times j\times k}}^{p}=R_{p}(|L_{i\times j \times k}\rangle)$. The
indices $i$, $j$ and $k$ represent the number of vertices along the $x$, $y$,
and $z$ axes of the 3D cluster, respectively. The quantity $p_{\mathcal{F}%
}$ is depicted in grayscale in a $(i,j,k)$ grid.}%
\label{fig::pF_3d_cl_2edges}%
\end{figure}

In order to show the quality of our approximation we consider here other
relevant RG states, that is randomized 2D and 3D cluster states. For
these states we plot the approximated $p_{\mathcal{F}}$ in Figs.
\ref{fig::pF_2d_cl_2edges} and \ref{fig::pF_3d_cl_2edges}, as a function of
the number of vertices along each direction of the cluster. As we can see in
Fig. \ref{fig::pF_2d_cl_2edges}, $p_{\mathcal{F}}$ for the two-dimensional RG
state $\rho_{L_{m\times n}}^{p}$ increases as the sum $m+n$ grows, where $m$
and $n$ are the number of vertices along the $x$ and $y$ axes, respectively.
It also turns out that the values of $p_{\mathcal{F}}$ for two RG cluster
states $\rho_{L_{m_{1}\times n_{1}}}^{p}$ and $\rho_{L_{m_{2}\times n_{2}}%
}^{p}$ are very close to each other whenever $m_{1}+n_{1}=m_{2}+n_{2}$ . The
same arguments hold also for the three-dimensional randomized cluster state
(see Fig. \ref{fig::pF_3d_cl_2edges}).

\par

Notice that the approximated witness given in Eq. \eqref{eq::witness_for_RG_state_low_edges_2} can be exploited to obtain a value of the randomness parameter $p$ above which the RG state shows GME. Vice versa, if we have at disposal only CZ gates with a fixed parameter $p$, we can then use the estimates given by the witness to find out possible multipartite entangled RG states one could create (see Figs. \ref{fig::pw_of_simple_RG_graphs} and \ref{fig::pF_approximation_accuracy}).

\section{Bell inequalities}
\label{sec:bell}

In this section, we investigate when RG states cannot be
described in terms of \emph{local hidden variable} (LHV) models
\cite{bell1964einstein, mermin1990extreme, greenberger1990bell}. Any LHV model
has to fulfill the constraints of realism and locality. These two facts result
in bounds on the strength of correlations, which can be formally captured in
terms of Bell inequalities\cite{bell1964einstein}. A violation of such an
inequality excludes the description of the correlations in terms of an LHV
model \cite{mermin1990extreme, greenberger1990bell}. We will show that RG
states violate Bell inequalities developed for pure graph states, whenever the
randomization parameter $p$ is high enough. In order to do so we review the
stabilizer description of graph states \cite{hein2006entanglement}.

Given a graph $G$, we can associate to each vertex $i$ a
\emph{stabilizing operator} $g_{i}$ as follows
\begin{equation}
g_{i}=X^{(i)}\bigotimes_{j\in N(i)}Z^{(j)}.
\end{equation}
where $N(i)$ is the neighborhood of the vertex $i$, i.e., the set of
vertices connected to $i$. Here, $X^{(i)},Y^{(i)},Z^{(i)}$ denote the Pauli
matrices $\sigma_{x},\sigma_{y},\sigma_{z},$ acting on the $i$-th qubit. The
graph state $|G\rangle$ associated with the graph $G$ is the unique $n$-qubit
state fulfilling
\begin{equation}
g_{i}|G\rangle=|G\rangle,\mbox{ for }i=1,...,n.
\end{equation}
The $n$ operators $g_{i}$ turn out to be the generators of a group, called
stabilizer and denoted by $S(G)$. The group $S(G)$ can be shown to be Abelian
and is composed of $2^{n}$ elements $s_{j}$. By this definition it
straightforwardly follows that $\langle G|s_{j}|G\rangle=1$ for any
$j=1,...,2^{n}$. As any $s_{j}$ can be expressed as a product of $n$
dichotomic local observables, we can thus define the following Bell operator
\cite{guhne2005bell}:
\begin{equation}
\mathcal{B}(G)=\frac{1}{2^{n}}\sum_{j=1}^{2^{n}}s_{j}.
\end{equation}
Furthermore since a graph state is a product of projectors of
its stabilizer generators, i.e., $|G\rangle\langle G|=\prod
_{i}(\mathbbm{1}+g_{i})/2=\mathcal{B}(G)$, the expectation value of
$\langle\mathcal{B}(G)\rangle$ reaches its maximum value $1$ only for the
state $|G\rangle$. By defining the quantity
\begin{equation}
\mathcal{D}(G)=\max_{\text{LHV}}|\langle\mathcal{B}(G)\rangle|,
\end{equation}
where the maximum is taken over all LHV models, equivalently
taken over all possible expectation values of local observables $\langle
X^{(i)}\rangle$, $\langle Y^{(i)}\rangle$, $\langle Z^{(i)}\rangle$ within
$\{-1,+1\}$, we then have the following Bell inequality \cite{guhne2005bell}
\begin{equation}
\langle\mathcal{B}(G)\rangle\leq\mathcal{D}(G).\label{bell_in}%
\end{equation}
As a straightforward consequence, given the graph state $|G\rangle$, we are
guaranteed that it cannot be described by a LHV model whenever $\mathcal{D}(G)<1$.

For our purpose it is more convenient to rephrase the Bell inequality
\eqref{bell_in} in terms of a detection operator. Keeping in mind that the
Bell operator $\mathcal{B}(G)$ is exactly the projector $|G\rangle\langle G|
$, the following witness operator can be found \cite{guhne2005bell,
HyllusGBLewenstein2005-07}
\begin{equation}
W_{\text{LHV}}=\mathcal{D}(G)\mathbbm{1}-|G\rangle\langle G|.
\end{equation}
Hence, whenever $\Tr[W_{\text{LHV}}\rho]<0$, i.e., the expectation value
of $W_{\text{LHV}}$ on the quantum state $\rho$ is negative, the state $\rho$
violates local realism, and thus cannot be described by LHV models.
Note that the witness $W_{\text{LHV}}$ is similar to the witness
for GME of Eq. \eqref{eq.::projector_witness}. They indeed differ only in the
value of the coefficient of the identity operator. Notice furthermore that the
approximation techniques developed so far apply here too, allowing us to
proceed as in Eq. \eqref{eq::witness_for_RG_state_low_edges_l} in the previous
section, i.e.,%

\begin{equation}
I^{(\leq l)}_{\mathrm{LHV}}(\rho_{G}^{p}):=\mathcal{D}(G)-L_{\mathcal{F}%
^{(\leq l)}}\left(  \rho_{G}^{p}\right)
.\label{eq::Bell_ineq_for_RG_state_low_edges_l}%
\end{equation}

In \cite{guhne2005bell}, the quantity $\mathcal{D}(G)$ has been
calculated for different graphs with number of qubits $n$ up to $10$. Our
analysis consists of calculating the approximated threshold $p_{\text{LHV}%
}^{l\leq2}$ for a given graph state $|G\rangle$, such that $I_{\mathrm{LHV}%
}^{(\leq2)}(\rho_{G}^{p_{\text{LHV}}})=0$. Since $I_{\mathrm{LHV}}^{(\leq
2)}(\rho_{G}^{p})$ is monotonically decreasing with respect to $p$ for $p>1/2$
(see Proposition \ref{prop::monotonicity_of_approximated_witness}), any
randomness parameter $p>p_{\text{LHV}}^{(l\leq2)}$ will then lead to a RG
state that cannot be described in terms of a LHV model.

In Fig. \ref{fig::p_LHV_Bell_ineq}, we show the achieved result for several important
RG states. In this figure one can see that the classical bounds $\mathcal{D}(G)$ are crucial for the behavior of $p_{\text{LHV}}$. For a given type of graph, since the classical bound $\mathcal{D}(G)$ is decreasing with respect to the number of vertices $n$, the threshold $p_{\text{LHV}}$ is not monotonically increasing with respect to $n$. The ordering of $p_{\text{LHV}}$ among different types of graphs can be explained via the ordering of $\mathcal{D}(G)$. For $n\leq5$, $\mathcal{D}(C_n)=\mathcal{D}(L_n)=\mathcal{D}(S_n)$ holds. Therefore $p_{\text{LHV}}(C_n)>p_{\text{LHV}}(L_n)>p_{\text{LHV}}(S_n)$ has the same ordering as the threshold $p_{\text{GME}}$ for GME; see Figs. \ref{fig::pw_of_simple_RG_graphs} and \ref{fig::pF_approximation_accuracy}. For $n>5$, the ordering of the threshold values $p_{\text{LHV}}(S_n)>p_{\text{LHV}}(L_n)>p_{\text{LHV}}(C_n)$ reflects the ordering of the classical bounds for the different types of graphs, i.e., $\mathcal{D}(S_n)>\mathcal{D}(L_n)>\mathcal{D}(C_n)$. For larger $n$, we observe that the nonlocality of the randomized star graph states is fragile with respect to our noise model. This is analogous to the noise resistance of GME for star graph states. The fragility of GME states for other noise models has been investigated in \cite{AliGuhne2014-02}.

\par

Similar to the previous section, we can use the results provided by $I_{LHV}$ of Eq. \eqref{eq::Bell_ineq_for_RG_state_low_edges_l} in order to generate nonlocal multiqubit states by using only CZ gates with a given success probability $p$. For instance, if we have CZ gates with success probability $p=0.84$, we can then create a nonlocal six-qubit system via generating a six-qubit randomized cycle graph state by subsequently connecting the six qubits using solely the CZ gates at disposal.

\begin{figure}[ptb]
\centering
\includegraphics[width=0.9\linewidth]{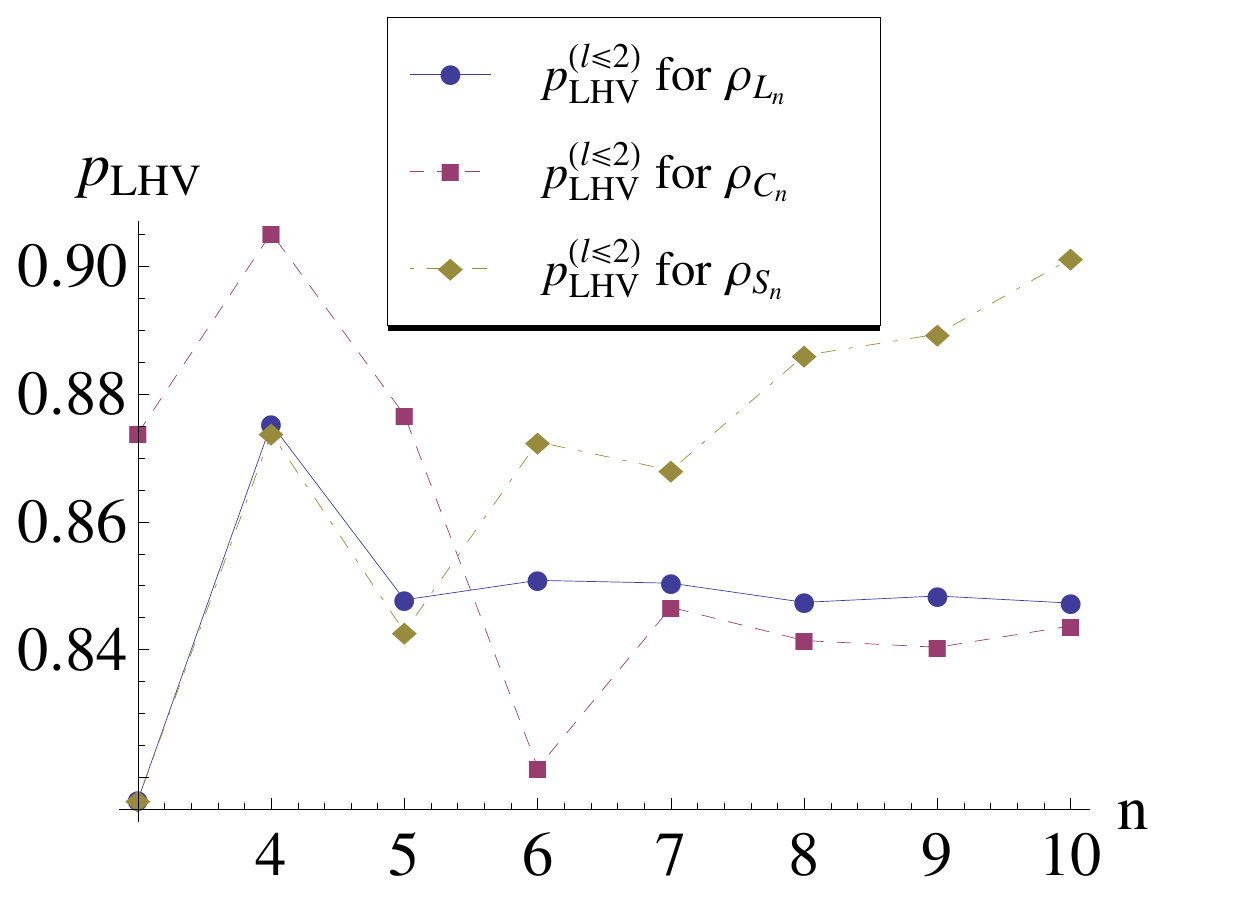}\caption{(Color
online) The probability thresholds $p^{(l \leq2)}_{\mathrm{LHV}%
}$ for some important RG states. These thresholds are the zero-crossings of
Eq. \eqref{eq::Bell_ineq_for_RG_state_low_edges_l}. Due to the complexity of the calculation of the classical bounds $\mathcal{D}(G)$, only the thresholds for the states up to $10$ qubits are analyzed . The behavior of $p_{\text{LHV}}^{(l\leq2)}$ is explained at the end of the section \ref{sec:bell}. } %
\label{fig::p_LHV_Bell_ineq}%
\end{figure}

\section{Conclusions}

\label{sec::conclusion} %Topic
In this paper, we introduced a class of $n$-qubit mixed states that we
called randomized graph (RG) states because they can be derived from pure
graph states by applying a randomization procedure. They represent a quantum
analog of random graphs. These states can also be regarded as the
resulting states in an imperfect graph state generation procedure
\cite{lim2005repeat, lim2006repeat,beige2007repeat}. We studied
in particular the entanglement properties of such states and it turned out
that their entanglement classification is quite different from the one for
graph states.
We investigated whether local unitary (LU)equivalence of pure graph states implies
LU equivalence of their randomized version, and answered this question in
a negative way. Although the presence of a randomized edge
guarantees bipartite entanglement between the two parties that are linked by
the edge, the bipartite entanglement of RG states is more fragile under the
action of local measurements with respect to the one of their corresponding
graph states. We investigated this aspect by evaluating the connectedness
and persistency of RG states.
%This is exhibited by their maximally connectedness and persistency.
We then studied the multipartite entanglement properties of RG states. Due
to the fact that these multi-qubit states are mixed, we could evaluate the
multipartite entanglement content only in some particular cases, namely for
states up to four qubits. In such cases we
could show that multipartite entanglement exhibits a monotonic behavior as
a function of the randomness parameter $p$, while it is still an open
problem whether the entanglement of a general RG state grows monotonically
with $p$. In the general case we could define a critical value $p_c$ for the
randomness parameter above which the RG states are guaranteed to be
multipartite entangled by employing suitable multipartite entanglement
witnesses.
The threshold $p_c$ also provides an estimate of how much noise the CZ gates can be in order to guarantee GME in the generated state.
Furthermore, the same approach was exploited
to study the possibility to describe such RG states in terms of local hidden variable (LHV) models.
Again, we could find a critical probability $p_\text{LHV}$ above which the
quantum state surely violates a Bell inequality.
The threshold $p_{LHV}$ also gives a hint regarding which kind of nonlocal multi-qubit states can be created by using solely controlled-Z gates with a given success probability.

%possible applications
We point out that RG states have possible applications in measurement based
quantum computation, quantum key distribution, quantum networks, etc.
Since RG states are derived by the use of imperfect controlled-Z gates, which is unavoidable in a laboratory, it is more natural to consider these states instead of pure graph states in the quantum information processing task one wants to pursue.

\par
As an outlook, the emergence of giant components of RG states and the properties of RG states in the asymptotic limit $n\rightarrow\infty$ are interesting theoretical topics that deserve further investigation.
Other interesting questions that still need to be addressed are for example
the possibility of identifying a Hamiltonian which has a RG state as
eigenstate, or the possibility of designing a protocol to herald the
components of a RG state, such that one can perform a preselection of the RG
state to extract certain subgraph states from it.

\begin{acknowledgments}
  We thank O.G\"uhne for useful discussions. J.W., D.B. and H.K. were financially supported by DFG and SFF of Heinrich-Heine-University D\"usseldorf. K.L.C. was supported by the National Research Foundation \& Ministry of Education, Singapore. S.S. was supported by the Royal Society.
\end{acknowledgments}

%%%%%%%%%%%%%%%%%%%%%%%%%%%%%%%%%%%%%%%%%%%%%%%%%%%%%%%%%%%%%%%%%%%%%%%%%%%%%%%%%%%%%%%%%%%%%%%%%%%%%%%%%%%%%%
%%   Appendix
%
\appendix

\section{Proofs of theorems \protect\ref%
{theorem::dimension_of_complete_subgraphs_state_space} and \protect\ref%
{theorem::rank_of_random_graph_states}}

\label{sec::proof_of_theorems_in_unitary_equivalence} The proofs of Theorems %
\ref{theorem::dimension_of_complete_subgraphs_state_space} and \ref%
{theorem::rank_of_random_graph_states} are given below. Notice that for
Theorem \ref{theorem::dimension_of_complete_subgraphs_state_space} two
proofs are provided, the former being more intuitive, the latter being more
formal.

\begin{proof}[Proof of Theorem \protect\ref%
{theorem::dimension_of_complete_subgraphs_state_space}]
Let us denote the $n$-qubit state with a single qubit in state 1 at position
$i$ as $|1_{i}\rangle $. Then, from the definition of graph states in terms
of CZ operations (Eq. \eqref{eq.::def_graphstate}) it follows that the $n$
linearly independent (but not mutually orthogonal) states given by
\begin{equation}
|00...0\rangle -|1_{i}\rangle \text{ for every }i=1,...,n
\end{equation}%
are orthogonal to any subgraph state $|G_{i}\rangle $ of $K_{n}$. Thus it
follows that $\dim (\Sigma _{K_{n}})\leq 2^{n}-n$, where $\Sigma _{K_{n}}$
is the subspace spanned by all possible subgraph states of $K_{n}$, \emph{%
i.e.} all possible graph states with $n$ vertices. To prove that the
equality holds, we have to show that the state $|D_{n}\rangle
=|00...0\rangle +\sum_{i=1}^{n}|1_{i}\rangle $ and any state with a number
of qubits in state $1$ (excitations) larger than $2$, denoted by $|\text{exc}%
_{n}\geq 2\rangle $, can be expressed as a linear combination of graph
states. This is clearly true in the simplest case of two qubits, as $%
|D_{2}\rangle \propto |++\rangle +|\vcenter{\hbox{%
\includegraphics[width=2em]{pics/bell_2}}}\rangle $ and $|11\rangle \propto
|++\rangle -|\vcenter{\hbox{\includegraphics[width=2em]{pics/bell_2}}}%
\rangle $. In order to show that it holds for generic $n$ we proceed by
induction. Suppose that for $n$ qubits it is always possible to express both
$|D_{n}\rangle $ and the states $|\text{exc}_{n}\geq 2\rangle $ as $%
\sum_{i}\alpha _{i}|G_{i}\rangle $. Then, it can be easily proved that one
can achieve both $|D_{n+1}\rangle $ and $|\text{exc}_{n+1}\geq 2\rangle $ as
follows.

Start from the state $| \text{exc}_n\ge 2 \rangle| + \rangle$, that by
hypothesis can be written as $\sum_i\alpha_i| G_i \rangle| + \rangle$. Apply
then a CZ on the qubit $n+1$ and on one of the qubits that correspond to
state $1$ in $| \text{exc}_n\ge 2 \rangle$ so that the resulting state is $%
\text{CZ}| \text{exc}_n\ge 2 \rangle| + \rangle$. Then, take the following
linear combination of the two states $| \text{exc}_n\ge 2 \rangle| + \rangle$
and $\text{CZ}| \text{exc}_n\ge 2 \rangle| + \rangle$ such that $| \text{exc}%
_{n+1}\ge 2 \rangle\propto| \text{exc}_n\ge 2 \rangle| + \rangle \pm\text{CZ}%
| \text{exc}_n\ge 2 \rangle| + \rangle$. It can be easily seen that in this
way almost all states of $n+1$ qubits with more than two excitations $|
\text{exc}_{n+1}\ge 2 \rangle$ can be created (apart from some with two
excitations that will be discussed in the following). Actually $2(2^n-n-1)$
states of the computational basis can be derived from the procedure above.
In order to generate the $n+1$ missing states (to achieve all the $%
2^{n+1}-(n+1)$ desired states) it is sufficient to start from the state $|
D_n \rangle| + \rangle=\sum_i\alpha_i| G_i \rangle| + \rangle$ (instead of $%
| \text{exc}_n\ge 2 \rangle| + \rangle$) and apply again the same reasoning.
If we now apply all possible CZ gates between the qubit $n+1$ and the rest
we can derive the state $| D_{n+1} \rangle$. If we apply a single CZ we can
achieve the $n$ missing states with two excitations (one in the qubit $n+1$
and the other in each of the $n$ qubits).

Therefore we have proved in this way that $\dim(\Sigma_{K_{n}})\ge 2^{n}-n$
and thus the equality $\dim(\Sigma_{K_{n}})= 2^{n}-n$ follows.
\end{proof}

We now introduce the following lemma that is needed for proving Theorem \ref%
{theorem::rank_of_random_graph_states}.

\begin{lemma}[Rank of a general $\protect\rho$]
Suppose that $\rho=\sum_{i=1}^D p_i| v_i \rangle\langle v_i |$ with $p_i>0$
and $\sum_{i=1}^D p_i=1$, where the states $\{| v_i \rangle\}_{i=1,...,D}$
span the space $V$ of dimension $d\le D$ (thus the set $\{| v_i
\rangle\}_{i=1,...,D}$ generally includes linearly dependent vectors). Then
the rank of $\rho$ is
\begin{equation}
\mathrm{rank}(\rho)=d.
\end{equation}
\end{lemma}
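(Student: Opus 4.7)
The plan is to prove the two inequalities $\mathrm{rank}(\rho) \leq d$ and $\mathrm{rank}(\rho) \geq d$ separately, by identifying the range of $\rho$ with the subspace $V$.

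First, I would observe that $\mathrm{rank}(\rho) \leq d$ is immediate: since each $|v_i\rangle \in V$, the image of any vector under $\rho = \sum_i p_i |v_i\rangle\langle v_i|$ is a linear combination of the $|v_i\rangle$, hence lies in $V$. Therefore $\mathrm{range}(\rho) \subseteq V$, which gives $\mathrm{rank}(\rho) \leq \dim V = d$.

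The main content is the opposite inequality, and the clean way to obtain it is to exploit positive semi-definiteness. Since $\rho$ is Hermitian, the orthogonal complement of its range inside the ambient Hilbert space coincides with its kernel. I would then suppose there exists a vector $|w\rangle \in V$ that lies in $\ker(\rho)$, and aim to show $|w\rangle = 0$. From $\rho |w\rangle = 0$ we get
\begin{equation}
0 = \langle w | \rho | w \rangle = \sum_{i=1}^{D} p_i \, |\langle w | v_i \rangle|^2.
\end{equation}
Since every $p_i > 0$ and every term in the sum is non-negative, this forces $\langle w | v_i \rangle = 0$ for all $i = 1, \ldots, D$. But $|w\rangle$ is itself a linear combination of the $|v_i\rangle$ by assumption, so expanding $|w\rangle = \sum_i c_i |v_i\rangle$ and taking the inner product with $|w\rangle$ yields $\langle w | w \rangle = 0$, hence $|w\rangle = 0$.

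This shows that $V \cap \ker(\rho) = \{0\}$. Combined with the first step $\mathrm{range}(\rho) \subseteq V$ and the fact that $\mathrm{range}(\rho)$ and $\ker(\rho)$ together span the whole Hilbert space orthogonally (again by Hermiticity of $\rho$), the dimension count $\dim V \leq \dim \mathrm{range}(\rho) + \dim(V \cap \ker \rho) = \mathrm{rank}(\rho)$ gives $d \leq \mathrm{rank}(\rho)$, completing the proof. There is no serious obstacle here; the only subtle point is to invoke positivity of $\rho$ (equivalently, positivity of the weights $p_i$) to pass from $\rho|w\rangle = 0$ to orthogonality with every $|v_i\rangle$, which would fail for indefinite combinations of projectors.
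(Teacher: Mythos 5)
Your proof is correct and takes essentially the same route as the paper: both reduce to showing that no nonzero vector of $V$ can lie in $\ker(\rho)$, and your step $0=\langle w|\rho|w\rangle=\sum_i p_i|\langle w|v_i\rangle|^2$ is exactly the paper's coordinate computation $\alpha_{ll}=\sum_i p_i|c_l^i|^2=0$, with positivity of the $p_i$ doing the work in both cases. Your coordinate-free phrasing and the explicit dimension count at the end are marginally cleaner, but the underlying argument is identical.
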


\begin{proof}
It is straightforward to see that $\mathrm{rank}(\rho)\leq d$. In order to prove that
the rank is exactly $d$, let us reason by contradiction. Suppose that there
exists $| l \rangle$ belonging to a basis $\{| j \rangle\}_{j=1,...,d}$ of $%
V $ such that $\rho| l \rangle=0$. By rewriting $| v_i \rangle=\sum_{j=1}^d
c_j^i| j \rangle$, it follows that
\begin{equation}
\rho| l \rangle=\sum_{i=1}^D p_i \sum_{j=1}^d c_j^i c_{l}^{i*}| j
\rangle=\sum_{j=1}^d \alpha_{j l}| j \rangle=0,
\end{equation}
with $\alpha_{jl} = \sum_{i=1}^D p_i c_j^i c_{l}^{i*}$. This implies that
for every $j$, $\alpha_{j l}=0$. In particular, for $j=l$ we have
\begin{equation}
\alpha_{ll}=\sum_{i=1}^D p_i |c_l^i|^2 =0.
\end{equation}
The equation above, as $p_i>0$, implies that $c_l^i=0$ for every $i$,
contradicting the hypothesis that the space $V$ has dimension $d$.
\end{proof}

\bigskip

\begin{proof}[Proof of Theorem \protect\ref%
{theorem::rank_of_random_graph_states}]
It is sufficient to apply the above lemma and Theorem \ref%
{theorem::dimension_of_complete_subgraphs_state_space} to $\rho_{K_{n}}$.
\end{proof}

\bigskip

In the following we provide an alternative proof of Theorem
\ref{theorem::dimension_of_complete_subgraphs_state_space}, via the following
lemma concerning a useful way to expand a pure state in $\Sigma_{G}$ in
terms of single qubit states.

\begin{lemma}[Expansion of states in $\Sigma_{G}$]
\label{lemma::extension_of_subgraphs_state_space} Let $|\psi\rangle=\sum_{F%
\emph{\ spans } G}c_{F}|F\rangle$ be a state in the $G$-subgraphs state
space $\Sigma_{G}$. %, while $K_{n}$ is a complete graph with $n$ vertices
Then $|\psi\rangle$ can be decomposed with respect to the bipartition
involving the single vertex $v$ as
\begin{equation}
|\psi\rangle=\frac{1}{\sqrt{2}}( | 0
\rangle_v|\phi^{0}\rangle+|1\rangle_v|\phi^{1}\rangle),
\label{eq::lemma_expansion_of_states_in_Sigma_space}
\end{equation}
with%
\begin{align}
|\phi^{0}\rangle & =\sum_{F\emph{\ spans }G}c_{F}|f_{F}\rangle,  \notag \\
|\phi^{1}\rangle & =\sum_{F\emph{\ spans }G}\sigma_{z}^{\otimes
N_{v}(F)}c_{F}|f_{F}\rangle.
\end{align}
Here $f_{F}=F-v$ is the graph achieved by removing the vertex $v$ from $F$
(and deleting all edges connected with $v$), and $N_v(F)$ is the neighborhood
of the vertex $v$.

The state $|\phi^0\rangle$ is  state  in the $(G-v)$-subgraphs
state space $\Sigma_{(G-v)}$.
\end{lemma}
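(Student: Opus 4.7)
The plan is to use the definition of graph states as products of CZ gates applied to $|+\rangle^{\otimes n}$, split off the edges incident to the distinguished vertex $v$, and then expand $|+\rangle_v$ in the computational basis so that the CZ gates at $v$ become diagonal operators.

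Concretely, for each spanning subgraph $F$ of $G$, I would partition its edge set as $E_F = E_{F-v} \sqcup \{\{v,j\} : j \in N_v(F)\}$, where $N_v(F)$ is the neighborhood of $v$ in $F$. Since CZ gates commute, I can write
\begin{equation}
|F\rangle = \Bigl(\prod_{j \in N_v(F)} \mathrm{CZ}_{vj}\Bigr)\Bigl(\prod_{\{i,k\} \in E_{F-v}} \mathrm{CZ}_{ik}\Bigr) |+\rangle_v \otimes |+\rangle^{\otimes(n-1)}.
\end{equation}
The inner product of CZ gates acts only on $V_G\setminus\{v\}$ and produces exactly the graph state $|f_F\rangle$ associated with $f_F = F-v$ (viewed as a state on $V_{G-v}$). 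Then, using $|+\rangle_v = \tfrac{1}{\sqrt{2}}(|0\rangle_v + |1\rangle_v)$ and the identity $\mathrm{CZ}_{vj} = |0\rangle\langle 0|_v\otimes \mathbbm{1}_j + |1\rangle\langle 1|_v\otimes \sigma_z^{(j)}$, the $|0\rangle_v$ branch is left untouched while the $|1\rangle_v$ branch picks up a $\sigma_z$ on each qubit $j \in N_v(F)$. This yields the per-$F$ decomposition
\begin{equation}
|F\rangle = \tfrac{1}{\sqrt{2}}\bigl(|0\rangle_v |f_F\rangle + |1\rangle_v \, \sigma_z^{\otimes N_v(F)} |f_F\rangle\bigr).
\end{equation}

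Summing against the coefficients $c_F$ and using linearity immediately gives \eqref{eq::lemma_expansion_of_states_in_Sigma_space}, with $|\phi^0\rangle$ and $|\phi^1\rangle$ as stated. For the final claim that $|\phi^0\rangle \in \Sigma_{G-v}$, I would simply observe that each $f_F = F-v$ is a spanning subgraph of $G-v$, so $|\phi^0\rangle = \sum_F c_F |f_F\rangle$ lies in the span of $\{|H\rangle : H \text{ spans } G-v\}$. (Different $F$'s may collapse to the same $f_F$, but this only means coefficients add; it does not leave $\Sigma_{G-v}$.)

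There is no real obstacle here — the argument is essentially the standard ``expand $|+\rangle_v$ and push the CZ's through" manipulation used throughout the graph state literature. The one point to be careful about is bookkeeping: ensuring that $N_v(F)$ really is the set of neighbors of $v$ in $F$ (not in $G$), so that the $\sigma_z$ string acts only on qubits where $F$ actually had an edge to $v$; this is what makes the identification with the CZ action in the previous display exact.
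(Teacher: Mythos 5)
Your argument is correct and is essentially identical to the paper's own proof: both split off the CZ gates incident to $v$, write $|F\rangle = \prod_{j\in N_v(F)}\mathrm{CZ}_{vj}\,|+\rangle_v|F-v\rangle$, expand $|+\rangle_v$ in the computational basis to obtain the per-$F$ decomposition with the $\sigma_z$ string on the $|1\rangle_v$ branch, and conclude by linearity. Your closing remark about coefficients possibly collapsing onto the same $|f_F\rangle$ is a nice extra precision the paper leaves implicit.
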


\begin{proof}
Obviously, any spanning subgraph state $|F\rangle$ can be
generated by adding edges incident to the vertex $v$ to a suitable subgraph
state $|+\rangle_{v}|F-v\rangle$. In formulas, this fact can be expressed as
\begin{equation}
|F\rangle = \prod_{v_{i}\in N_{v}(F)}\text{CZ}_{v,v_{i}}%
|+\rangle_{v}|F-v\rangle
\end{equation}
Therefore, any spanning subgraph $|F\rangle$ can be rewritten as
\begin{equation}
|F\rangle=\frac{1}{\sqrt{2}}\left(  |0\rangle_{v}\otimes|f_{F}\rangle
+|1\rangle_{v}\otimes\sigma_{z}^{\otimes N_{v}(F)}|f_{F}\rangle\right)
,\label{eq.::proof_lemma_extension_of_sub_space_1}%
\end{equation}
with $|f_{F}\rangle=|F-v\rangle$. Now applying what was just found in the general
decomposition of $|\psi\rangle=\sum_{F\emph{\ spans } G}c_{F}|F\rangle$, Eq.
\eqref{eq::lemma_expansion_of_states_in_Sigma_space} follows. Since $f_{F}$
are subgraphs of $(G-v)$, the state $|\phi^{0}\rangle$ belongs to the space
$\Sigma_{(G-v)}$.
\end{proof}

\bigskip

\begin{proof}[Alternative proof of Theorem \protect\ref{theorem::dimension_of_complete_subgraphs_state_space}]
Let us first prove that $\dim(\Sigma_{K_{n}})\leq2^{n}-n$ by showing that
the $n$ mutually orthogonal states $\sigma_{z}^{v_{i}}|G_{n}^{\emptyset}%
\rangle$ ($i=1,\cdots,n$) are not in the space $\Sigma_{K_{n}}$.
In order to prove that this we reason by induction. For $n=2$, it is
trivial that there never exist coefficients $c_{\emptyset}$ and $c_{S_{2}}$
such that%
\begin{equation}
\sigma_{z}^{v_{i}}|G_{2}^{\emptyset}\rangle=c_{\emptyset}|G_{2}^{\emptyset
}\rangle+c_{S_{2}}|S_{2}\rangle.
\end{equation}
There $\sigma_{z}^{v_{i}}|G_{2}^{\emptyset}\rangle \notin \Sigma_{K_{2}}$.

We then assume that $\sigma_{z}^{v_{i}}|G_{n}^{\emptyset}\rangle$ is not in $%
\Sigma_{K_{n}}$, and want to prove this is the case for $n+1$ vertices too.
Suppose now by contradiction that $\sigma_{z}^{v_{i}}|G_{n+1}^{\emptyset
}\rangle\in\Sigma_{K_{n+1}}$, by employing Lemma \ref%
{lemma::extension_of_subgraphs_state_space} and without loss of generality,
we can then find for the first vertex $v_{1}$ that%
\begin{equation}
\sigma_{z}^{v_{1}}|G_{n+1}^{\emptyset}\rangle=\frac{1}{\sqrt{2}}\left(
|\phi^{0}\rangle|0\rangle_{v_{n+1}}+|\phi^{1}\rangle|1\rangle_{v_{n+1}}%
\right) ,
\end{equation}
with $|\phi^{0}\rangle\in\Sigma_{K_{n}}$. On the other hand, the left-hand side of the above equation is%
\begin{equation}
\sigma_{z}^{v_{1}}|G_{n+1}^{\emptyset}\rangle=\frac{1}{\sqrt{2}}\left(
\sigma_{z}^{v_{1}}|G_{n}^{\emptyset}\rangle|0\rangle_{v_{n+1}}+%
\sigma_{z}^{v_{1}}|G_{n}^{\emptyset}\rangle|1\rangle_{v_{n+1}}\right) ,
\end{equation}
which leads to%
\begin{equation}
\sigma_{z}^{v_{1}}|G_{n}^{\emptyset}\rangle=|\phi^{0}\rangle\in%
\Sigma_{K_{n}}.
\end{equation}
This contradicts the assumption that no solution exists for $n$ vertices.

In order to prove that $\dim(\Sigma_{K_{n}})\geq2^{n}-n$ we show that the
space spanned by $\Sigma_{K_{n}}$ and $\{\sigma_{z}^{v_{i}}|G_{n}^{%
\emptyset}\rangle \}_{i=1,\cdots n}$ is the full Hilbert space composed of $%
n $ qubits. To this end we prove that
\begin{align}
&\sigma_{z}^{V_{G}}|G_{n}^{\emptyset}\rangle=
\label{eq.::proof_of_theorem_dim_of_sub_space_2} \\
&\sum_{i=1}^{n}\left[ (-1)^{i}2|S_{V_G\backslash V_{i}}^{v_{i+1}}\rangle
|G_{V_{i}}^{\emptyset}\rangle-(-1)^{i}\left( \sigma_{z}^{v_{i+1}}+1\right)
|G_{n}^{\emptyset}\rangle\right] ,  \notag
\end{align}
where $V_i=\{v_1,\cdots,v_i\}$ is a set of $i$ vertices and $%
|S_{V_G\backslash V_{i}}^{v_{i+1}}\rangle$ is a star graph state on vertices
$V_G\backslash V_{i}$ and $v_{i+1}$ as the central vertex. According to
Lemma \ref{lemma::extension_of_subgraphs_state_space} we can write
\begin{equation}
|S_{n}^{v_{1}}\rangle=\frac{1}{\sqrt 2}( |0\rangle_{v_1}\otimes
|G_{n-1}^{\emptyset}\rangle+|1\rangle_{v_1} \otimes
\sigma_{z}^{V_{G}\backslash v_{1}}|G_{n-1}^{\emptyset}\rangle),
\end{equation}
and, since $|0\rangle_{v_1}|G_{n-1}^{\emptyset}\rangle =\frac{1}{\sqrt 2}
(\sigma_{z}^{v_{1}}|G_{n}^{\emptyset}\rangle +|G_{n}^{\emptyset}\rangle)$,
we can write
\begin{equation}
|1\rangle_{v_1}\otimes\sigma_{z}^{V_{G}\backslash
v_{1}}|G_{n-1}^{\emptyset}\rangle =\sqrt{2}|S_{n}^{v_1}\rangle-\frac{1}{%
\sqrt 2}(
\sigma_{z}^{v_{1}}|G_{n}^{\emptyset}\rangle+|G_{n}^{\emptyset}\rangle )
\label{eq.::proof_of_theorem_dim_of_sub_space_3}
\end{equation}
It is also easy to see that
\begin{align}
\sigma_{z}^{V_{G}}&|G_{n}^{\emptyset}\rangle \\
&=|+\rangle_{v_1}\otimes\sigma_{z}^{V_{G}\backslash
v_{1}}|G_{n-1}^{\emptyset}\rangle-\sqrt{2}|1
\rangle_{v_1}\otimes\sigma_{z}^{V_{G}\backslash
v_{1}}|G_{n-1}^{\emptyset}\rangle,  \notag
\end{align}
and, by employing Eq. (\ref{eq.::proof_of_theorem_dim_of_sub_space_3}), we
finally arrive at the following expression%
\begin{equation}
\sigma_{z}^{V_{G}}|G_{n}^{\emptyset}\rangle=-2|S_{n}^{v_{1}}\rangle+\left(
\sigma_{z}^{v_{1}}+1\right)
|G_{n}^{\emptyset}\rangle-\sigma_{z}^{V_{G}\backslash
v_{1}}|G_{n}^{\emptyset}\rangle.
\label{eq.::proof_of_theorem_dim_of_sub_space_3.5}
\end{equation}
Hence, by using Eq. (\ref{eq.::proof_of_theorem_dim_of_sub_space_3.5})
recursively we can achieve Eq. (\ref%
{eq.::proof_of_theorem_dim_of_sub_space_2}). Therefore, for any subset of
vertices $V\subseteq V_{G}$, we have that the state $\sigma_{z}^{V}|G_{n}^{%
\emptyset}\rangle$ can be expressed as a superposition of vectors in the
subspaces $\Sigma_{K_{n}}$ and $\{\sigma_{z}^{v_{i}}|G_{n}^{\emptyset}%
\rangle\}_{i=1,\cdots n}$. As the set of all vectors $\sigma_{z}^{V}|G_{n}^{%
\emptyset}\rangle$ forms the Hadamard basis, this finally proves that
\begin{equation}
\dim(\Sigma_{K_{n}})\geq 2^{n}-n.
\label{eq.::proof_of_theorem_dim_of_sub_space_4}
\end{equation}
\end{proof}

\section{Randomization overlap of some special RG states}

\label{sec::randomization_overlap_of_some_special_RG_states}

In this appendix we derive an explicit analytical result for the
randomization overlap of random star states $\rho_{S_n}^p$ and random 1D cluster states $\rho_{L_n}^p$.

\begin{solution}
Let $S_{n}$ be an $n$-vertex star graph; its randomization overlap is then
\begin{equation}
L(\rho^{p}_{S_n}) = \frac{1}{4}+\frac{3}{4}p^{n-1}.
\end{equation}
\end{solution}

\begin{proof}
The scalar product of $|S_{n}\rangle$ and any of its spanning subgraph
states $|F\rangle$ always equals $\frac{1}{4}$ (apart from the case when $|
F \rangle=| S_n \rangle$). therefore
\begin{align}
L(\rho^{p}_{S_n}) &=\frac{1}{4}\sum_{k=1}^{n-1}\binom{n-1}{k}%
p^{n-1-k}(1-p)^{k}+p^{n-1}  \notag \\
& =\frac{1}{4}+\frac{3}{4}p^{n-1}.
\end{align}
\end{proof}

\begin{figure}[t!]
\centering
\subfloat[Examples of linear clusters in $\mathcal{F}_\text{even}(n)$.]{
\includegraphics[width=0.8\linewidth]{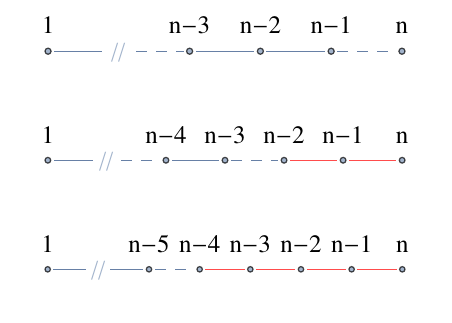}
\label{fig::F_even_of_1d_cluster}
}
\newline
\subfloat[Examples of linear clusters in $\mathcal{F}_\text{odd}(n)$.]{
\includegraphics[width=0.8\linewidth]{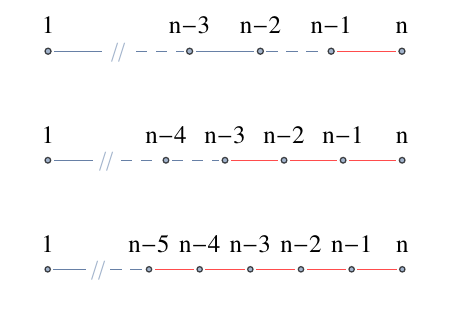}
\label{fig::F_odd_of_1d_cluster}
}
\caption{(Color online) Examples of $\mathcal{F}_\text{even}(n)$ and $%
\mathcal{F}_\text{odd}(n)$.}
\label{fig::F_even_and_F_odd_of_1d_cluster}
\end{figure}

\begin{solution}
Let $L_{n}$ be a linear cluster graph on $n$ vertices, its randomization
overlap then reads
\begin{align}
L(\rho^{p}_{L_{n}}) = &\frac{1}{\sqrt{\lambda_{p}}}\left( 1-\frac{p}{2}+%
\frac{\sqrt{\lambda_{p}}}{2}\right) \left( \frac{p}{2}+\frac{\sqrt {%
\lambda_{p}}}{2}\right) ^{n}  \label{eq.::fidelity_of_1D_RC_states} \\
& -\frac{1}{\sqrt{\lambda_{p}}}\left( 1+\frac{p}{2}+\frac{\sqrt{\lambda_{p}}%
}{2}\right) \left( \frac{p}{2}-\frac{\sqrt{\lambda_{p}}}{2}\right) ^{n},
\notag
\end{align}
with $\lambda_{p}=1-p+p^{2}$.
\end{solution}

\begin{proof}
Let us define $\mathcal{F}_{\text{even}}^{(n)}$ ($\mathcal{F}_{\text{odd}%
}^{(n)}$) as the set of spanning subgraphs of the cluster $L_{n}$ that have
paths with even (odd) number of edges connected to the last vertex $v_{n}$
(see Fig. \ref{fig::F_even_and_F_odd_of_1d_cluster} for a pictorial
explanation). The randomization overlap can thus be rewritten as
\begin{equation}
L(\rho^{p}_{L_n}) =f_{\text{even}}^{(n)}(p)+f_{\text{odd}}^{(n)}(p).
\end{equation}
where $f_{\text{even}}^{(n)}(p):=\Tr \left[ |L_{n}\rangle\langle L_n|\sum
_{F\in\mathcal{F}_{\text{even}}^{(n)}}p_{F}|F\rangle\langle F|\right] $, and
$f_{\text{odd}}^{(n)}(p):=\Tr \left[ L_n\rangle\langle L_n |\sum_{F\in%
\mathcal{F}_{\text{odd}}^{(n)}}p_{F}|F\rangle\langle F|\right] $. From the
results in Table \ref{table::results_of_scalar_product_of_example_graphs},
it is then not difficult to notice that the following recursive relations
hold
\begin{align}
f_{\text{odd}}^{(n+1)}(p) & =\frac{1-p}{4}f_{\text{even}}^{(n)}(p), \\
f_{\text{even}}^{(n+1)}(p) & =f_{\text{odd}}^{(n)}(p)+pf_{\text{even}%
}^{(n)}(p).
\end{align}
Imposing the initial conditions $f_{\text{even}}^{(2)}=p$ and $f_{\text{odd}%
}^{(2)}=(1-p)/4$, the above relations can be solved, leading to the
randomization overlap \eqref{eq.::fidelity_of_1D_RC_states}.
\end{proof}

\section{Approximation of GME witness}
\label{sec::proof_of_approximation_of_GME_witness}

%%%%%%%%%%%%%%%%% figures of isomorph_group_example %%%%%%%%%
\begin{figure}[t!]
\begin{minipage}{\linewidth}
\raggedright
\includegraphics[width=0.6\linewidth]{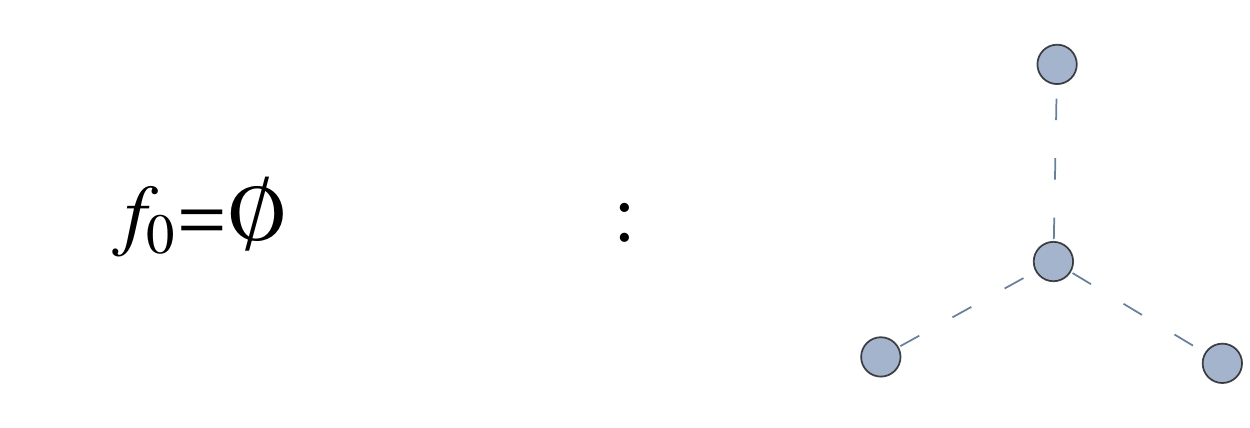}
\newline
(a) The single graph isomorphic to the empty graph
\label{fig::isomorph_group_f0_example}
\end{minipage}
\begin{minipage}{\linewidth}
\raggedright
\includegraphics[width=0.9\linewidth]{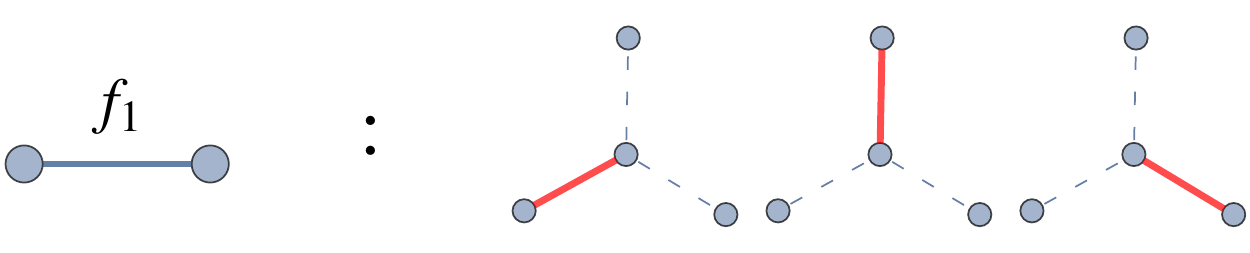}
\newline
(b) Graphs isomorphic to the $2$-vertex graph $S_2$
\label{fig::isomorph_group_f1_example}
\end{minipage}
\begin{minipage}{\linewidth}
\raggedright
\includegraphics[width=0.9\linewidth]{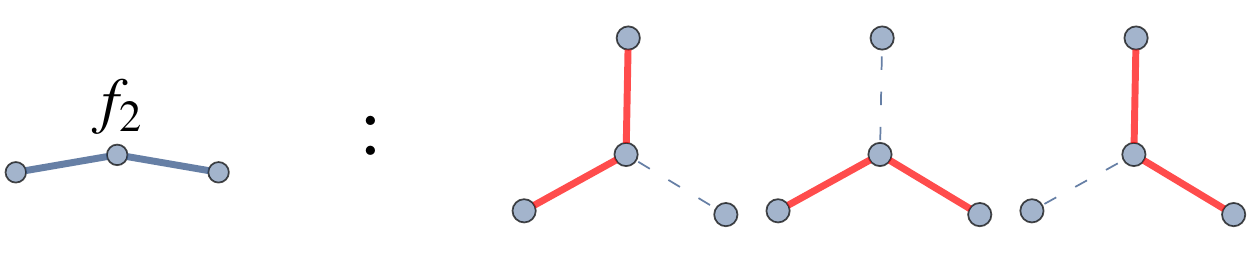}
\newline
(c) Graphs isomorphic to the star graph $S_3$ with $3$ vertices
\label{fig::somorph_group_f2_example}
\end{minipage}
\begin{minipage}{\linewidth}
\raggedright
\includegraphics[width=0.6\linewidth]{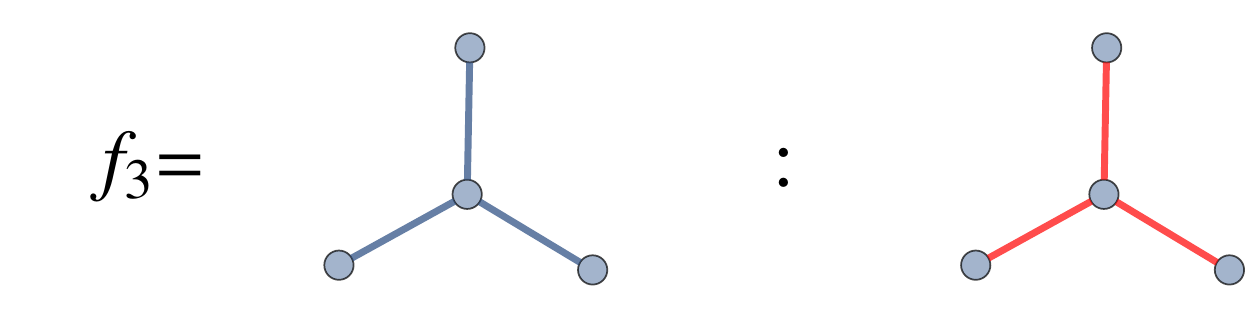}
\newline
(d) The graph isomorphic to the 4-vertex star graph $S_4$
\label{fig::isomorph_group_f3_example}
\end{minipage}
\caption{(Color online) Four different isomorphic classes of star graphs on $4$ vertices}
\label{fig::isomorph_group_example}
\end{figure}
%%%%%%%%%%%%%%%%%%%%%%%%%%%%%%%%%%%%%%%%%%%%%%%%%%%%%%%%%%%%%%%%%%

Before proving Theorem \ref{theorem::approximated_GME_witness}, it is convenient to first make the
following observation.

The randomization overlap can be easily rewritten in terms of the symmetric
difference $\tilde{F}:=F\Delta G$ as
\begin{align}
& \frac{1}{p^{|E_{G}|}}L(\rho _{G}^{p}) \\
=& \sum_{F\text{ spans }G}\left( \frac{1-p}{p}\right) ^{|E_{F\Delta G}|}\Tr%
\left[ |G^{\emptyset }\rangle \langle G^{\emptyset }|F\Delta G\rangle
\langle F\Delta G|\right] ,  \notag \\
=& \sum_{\tilde{F}\text{ spans }G}\underset{=:c_{G}^{p}(\tilde{F})}{%
\underbrace{\left( \frac{1-p}{p}\right) ^{|E_{\tilde{F}}|}\Tr\left[
|G^{\emptyset }\rangle \langle G^{\emptyset }|\tilde{F}\rangle \langle
\tilde{F}|\right] }}.  \label{eq.::contribution_in_randomization_overlap}
\end{align}%
Eq. (\ref{eq.::contribution_in_randomization_overlap}) makes it clear that
the randomization overlap can be recast as a sum of terms where any
contribution $c_{G}^{p}(\tilde{F})$ depends on both the number of edges $|E_{%
\tilde{F}}|$ and the scalar product of $|\langle G^{\emptyset }|\tilde{F}%
\rangle |$. It is clear that two isomorphic graphs $\tilde{F}%
_{1},\tilde{F}_{2}$, i.e., graphs that can be mapped into each other
by just relabelling the vertices, have the same contribution.
%$c_{G}^{p}(\tilde{F}_{1})=c_{G}^{p}(\tilde{F}_{2})$.
Therefore, it is convenient to divide the whole set of subgraphs $\tilde{F}$
into different graph-isomorphic classes (as an example, Fig. \ref%
{fig::isomorph_group_example} reports the isomorphic classes of subgraphs of
the four-vertex star graph). For values of the randomness parameter $p\geq 1/2
$, the isomorphic classes with fewer edges contribute the most to the
randomization overlap. Therefore, whenever $p\geq 1/2$ holds, it make sense
to approximate the randomization overlap as
\begin{equation}
L(\rho _{G}^{p})\geq p^{|E_{G}|}\sum_{\tilde{f}\in \mathcal{F}^{(\leq 2)}}|%
\tilde{f}|c_{G}^{p}(\tilde{f}),
\label{eq.::isomorphic_approximation_of_randomization_overlap}
\end{equation}%
where we have defined $\mathcal{F}^{(\leq 2)}:=\{\tilde{f}:|E_{\tilde{f}%
}|\leq 2\}$, i.e., any $\tilde{f}$ represents an isomorphic class of
graphs with a number of edges smaller than $2$. Notice that, since any $%
\tilde{F}\in \tilde{f}$ contributes equally, $c_{G}^{p}(\tilde{f})$ can be
regarded as $c_{G}^{p}(\tilde{F})$ in Eq. %
\eqref{eq.::contribution_in_randomization_overlap}, where $\tilde{F}$
represents any element of the class $\tilde{f}$.

%%%%%%%%%%%%%%%%% figures of isomorph_groups with edges less than 3 %%%%%%%%%

\begin{figure}[t!]
\centering
%\bigskip\bigskip
\includegraphics[width=.85\linewidth]{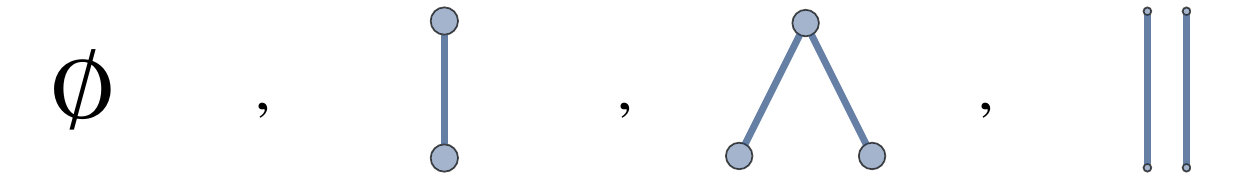}
\caption{(Color online) All the isomorphic classes $\mathcal{F}^{(\leq2)}$
with a number of edges smaller equal than $2$.}
\label{fig::isomorph_group_graphs_f012}
\end{figure}
%%%%%%%%%%%%%%%%%%%%%%%%%%%%%%%%%%%%%%%%%%%%%%%%%%%%%%%%%%%%%%%%%%%%%%%

\bigskip

We are now ready to prove Proposition \ref%
{prop::monotonicity_of_approximated_witness}, which states that the $l$%
-level approximated randomization overlap $L_{\mathcal{F}^{(\leq
l)}}(\rho_{G}^{p})$ is monotonically increasing for any $l\leq |E_G/2|$,
whenever $p\geq1/2$.

\bigskip

\begin{proof}[Proof of Proposition \protect\ref%
{prop::monotonicity_of_approximated_witness}]
Let
\begin{equation}
\lambda_{k}:=\frac{1}{\binom{|E_{G}|}{k}}\sum_{F\text{ s.t. }|E_{F}|=k}\Tr%
(|F\rangle\langle F|G\rangle\langle G|)
\end{equation}
be the average overlap $\Tr(|F\rangle\langle F|G\rangle\langle G|)$ of all
subgraphs $F$ with a fixed number of edges $k$. Since the overlap $\Tr%
(|F\rangle\langle F|G\rangle\langle G|)\leq1$, we have $\lambda_{k}\leq1$,
and thus the $l$-level approximated randomization overlap becomes%
\begin{equation}
L_{\mathcal{\ F}^{(\leq l)}}(\rho_{G}^{p})=\sum_{k=1}^{l}\lambda_{k}\binom{%
|E_{G}|}{k}p^{k}(1-p)^{|E_{G}|-k}.
\end{equation}
Now we order the indices $k$'s as follows. First we group together the
indices $k$'s that lead to the same value of the coefficients $\lambda_k$,
then we order all these sets for increasing values of the coefficients $%
\lambda_k$. In the end we get the following partition: $\{k_{1}^{(1)},%
\cdots,k_{i_{1}}^{(1)}\}$, $\{k_{1}^{(2)},\cdots,k_{i_{2}}^{(2)}\}$ ,
$\cdots$, $\{k_{1}^{(j)},\cdots,k_{i_{j}}^{(j)}\}$ where $\lambda_{k_{1}^{(1)}}=\cdots=%
\lambda_{k_{i_{1}}^{(1)}}>\lambda_{k_{1}^{(2)}}=\cdots=%
\lambda_{k_{i_{2}}^{(2)}}> \lambda_{k_{1}^{(j)}} =\cdots =
\lambda_{k_{j}^{(j)}}$. For the sake of simplicity we define $%
\lambda^{(j)}:=\lambda_{k_{1}^{(j)}}$ and $\kappa^{(j)}:=\{k_{1}^{(j)},%
\cdots,k_{i_{j}}^{(j)}\}$.

Furthermore, we need the help of the following function%
\begin{equation}
f(\kappa) =\sum_{F\text{ s.t. }|E_{F\Delta G}|\not \in \kappa\text{ }\and%
\text{ }|E_{F\Delta G}|\leq l}p_{F},
\end{equation}
which represents the probability of finding a subgraph $F$ having $k$ edges
different from $G$, where $k\leq l$ and it is not contained in $\kappa$. The
above formula can be conveniently rewritten as
\begin{align}
f(\kappa) =&\sum_{k=1}^{l}\binom{|E_{G}|}{k}p^{k}(1-p)^{|E_{G}|-k} \\
& -\sum_{k\in\kappa}\binom{|E_{G}|}{k}p^{j}(1-p)^{|E_{G}|-k}  \notag \\
=&1-\sum_{k\not \in \kappa}\binom{|E_{G}|}{k}p^{k}(1-p)^{|E_{G}|-k} \\
& -\sum_{k=l+1}^{|E_{G}|}\binom{|E_{G}|}{k}p^{k}(1-p)^{|E_{G}|-k}.  \notag
\end{align}
This function turns out to be monotonically increasing for randomness $p\ge
1/2$ and $l\leq |E_G|/2$. The $l$-level approximated
randomization overlap can be expressed in terms of functions $f(\kappa)$ as
\begin{align}
L_{\mathcal{F}^{(\leq l)}}(\rho_{G}^{p})
=&\lambda^{(1)}f(\emptyset)+(\lambda^{(2)}-\lambda^{(1)})f(\kappa^{(1)})
\notag \\
& +(\lambda^{(3)}-\lambda^{(2)})f(\kappa^{(1)}\cup\kappa^{(2)})  \notag \\
& +\cdots \\
& +(\lambda^{(j)}-\lambda^{(j-1)})f(\kappa^{(1)}\cup\cdots\cup\kappa
^{(j-1)})  \notag \\
& +(1-\lambda^{(j)})f(\kappa^{(1)}\cup\cdots\cup\kappa^{(j)}).  \notag
\end{align}
Since $(\lambda^{(i+1)}-\lambda^{(i)})>0$ and every $f(\kappa^{(1)}\cup%
\cdots\cup\kappa^{(i)})$ is monotonically increasing for randomness $p\geq
1/2$ and $l\leq |E_G|/2$, the $l$-level approximated overlap $%
L_{F^{(\leq l)}}(\rho_{G}^{p})$ is monotonically increasing whenever $p\geq
1/2$ and $l\leq |E_G|/2$.
\end{proof}

{\renewcommand{\arraystretch}{1.5}
\begin{table}[t!]
\begin{tabular}{|c|c|c|c|c|}
\hline
$\tilde{f}$ & $\emptyset$ &
\begin{tabular}{@{}c}
\\
\includegraphics[width=0.15\linewidth]{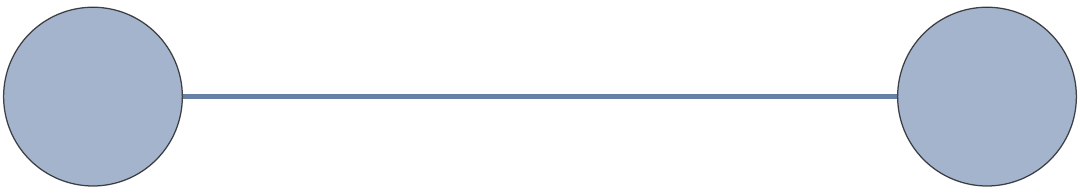} \\
$\text{Bell}_2$, $S_2$%
\end{tabular}
&
\begin{tabular}{@{}c}
\\
\includegraphics[width=0.15\linewidth]{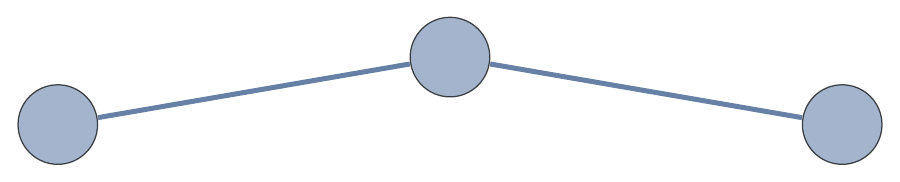} \\
$S_3$%
\end{tabular}
&
\begin{tabular}{@{}c}
\\
\includegraphics[width=0.15\linewidth]{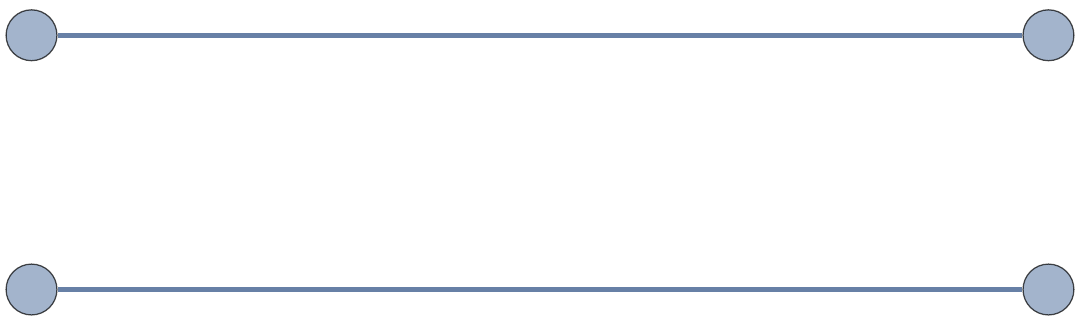} \\
$S_2 \otimes S_2$%
\end{tabular}
\\ \hline
$c_{G}^{p}(\tilde{f})$ & $1$ & $\frac{1}{4}(\frac{1-p}{p})$ & $\frac{1}{4}(%
\frac{1-p}{p})^{2}$ & $\frac{1}{16}(\frac{1-p}{p})^{2}$ \\ \hline
$|\tilde{f}|$ & $1$ & $|E_{G}|=\frac{1}{2}\sum_{v\in V}d_{v}$ &
$\sum_{v\in V}\genfrac{(}{)}{0pt}{}{d_{v}}{2}$ &
$\genfrac{(}{)}{0pt}{}{E_{G}}{2}-\sum_{v\in V}\genfrac{(}{)}{0pt}{}{d_{v}}{2}$
\\ \hline
\end{tabular}%
\caption{The cardinalities of isomorphic classes and their single element
contributions: $d_{v}$ is the vertex degree of vertex $v$ in $G$, and $E_G$
is the set of edges of $G$.}
\label{table::contributions_of_selected_isomorph_groups}
\end{table}
}

\bigskip

Finally we prove Theorem \ref{theorem::approximated_GME_witness} concerning
a possible approximation of the GME witness.

\bigskip

\begin{proof}[Proof of Theorem \protect\ref%
{theorem::approximated_GME_witness}]
The main idea of the approximation is to neglect the subgraphs of $G$ that
contain more than two edges and thus to calculate only the contribution of
the isomorphic classes of subgraphs with at most two edges (see Fig. \ref%
{fig::isomorph_group_graphs_f012}). The approximated randomization overlap
can thus be expressed as in Eq. \ref%
{eq.::isomorphic_approximation_of_randomization_overlap} and, with the help
of the results listed in Table \ref%
{table::contributions_of_selected_isomorph_groups}, can be explicitly
rewritten as
\begin{align}
L_{\mathcal{F}^{(\leq 2)}}& \left( \rho _{G}^{p}\right)  \\
& =p^{\left\vert E_{G}\right\vert }+\frac{1}{4}\left( 1-p\right)
p^{\left\vert E_{G}\right\vert -1}\left\vert E_{G}\right\vert   \notag \\
& +\frac{1}{2^{4}}\left( 1-p\right) ^{2}p^{\left\vert E_{G}\right\vert -2}%
\left[ \binom{\left\vert E_{G}\right\vert }{2}+3\sum_{v\in V_{G}}\binom{d_{v}%
}{2}\right] ,  \notag
\end{align}%
where $d_{v}$ is the degree of any vertex $v$. Since the contribution of
subgraphs with number of edges greater than $2$ is always non-negative, it
follows that $L\left( \rho _{G}^{p}\right) \geq L_{\mathcal{F}^{(\leq
2)}}\left( \rho _{G}^{p}\right) $. Therefore, we have that
\begin{equation}
I_{\mathcal{F}^{(\leq 2)}}(\rho _{G}^{p}):=1/2-L_{\mathcal{F}^{(\leq
2)}}\left( \rho _{G}^{p}\right)
\end{equation}%
is also a GME witness, in the sense that a negative value indicates the
presence of GME. Notice furthermore that $I_{\mathcal{F}^{(\leq 2)}}(\rho
_{G}^{p})\leq I_{w}(\rho _{G}^{p})$, i.e., the approximated witness is
obviously weaker than the complete one defined as $I_{w}(\rho _{G}^{p})=%
\Tr[W_G\rho_G^p]$ where $W_{G}$ is defined in Eq.\eqref{eq.::projector_witness}.

The last point of the theorem says that $p_{\mathcal{F}}\ge p_{w}$, where $%
p_{\mathcal{F}}$ ($p_{w}$) represents the threshold probability for $I_{%
\mathcal{F}^{(\leq2)}}(\rho_{G}^{p})$ ($I_{w}(\rho_{G}^{p})$), and thus it
is an upper bound for the critical probability $p_c$ also. In order to see
this, let us consider the following inequality%
\begin{align}
I_{\mathcal{F}^{(\leq2)}}(\rho_{G}^{p_{w}}) & =I_{w}(\rho_{G}^{p_{w}})+L_{%
\mathcal{F}^{(>2)}}(\rho_{G}^{p_{w}})  \notag \\
& =L_{\mathcal{F}^{(>2)}}(\rho_{G}^{p_{w}}) \\
& \geq0=I_{\mathcal{F}^{(\leq2)}}(\rho_{G}^{p_{\mathcal{F}}}),  \notag
\end{align}
where $L_{\mathcal{F}^{(>2)}}(\rho_{G}^{p_{w}})$ represents the scalar
product of $|G\rangle$ with all its subgraphs with a number of edges greater
than $2$.

Together with the fact that $I_{\mathcal{F}^{(\leq2)}}(\rho_{G}^{p})$ is a
monotonically decreasing function of $p$ for $p\ge 1/2$ (Proposition \ref%
{prop::monotonicity_of_approximated_witness}), it follows that $p_{\mathcal{F%
}}$ is always an upper bound for $p_{w}$, whenever $p\ge 1/2$. As a last
note, notice that the following chain of inequalities thus holds $p_{%
\mathcal{F}} \ge p_{w} \ge p_c$.
\end{proof}

%
%%   End of section "Appendix"
%%%%%%%%%%%%%%%%%%%%%%%%%%%%%%%%%%%%%%%%%%%%%%%%%%%%%%%%%%%%%%%%%%%%%%%%%%%%%%%%%%%%%%%%%%%%%%%%%%%%%%%%%%%%%%1

%%%%%%%%%%%%%%%%%%%%%%%%%%%%%%%%%%%%%%%%%%%%%%%%%%%%%%%%%%%%%%%%%%%%%%%%%%%%%%%%%%%%%%%%%%%%%%%%%%%%%%%%%%%%%%
%%   Bibliography
%

%using bibitem
%\begin{thebibliography}{99}
%\end{thebibliography}
%%
%using bibtex
%REVTeX 4.1 calls in a default BibTeX Style (.bst) file for each supported journal. The .bst files support displaying the titles of cited journal articles in the bibliography. To display the titles, simply use the "longbibliography" class option. Consult the REVTeX 4.1 documentation for more information.
\bibliographystyle{unsrt}

\bibliography{Random_Graph_States_and_Their_Entanglement_Properties}

\begin{thebibliography}{10}

\bibitem{RaussBrie2001-5}
Robert Raussendorf and Hans~J. Briegel.
\newblock A one-way quantum computer.
\newblock {\em Phys. Rev. Lett.}, 86:5188--5191, 2001.

\bibitem{BriegelRaussendorf2001-01}
Hans~J. Briegel and Robert Raussendorf.
\newblock Persistent entanglement in arrays of interacting particles.
\newblock {\em Phys. Rev. Lett.}, 86:910--913, 2001.

\bibitem{Cabello2011-04}
Ad\'an Cabello, Lars~Eirik Danielsen, Antonio~J. L\'opez-Tarrida, and Jos\'e~R.
  Portillo.
\newblock Optimal preparation of graph states.
\newblock {\em Phys. Rev. A}, 83:042314, 2011.

\bibitem{lim2005repeat}
Yuan~Liang Lim, Almut Beige, and Leong~Chuan Kwek.
\newblock Repeat-until-success linear optics distributed quantum computing.
\newblock {\em Physical Review Letters}, 95(3):030505, 2005.

\bibitem{lim2006repeat}
Yuan~Liang Lim, Sean~D Barrett, Almut Beige, Pieter Kok, and Leong~Chuan Kwek.
\newblock Repeat-until-success quantum computing using stationary and flying
  qubits.
\newblock {\em Physical Review A}, 73(1):012304, 2006.

\bibitem{beige2007repeat}
A~Beige, YL~Lim, and LC~Kwek.
\newblock A repeat-until-success quantum computing scheme.
\newblock {\em New Journal of Physics}, 9(6):197, 2007.

\bibitem{AcinBLSanpera2001-07}
A.~Ac\'{i}n, Dagmar. Bru\ss, M.~Lewenstein, and A.~Sanpera.
\newblock Classification of mixed three-qubit states.
\newblock {\em Phys. Rev. Lett.}, 87:040401, 2001.

\bibitem{Bourennane2004-02}
Mohamed Bourennane, Manfred Eibl, Christian Kurtsiefer, Sascha Gaertner, Harald
  Weinfurter, Otfried G\"uhne, Philipp Hyllus, Dagmar Bru\ss{}, Maciej
  Lewenstein, and Anna Sanpera.
\newblock Experimental detection of multipartite entanglement using witness
  operators.
\newblock {\em Phys. Rev. Lett.}, 92:087902, 2004.

\bibitem{GuhneHBELMSanpera2002-12}
O.~G\"uhne, P.~Hyllus, D.~Bru\ss{}, A.~Ekert, M.~Lewenstein, C.~Macchiavello,
  and A.~Sanpera.
\newblock Detection of entanglement with few local measurements.
\newblock {\em Phys. Rev. A}, 66:062305, 2002.

\bibitem{TothGuhne2005-02}
G\'eza T\'oth and Otfried G\"uhne.
\newblock Detecting genuine multipartite entanglement with two local
  measurements.
\newblock {\em Phys. Rev. Lett.}, 94:060501, 2005.

\bibitem{ErdosRenyi1959}
Paul Erd\H{o}s and Alfr\'ed R\'enyi.
\newblock On random graphs.
\newblock {\em Publ. Math. Debrecen}, 6:290--297, 1959.

\bibitem{janson2000_randomgraphs}
Svante Janson, Tomasz Luczak, and Andrzej Rucinski.
\newblock {\em Random graphs}.
\newblock John Wiley \& Sons, 2000.

\bibitem{Newman2003}
M.~E.~J. Newman.
\newblock The structure and function of complex networks.
\newblock {\em SIAM Rev}, 45:167, 2003.

\bibitem{BoccalettiLMCHwange2006}
S.~Boccaletti, V.~Latora, Y.~Moreno, M.~Chavez, and D.-U. Hwanga.
\newblock Complex networks: Structure and dynamics.
\newblock {\em Phys Rep}, 424:175, 2006.

\bibitem{Diestel_GraphTheory}
Reinhard Diestel.
\newblock {\em Graph Theory}, volume 173 of {\em Graduate Texts in
  Mathematics}.
\newblock Springer-Verlag, 2005 edition, 2005.

\bibitem{HeinEisertBriegel2004-06}
M.~Hein, J.~Eisert, and H.~J. Briegel.
\newblock Multiparty entanglement in graph states.
\newblock {\em Phys. Rev. A}, 69:062311, 2004.

\bibitem{hein2006entanglement}
Marc Hein, Wolfgang D{\"u}r, Jens Eisert, Robert Raussendorf, M~Nest, and H-J
  Briegel.
\newblock Entanglement in graph states and its applications.
\newblock {\em arXiv preprint quant-ph/0602096}, 2006.

\bibitem{Grimmett1999-percolation}
Geoffrey Grimmett.
\newblock {\em Percolation}.
\newblock Springer, second edition, 1999.

\bibitem{CollinsNechitaZyczkowski2010}
Beno\^{i}t Collins, Ion Nechita, and Karol \.{Z}yczkowski.
\newblock Random graph states, maximal flow and fuss-catalan distributions.
\newblock {\em Journal of Physics A: Mathematical and Theoretical},
  43(27):275303, 2010.

\bibitem{braunstein2006laplacian}
Samuel~L Braunstein, Sibasish Ghosh, and Simone Severini.
\newblock The laplacian of a graph as a density matrix: a basic combinatorial
  approach to separability of mixed states.
\newblock {\em Annals of Combinatorics}, 10(3):291--317, 2006.

\bibitem{Horodecki1996-11}
Micha\l{} Horodecki, Pawe\l{} Horodecki, and Ryszard Horodecki.
\newblock Separability of mixed states: necessary and sufficient conditions.
\newblock {\em Physics Letters A}, 223(1–2):1 -- 8, 1996.

\bibitem{VidalWerner2002-02}
G.~Vidal and R.~F. Werner.
\newblock Computable measure of entanglement.
\newblock {\em Phys. Rev. A}, 65:032314, 2002.

\bibitem{JungnitschMGuhne2011-05}
Bastian Jungnitsch, Tobias Moroder, and Otfried G\"uhne.
\newblock Taming multiparticle entanglement.
\newblock {\em Phys. Rev. Lett.}, 106:190502, 2011.

\bibitem{NovoMoroderGuhne2013-07}
Leonardo Novo, Tobias Moroder, and Otfried G\"uhne.
\newblock Genuine multiparticle entanglement of permutationally invariant
  states.
\newblock {\em Phys. Rev. A}, 88:012305, 2013.

\bibitem{ppt_mixer_online_program}
http://www.mathworks.com/matlabcentral/fileexchange/ 30968.

\bibitem{brandao2005quantifying}
Fernando~GSL Brandao.
\newblock Quantifying entanglement with witness operators.
\newblock {\em Physical Review A}, 72(2):022310, 2005.

\bibitem{XchainsRef}
Jun-Yi Wu, Hermann Kampermann, and Dagmar Bru\ss{}.
\newblock X-chains of graph states.
\newblock {\em In preparation}.

\bibitem{NiekampKGuehne2012}
S\"{o}nke Niekamp, Matthias Kleinmann, and Otfried G\"{u}hne.
\newblock Entropic uncertainty relations and the stabilizer formalism.
\newblock {\em Journal of Mathematical Physics}, 53(1):012202, 2012.

\bibitem{bell1964einstein}
John~Stewart Bell.
\newblock On the {E}instein-{P}odolsky-{R}osen paradox, "{P}hysics 1, 195
  (1964)".
\newblock {\em Physics Rev. Mod. Phys}, 38(3):447, 1966.

\bibitem{mermin1990extreme}
N~David Mermin.
\newblock Extreme quantum entanglement in a superposition of macroscopically
  distinct states.
\newblock {\em Physical Review Letters}, 65:1838--1840, 1990.

\bibitem{greenberger1990bell}
Daniel~M Greenberger, Michael~A Horne, Abner Shimony, and Anton Zeilinger.
\newblock Bell’s theorem without inequalities.
\newblock {\em American Journal of Physics}, 58:1131, 1990.

\bibitem{guhne2005bell}
Otfried G{\"u}hne, G{\'e}za T{\'o}th, Philipp Hyllus, and Hans~J Briegel.
\newblock Bell inequalities for graph states.
\newblock {\em Physical Review Letters}, 95(12):120405, 2005.

\bibitem{HyllusGBLewenstein2005-07}
Philipp Hyllus, Otfried G\"uhne, Dagmar Bru\ss{}, and Maciej Lewenstein.
\newblock Relations between entanglement witnesses and {Bell} inequalities.
\newblock {\em Phys. Rev. A}, 72:012321, 2005.

\bibitem{AliGuhne2014-02}
Mazhar Ali and Otfried G\"uhne.
\newblock Robustness of multiparticle entanglement: specific entanglement
  classes and random states.
\newblock {\em Journal of Physics B: Atomic, Molecular and Optical Physics},
  47(5):055503, 2014.

\end{thebibliography}
%
%%   End of bibliography
%%%%%%%%%%%%%%%%%%%%%%%%%%%%%%%%%%%%%%%%%%%%%%%%%%%%%%%%%%%%%%%%%%%%%%%%%%%%%%%%%%%%%%%%%%%%%%%%%%%%%%%%%%%%%%

\end{document}